	\newtheorem{thm}{Theorem}[section]
	\newtheorem{lem}[thm]{Lemma}
	\newtheorem{prop}[thm]{Proposition}
	\newtheorem{cor}[thm]{Corollary}
	\theoremstyle{definition}
	\newtheorem{rmk}[thm]{Remark}
    \newtheorem{main}{Main result}
\DeclareMathOperator{\R}{\mathbb{R}}
\DeclareMathOperator{\C}{\mathbb{C}}
\DeclareMathOperator{\scS}{\mathsf{S}}
\newcommand{\mG}{\mathcal{G}}
\newcommand{\mL}{\mathcal{L}}
\newcommand{\dd}{\mathrm{d}}
\newcommand{\id}{1\!\!1}
\newcommand{\err}{\mathcal{E}}
\def\Xint#1{\mathchoice
   {\XXint\displaystyle\textstyle{#1}}%
   {\XXint\textstyle\scriptstyle{#1}}%
   {\XXint\scriptstyle\scriptscriptstyle{#1}}%
   {\XXint\scriptscriptstyle\scriptscriptstyle{#1}}%
   \!\int}
\def\XXint#1#2#3{{\setbox0=\hbox{$#1{#2#3}{\int}$}
     \vcenter{\hbox{$#2#3$}}\kern-.5\wd0}}
\def\dashint{\Xint-}
\definecolor{lgr}{gray}{0.93}
\begin{document}
\bibliographystyle{ourstyle}
\title{Dynamical quantum phase transitions from random matrix theory}

\author{David P\'erez-Garc\'ia}
\email{dperezga@ucm.es}
\affiliation{Departamento de An\'alisis Matem\'atico y Matem\'atica Aplicada,\protect\\ Universidad Complutense de Madrid, 28040 Madrid, Spain}

\author{Leonardo Santilli}
\email{santilli@tsinghua.edu.cn}
\affiliation{Yau Mathematical Sciences Center, Tsinghua University, Beijing, 100084, China}
\affiliation{Departamento de Matem\'{a}tica, Grupo de F\'{i}sica Matem\'{a}tica, Faculdade de Ci\^{e}ncias,\protect\\ Universidade de Lisboa, 1749-016 Lisboa, Portugal}

\author{Miguel Tierz}
\email{tierz@mat.ucm.es}
\affiliation{Departamento de An\'alisis Matem\'atico y Matem\'atica Aplicada,\protect\\ Universidad Complutense de Madrid, 28040 Madrid, Spain}

\begin{abstract}
We uncover a novel dynamical quantum phase transition, using random matrix theory and its associated notion of planar limit. We study it for the isotropic XY Heisenberg spin chain. For this, we probe its real-time dynamics through the Loschmidt echo. This leads to the study of a random matrix ensemble with a complex weight, whose analysis requires novel technical considerations, that we develop. We obtain three main results: 1) There is a third order phase transition at a rescaled critical time, that we determine. 2) The third order phase transition persists away from the thermodynamic limit. 3) For times below the critical value, the difference between the thermodynamic limit and a finite chain decreases exponentially with the system size. All these results depend in a rich manner on the parity of the number of flipped spins of the quantum state conforming the fidelity. 
\end{abstract}
\maketitle
\tableofcontents

\section{Introduction}
Understanding the dynamics of quantum many body systems is one of the current most salient scientific challenges. New types of {\it phases} emerge from this perspective depending e.g. on how the system's evolution generates correlations or how fast it thermalizes. These include phases characterized by fast or slow scrambling, fulfilling the Eigenstate Thermalization Hypothesis (ETH) \cite{srednicki1994chaos,Deutsch:2018,shiraishi2017systematic,Mori:2017qhg}, phases corresponding to Many Body Localization (MBL) \cite{Nandkishore:2014kca,Vasseur:2016jtj,imbrie2016many,imbrie2017local,Parameswaran:2018enu,abanin2019colloquium}, with many body scars \cite{Bernien:2017,turner2018weak,Serbyn:2020wys}, among a large number of possible scenarios involving some sort of weakly ergodicity breaking \cite{turner2018weak,Serbyn:2020wys,sala2020ergodicity}.\par
New phases have concomitant new types of phase transitions. Among these, {\it dynamical quantum phase transitions (DQPTs)} \cite{Heyl:Ising,karrasch2013dynamical,Hickey:2014,vajna2014disentangling,Heyl:Broken,Kriel:2014,vajna2015topological,budich2016dynamical,Schmitt2015,Heyl:universality,Sharma:2015,Zhang:2016,Sharma:2016,Puskarov,Zunkovic:2018,Halimeh:2016,Banerjee:2016ncu,Karrasch:Potts,Zhou:nonHermitian,Guardado-Sanchez:2018,Heyl:2018zzb,Bandyopadhyay,Halimeh:2018a,Mishra:2020a,Halimeh:2018b,Khatun:2019,Pedersen:2020xuk,DeNicola:2020ddc,Zamani:2020,Peotta:2020tzu,Bao:2021bhc,Cheraghi:2021nea,Okugawa:2021jis,Halimeh:2021aeh,Naji:2021dur,Jafari:2021xty,Gonzalez:2022nyf,VanDamme:2022loc,Qin:2022ycz,Corps:2022hml,Mondal:2022xba} (for reviews \cite{Heyl:2017blm,zvyagin2016dynamical,heyl2019dynamical,Marino:2022eiw}) are still far from being well understood, partly because numerical simulations of time evolution in interacting quantum systems are extremely challenging, and partly because there are very few examples for which one can rigorously prove the existence of a DQPT. The first problem starts to be resolved thanks to the outstanding advance of quantum simulators, either digital or analogue, which can accurately simulate the long time evolution of some target Hamiltonian. Indeed, quantum simulations of DQPT have been already reported for instance in \cite{Bloch:2010,Flaschner:2018,Jurcevic:2017,zhang2017observation,Guo:2019,Wang:2019,Tian:2019,Nie:2019pxk}.\par
In this work, we contribute to the second problem, by finding and describing a new type of DQPT which differs significantly from all previous examples and, at the same time, can be analyzed rigorously using Random Matrix Theory (RMT). As it is standard in the context of DQPTs, we will consider the Loschmidt echo and the associated dynamical free energy \cite{Jalabert:2001,Hahn:1950,Gorin:2006}. We will show that, for the isotropic XY model, the dynamical free energy undergoes a third order phase transition in the spirit of, but different than, the celebrated Gross--Witten--Wadia (GWW) phase transition in lattice gauge theory \cite{Gross:1980he,Wadia:1980cp,Wadia:2012fr}. We will also show that signatures of the transition can already be observed in rather small system sizes.

\section{Main results}
A probe of dynamical phase transitions is the Loschmidt echo, or fidelity \cite{Gorin:2006}. A system of $L$ qubits is prepared in the initial state $\lvert \psi_0 \rangle$ and evolved forward in time through a Hamiltonian $H$. One is usually interested in the Loschmidt amplitude
\begin{equation*}
	\mG (t)  = \left\langle \psi_0 \left\lvert e^{-i t  H   } \right\rvert \psi_0 \right\rangle  ,
\end{equation*}
and the corresponding probability, the Loschmidt echo:
\begin{equation*}
	\mL (t) = \lvert \mG (t) \rvert^2 .
\end{equation*}
A DQPT is signalled by the non-analytic behaviour of the dynamical free energy $f(t) \propto - \ln \mL (t)$.\par
In this paper we will focus on the Loschmidt echo defined by the isotropic XY Hamiltonian\footnote{Also referred to as XX model.} 
\begin{equation}
\label{eq:HXY}
	H_{\text{XY}} = - \sum_{j=0} ^{L-1}	\left(  \sigma_j ^{-} \sigma_{j+1} ^{+} + \sigma_j ^{+} \sigma_{j+1} ^{-} \right)
\end{equation}
with periodic boundary conditions (PBC) and initial state the domain wall state
\begin{equation}
	\lvert \psi_0 \rangle = \lvert \underbrace{ \downarrow  \downarrow \dots \downarrow }_{N} \underbrace{ \uparrow  \uparrow \dots \uparrow }_{L-N} \rangle .
\label{eq:psi0}
\end{equation}
Therefore, $L$ denotes the size of the chain and $N$ is the number of flipped spins in the state. The corresponding Loschmidt amplitude is 
\begin{equation}
\label{eq:GXX}
	\mG_N (t) = \langle \psi_0 \lvert e^{-i t H_{\text{XY}}} \lvert \psi_0 \rangle  ,
\end{equation}
the echo is $\mL_N (t) = \lvert \mG_N (t) \rvert^2$, and we define the dynamical free energy as 
\begin{equation}
\label{eq:defDFE}
	f_N( \tau ) = - \frac{1}{2 N^2 } \ln  \mL_N (t) ,
\end{equation}
where $t$ and $\tau$ are related through scaling:
\begin{equation}
\label{eq:deftau}
	t= N \tau .
\end{equation}
Notice that $f_N( \tau )$ is normalized by $N^2$, not by the total number $L$ of qubits, hence it significantly differs from previous approaches \cite{LeClair:1995,Heyl:Ising}. The Loschmidt return rate $- \ln \mL_N (t)$ is not an extensive quantity here and remains finite in the thermodynamic limit (see Section \ref{sec:Methods}).\par
We will define $f( \tau )$ as the result of taking first the thermodynamic limit $L\rightarrow \infty$ and then the limit $N\rightarrow\infty$,  when we consider only {\it even} values of $N$ to compute the limit. 

\begin{figure}[t]
\centering
	\includegraphics[width=0.67\textwidth]{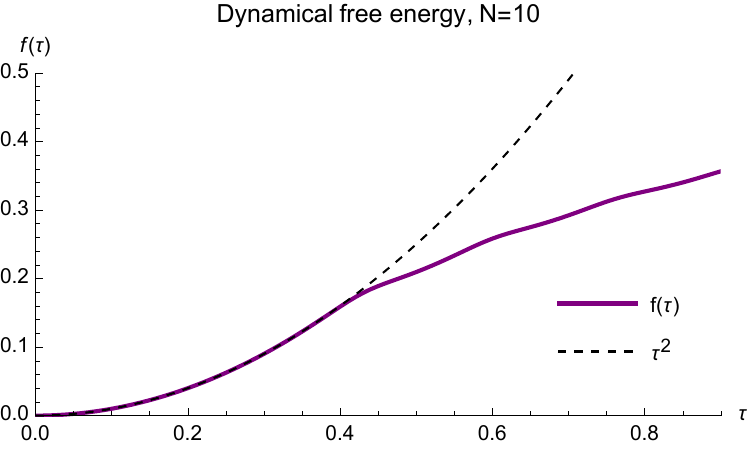}%
\caption{Dynamical free energy. $N=10$ and $L \gg 2N$.}
\label{fig:dfe}
\end{figure}\par
Our main results are:\par
\begin{mdframed}[style=MyFrame]
\begin{main}\label{main1}
 There is a third order dynamical phase transition in $f(\tau)$ occurring at $\tau_{\rm cr}\approx 0.33137$.
\end{main}
\begin{main}\label{main2}
 The third order phase transition persists in the thermodynamic limit $L,N\rightarrow \infty$ with $t=N\tau$ and constant ratio $L/N >1.2$.
\end{main}
\begin{main}\label{main3}
For $\tau<\tau_{\rm cr}$ the difference in $f_N(\tau)$ between considering the thermodynamic limit and a chain of length $L$ decreases exponentially with $L-N$ if $L>  2N$.
\end{main}
Moreover, we prove that the phase transition is robust against the presence of impurities in the preparation of the initial state \eqref{eq:psi0}.
\end{mdframed}\par
The first result means that the dynamical free energy $f(\tau)$ is not analytic, but rather has a discontinuity in its third derivative at $\tau_{\rm cr}$.
Furthermore, although taking $L \to \infty$ first is a convenient simplification, by main result \ref{main2} the phase transition remains valid at finite density $N/L$.
We stress that, whilst main results \ref{main1} \& \ref{main2} require $N \to \infty$ taking only even values, main result \ref{main3} also holds for the odd case. This will be shown in detail below.\par
All of our main results are mathematically sound, and are proven (or substantiated) using RMT. Dealing with real-time evolution requires the development of new techniques, which constitutes the main technical contribution of this paper.\par
\medskip
The connection between the XY model and RMT is well understood, but so far it mostly dealt with static quantities, like thermal states \cite{Perez-Garcia:2013lba,Stephan:2013,Pozsgay:2013,Perez-Garcia:2014aba,Stephan:2017,Santilli:2019wvq}. The Loschmidt echo (and generalizations thereof) has of course been considered previously. However, if, on one hand, the RMT formulation of \eqref{eq:GXX} (see Section \ref{sec:Methods}) was thoroughly studied in \cite{Krapivsky:2017sua}, the scaling limit \eqref{eq:deftau} remained insofar unaddressed. On the other hand, the vast majority of previous works relied on analytic continuation to the real-time echo from its thermal version \cite{Pozsgay:2013,Viti:2016,Stephan:2021yvk}. As a byproduct of our main result \ref{main1}, this approach is only valid for $\tau< \tau_{\text{cr}}$. Note also that a breakdown of the analytic continuation was already demonstrated in \cite{Piroli:2016fpu,Piroli:2018amn}.\par
For all these reasons, the phase transition eluded prior scrutiny.

\subsection{Differences with standard DPQTs}
The new dynamical phase transition presented here has many {\it exotic} characteristics.\par
The first one is that the DQPT emerges from the scaling limit \eqref{eq:deftau}, in analogy with the planar limit of Quantum Field Theory (QFT) and RMT \cite{Brezin:1977sv}. The transition reflects a non-analytic change in the limit $f (\tau)$ of the dynamical free energy \eqref{eq:defDFE} as a function of $\tau$ (Figure~\ref{fig:dfe}).\par
The second one is that the phase transition is third order. This is in principle unusual from the perspective of condensed matter physics, where most phase transitions are first or second order \cite{Sachdev:book}, and this seems to be the setting of previously reported DQPTs \cite{Canovi:2014,Heyl:2017blm,hamazaki2021exceptional}.\par
Third order phase transitions have been found in the study of integrable hyperbolic Richardson--Gaudin models, relevant for both superfluidity and superconductivity \cite{Rombouts:2010,Lerma:2011hp}. See \cite{eisele1983third} for earlier and  \cite{choi2021phase,chakravarty2021critical} (and references therein) for recent discussions of third-order transitions. A common thread is the lack of experimental realization so far, contrarily to the outcome of our work.\par
In RMT, third-order phase transitions are widespread. The first and best known example is the GWW transition \cite{Gross:1980he,Wadia:1980cp,Wadia:2012fr}, and a number of third-order transitions are indeed reduced to it \cite{majumdar2014top}. The GWW model was originally introduced in lattice Yang--Mills theory in two dimensions \cite{Bars:1979xb} and has since then found overarching multidisciplinary applications. It also has played a pivotal role in solving the long-standing problem of the longest increasing subsequence in combinatorics \cite{Johansson:1998,BDJ:1999}.\par 
The Loschmidt amplitude \eqref{eq:GXX} is intimately related to the GWW model: the latter may be seen as the imaginary-time version of \eqref{eq:GXX}. Notwithstanding this parallel, the analytic form of $f(\tau)$ as well as the mechanism beyond the DQPT will make clear that the third order transition we unveil is novel and radically distinct from the GWW one.\par
The third main difference is about how the DQPT we present is related to the zeros of the Loschmidt echo. DQPTs \cite{Heyl:2017blm} are usually associated to zeros in the Loschmidt echo which are attained in the thermodynamic limit. Here, as mentioned above, the Loschmidt echo is strictly positive and analytic for every even $N$ in the thermodynamic limit. The non-analyticity arises only when $N\to \infty$.

\subsection{The case of odd \texorpdfstring{$N$}{N}}
\begin{figure}[ht]
\centering
	\includegraphics[width=0.45\textwidth]{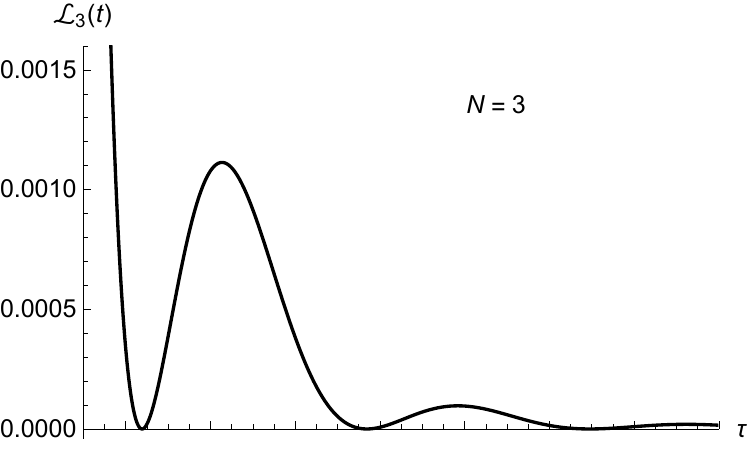}\hspace{0.05\textwidth}
	\includegraphics[width=0.45\textwidth]{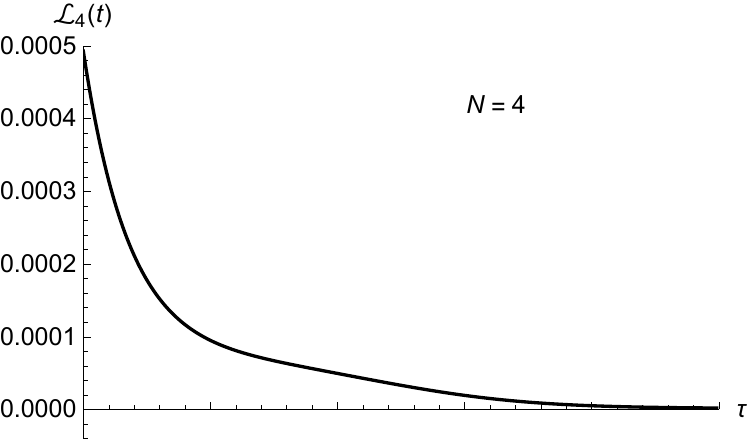}
\caption{Different behavior of the Loschmidt echo $\mL_N (t)$ at even and odd $N$, exemplified in $N=3,4$. $\mL_N (t)$ has zeros at finite values of $t$ for odd $N$, whereas it is positive and decays asymptotically as $e^{-t^2}$ for even $N$.}
\label{fig:Ltvst_N}
\end{figure}\par

We will also analyze the case of odd $N$ (meaning that we take $N\rightarrow\infty$ considering only odd values of $N$). This case has an extreme behavior regarding the relationship to zeros of the Loschmidt amplitude in stark contrast to the case of even $N$: the Loschmidt amplitude becomes zero at finite $L$, $N$ and $t$ --- a behaviour already observed in \cite{Krapivsky:2017sua}. Such zeros, which translate into divergences of the dynamical free energy, are responsible for the phase transition in the odd case (Figure~\ref{fig:Ltvst_N}). This implies a difference in how the phase transition is approached if one takes $N$ even or odd, see Figure~\ref{fig:fN3and4}. In the even case, $\mL_N (t)$ decays with $t$ and the dynamical free energy $f(\tau)$, after the phase transition, is {\it smaller} than the expected curve if there were no phase transition. In the odd case however, $f(\tau)$ is {\it larger} and $\mL_N (t)$  oscillates and has zeros. It is remarkable that, despite this mathematical fact, we can prove the following 
\begin{mdframed}[style=MyFrame]
\begin{main}\label{main4}
$\tau_{\rm cr}$ is the same for the even and odd cases.
\end{main}
\end{mdframed}

\begin{figure}[hb]
\centering
	\includegraphics[width=0.45\textwidth]{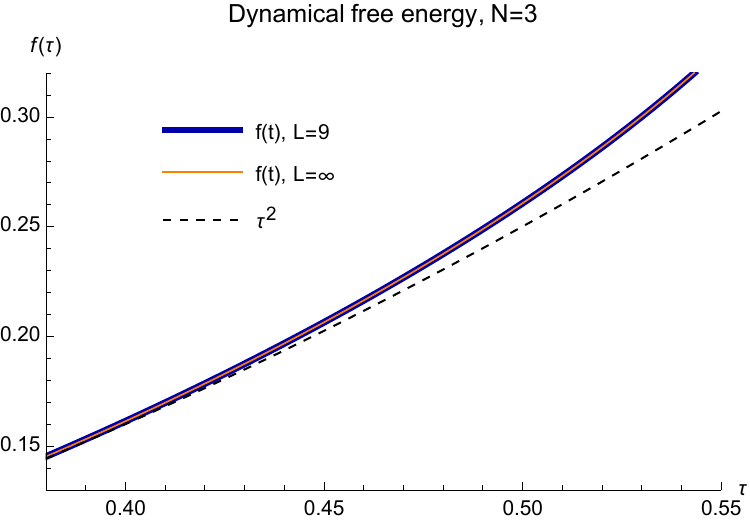}%
	\hspace{0.05\columnwidth}
	\includegraphics[width=0.45\textwidth]{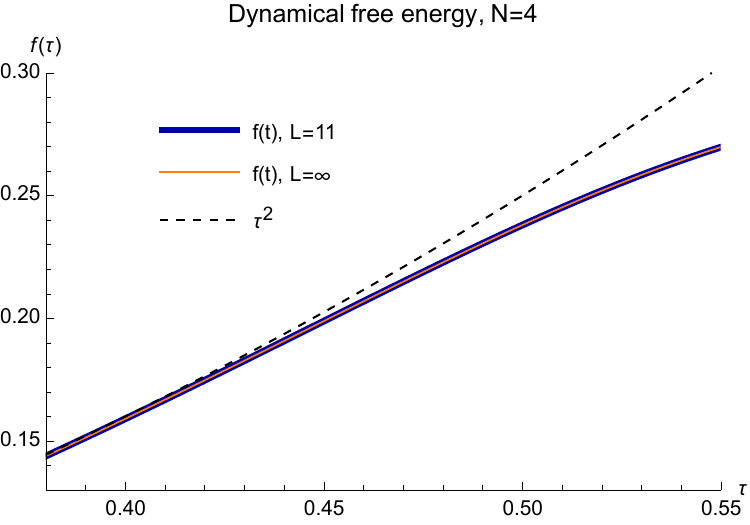}
\caption{Dynamical free energy. Left: $N=3,L =9$. Right: $N=4,L =11$. The bold blue line is the exact value for the given $L$, the thin orange line is the exact value at $L= \infty$ and the black dashed line is the extrapolation if there were no phase transition.}
\label{fig:fN3and4}
\end{figure}\par

\subsection{Experimental accessibility of the phase transition}

First of all, the presence of an impurity in the preparation of the initial state modifies $\mL_N (t)$ by a polynomial term in $t$, as we prove in Section \ref{sec:S3}. Thus, it would not spoil the visibility of the DQPT. One may therefore replace \eqref{eq:psi0} with any superposition of states with a number of impurities much smaller than $N$, and the analysis still holds.\par
Moreover, thanks to our main result \ref{main3}, one can observe clear signatures of the phase transition at very low values of $N$ and $L$.  For instance, as one can see in Figure~\ref{fig:fN3and4}, there is no difference in taking $L=9$ or the thermodynamic limit for $N=3$. In that case, $\mL_{3} (t=1.29) \approx 0.034$ and one already observes a $4.5\%$ deviation from the expected curve if there were no phase transition. Note that $\mL_{3} (t)$ is the probability value of returning to the state $\lvert\downarrow  \downarrow \downarrow  \uparrow  \uparrow\dots \uparrow  \rangle$ at time $t$. This can be estimated by measuring in the basis $\{\lvert \downarrow\rangle, \lvert \uparrow\rangle\}$ each one of the spins.\par
Recently, thanks to the quantum algorithm proposed in \cite{lu2021algorithms} to obtain expectation values on thermal states, the computation of Loschmidt echoes has emerged as a key subroutine in quantum computing. Motivated by that, it is shown in \cite{Yang:2023nak} that mild assumptions on the noise allow to implement easily error mitigation techniques, just by rescaling, for the study of the Loschmidt echo via a discretized time evolution.
Using those error mitigation tools, together with the fact that the total number of qubits is very small ($L=9$), the required evolution time is very short ($t=1.29$) and the total deviation to be detected is sufficiently large ($1.5\times 10^{-3}$), the experimental observation of the phase transition seems reasonably within reach in current digital quantum computers.\par
As for analogue quantum simulations, recent improvements in optical lattice microscopes have allowed to study very accurately time evolution in fermionic chains after a quench in cold atoms experiments \cite{Bloch:2017,Vijayan:2020}. Since the isotropic XY interaction is well-known to be equivalent to free fermions \cite{Lieb:1961fr},  one may also expect to observe the DQPT we uncover here in optical lattices. Furthermore, the implementation of a DQPT via cold atoms in an optical cavity was reported in \cite{Muniz:2020}.

\section{Techniques and sketch of the proofs}
\label{sec:Methods}

The map from the isotropic XY Loschmidt amplitude \eqref{eq:GXX} with PBC to a matrix model is central in our discussion \cite{Bogoliubov:2009txC,Perez-Garcia:2013lba}. At $L=\infty$ it is 
\begin{align}
	\mG_N (t) = \frac{1}{N!} \int_{[0,2 \pi]^N} \frac{\dd \theta_1 \cdots \dd \theta_N}{(2 \pi)^N} \prod_{1 \le a < b \le N}  \left\lvert  2 \sin \left( \frac{\theta_a - \theta_b}{2} \right) \right\rvert^2 ~
	\prod_{a=1} ^{N} e^{- i 2 t  \cos \theta_a} . \label{eq:GU=GWW} 
\end{align}
It follows from an old and well-known result in RMT \cite{Andreief} that the integral \eqref{eq:GU=GWW} equals the determinant of a $N\times N$ Toeplitz matrix:
\begin{equation}
	\mG_N (t) =  \det_{1 \le a,b \le N } \left[  i^{b-a} J_{a-b} (-2t) \right] , \label{eq:GU=Tdet} 
\end{equation}
where $J_{\nu}$ are the Bessel functions of the first kind.\par
The matrix model corresponding to a finite chain is presented in Section \ref{sec:S1}, and amounts to replace the integral \eqref{eq:GU=GWW} with a Riemann sum. It can be equivalently written as a Toeplitz determinant, as well.\par

\subsection{Main result 1: Dynamical phase transition in the even \texorpdfstring{$N$}{N} case}

We first consider a chain with $N$ even and $L \to \infty$. Corrections due to finite system size are under analytic control and are discussed below.\par
To set up the large $N$ limit, express \eqref{eq:GU=GWW} in the form 
\begin{align}
    \mG_{N} (t) & = \frac{1}{N!} \int_{[0,2 \pi]^N} \frac{\dd \theta_1 \cdots \dd \theta_N}{(2 \pi)^N} e^{-N^2 S (\theta_1, \dots, \theta_N )} , \label{eq:GexpSeff}  \\
     S (\theta_1, \dots, \theta_N ) & = \frac{1}{N} \sum_{a=1} ^{N} \frac{i t}{N} \cos \theta_a - \frac{1}{N^2} \sum_{1 \le a \ne b \le N} \ln \left\lvert  2 \sin \left( \frac{\theta_a - \theta_b}{2} \right) \right\rvert . \notag
\end{align}
It is customary to introduce the density of eigenvalues $\rho (\theta) = \frac{1}{N} \sum_{a=1}^{N} \delta (\theta - \theta_a )$ and rewrite $S (\theta_1, \dots, \theta_N ) $ in the functional form 
\begin{align}
    S [\rho] = i \tau \int \dd \theta \rho (\theta ) \cos \theta -  \int \dd \theta \rho (\theta )   \dashint \dd \varphi \rho (\varphi )  \ln \left\lvert  2 \sin \left( \frac{\theta - \varphi }{2} \right) \right\rvert , \label{eq:Seffrho}
\end{align}
where $\dashint$ is the principal value integral. For the two pieces to be same order in $N$, we have replaced $t/N$ with $\tau$ according to \eqref{eq:deftau}. This leads to the so-called \emph{planar limit} of the matrix model \cite{Brezin:1977sv}.\par
When $N \gg 1$, we expect that \eqref{eq:GexpSeff} is dominated by the saddle points of $S[\rho]$ (see Remark \ref{rmk:convergence} below for more details). The problem thus reduces to determine the distribution $\rho_{\ast} (\theta)$ that extremizes \eqref{eq:Seffrho}, and use it to compute 
\begin{equation*}
    f (\tau) = - \lim_{N \to \infty} \frac{1}{2N^2} \ln \mL_{N} (t) = \Re S[\rho_{\ast}] .
\end{equation*}
The Euler--Lagrange equation from the extremization of \eqref{eq:Seffrho} is 
\begin{equation}
\label{eq:ComplexSPEUN}
	\dashint \frac{\dd u}{u } \frac{z+u}{z-u} \varrho (z) = \tau \left( z - z^{-1} \right) ,
\end{equation}
where we have passed to exponential variables, $(e^{i \theta}, e^{i \varphi}) \to (z,u)$, and $\varrho (e^{i \theta}) = \rho_{\ast} (\theta)$.\par
A novel feature of \eqref{eq:Seffrho} is the presence of an imaginary coefficient. As a consequence, \eqref{eq:ComplexSPEUN} does not admit a standard solution with $z \in U(1)$. To solve the complexified extremization problem \eqref{eq:ComplexSPEUN} is one of the main technical achievements of this work.\par
We relax the assumption on supp$(\varrho)$, allowing it to be a curve $\Gamma \subset \mathbb{C}$ subject to the requirement $\lim_{\tau \to 0} \Gamma = U(1)$. The details are provided in Section \ref{sec:PT}, and the upshot is that $\Gamma$ is the connected component of 
\begin{equation}
\label{eq:defGammaU}
	\Re \left\{   \ln z + i \tau \left( z - \frac{1}{z} \right) \right\} = 0 
\end{equation}
with appropriate initial condition.\par
Letting the system evolve in time, the dynamical free energy is non-analytic when a complex zero of $\varrho (z)$ hits $\Gamma$. Hence, there exists a critical time $\tau_{\text{cr}}$ (see Section \ref{sec:PT}), which is the unique positive solution to
\begin{equation}
\label{eq:criticaleqtau}
	\sqrt{1 + 4 \tau_{\mathrm{cr}} ^2 } - \log \left(\frac{  1 \pm \sqrt{1 + 4 \tau_{\mathrm{cr}} ^2 } }{2 \tau_{\mathrm{cr}} } \right) = 0 ,
\end{equation}
i.e. $\tau_{\text{cr}} \approx 0.33137171$.\par
Solving \eqref{eq:ComplexSPEUN} for $\varrho (z)$ and using \eqref{eq:defGammaU}, we evaluate 
\begin{equation*}
	f( \tau ) = \tau^2 , \quad \tau < \tau_{\text{cr}} .
\end{equation*}
Moreover, for $\tau > \tau_{\text{cr}}$ we can extract $f (\tau )$ it in the two regimes $\tau \to \tau_{\text{cr}}$ and $\tau \to \infty $. In the former limit,  
\begin{equation*}
	\lim_{\tau \to \tau_{\text{cr}}} \partial_{\tau} ^2 f (\tau ) = 2 , \quad \quad \lim_{\tau \to \tau_{\text{cr}}} \partial_{\tau} ^3 f (\tau ) \ne 0 ,
\end{equation*}
implying that the DQPT is third order. Conversely, for $\tau \to \infty $ we get $f (\tau ) \propto \ln \tau $ (in agreement with Figure~\ref{fig:dfe}).\par
The late time asymptotic behaviour of $f(\tau)$, as well as the critical value $\tau_{\text{cr}}$, guarantee that the DQPT we uncover is not merely a Wick rotation of the GWW phase transition.\par

\begin{rmk}[Complex saddles]\label{rmk:convergence}
    To derive our main result \ref{main1} we have assumed that, in the regime $N \gg 1$, the logarithm of the matrix model converges to its saddle point value. In imaginary time, not only this is a widely used fact in physics, but for the GWW model the convergence has even been established mathematically \cite{Johansson:1998,BDJ:1999}.\par 
    In real (or complex) time dynamics, the convergence to the complex saddles is generically assumed in the physics literature; see for instance \cite{Copetti:2020dilC}, where the GWW model at complex time was considered. The convergence to the saddle point value, albeit not mathematically proven, is supported by the following facts.
    \begin{itemize}
        \item[(i)] We are interested in $f(\tau)$, which is real by definition. It is thus believed that a standard extremization argument will ultimately apply.
        \item[(ii)] There exists only one saddle point configuration (up to permutations) in both phases. This is a peculiarity of third order phase transitions, as opposed to the more familiar first and second order ones, characterized by the presence of multiple critical points. At its unique saddle point, $S[\rho]$ turns out to be real. This avoids us the subtleties of comparing complex values of the matrix model at different saddle points.
        \item[(iii)] \emph{A posteriori}, we will benchmark our findings against the mathematical work \cite{DeanoC}, finding perfect agreement.
    \end{itemize}
\end{rmk}

\subsection{Main results 2 \& 3: Finite chain effects}
A finite chain length $L < \infty $ replaces \eqref{eq:GU=GWW} with a discrete ensemble, in which the $N$ eigenvalues are distributed among $L$ sites. We refine the large $N$ analysis for the discrete ensemble and consider the scaled thermodynamic limit $L,N,t \to \infty$ with fixed rates $\tau = t/N$ and 
\begin{equation}
\label{eq:defell}
    \ell =L/N \ge 1 .
\end{equation}
We shall show in Section \ref{sec:S4} that a finite ratio $1 < \ell < \infty$ induces a DQPT at a certain time $\tau_{\star} (\ell )$ (the system is trivial at $\ell=1$). The dependence of this new critical time $\tau_{\star}$ on $\ell$ is given explicitly. We impose $\tau_{\star} (\ell) > \tau_{\text{cr}}$, to require that this transition induced by a finite ratio \eqref{eq:defell} does not invalidate the DQPT at $\tau_{\text{cr}}$ discussed so far. The inequality is satisfied for $\ell \gtrsim 1.2$. In other words, the hypotheses that lead to the main result \ref{main1} continue to hold at finite $\ell$ as long as $\ell \gtrsim 1.2$.\par
Furthermore, we quantify the validity of the $L=\infty$ approximation. We prove that, in the scaled thermodynamic limit \eqref{eq:defell} and in the phase $\tau < \tau_{\text{cr}}$, the error in neglecting the dependence on $\ell < \infty$ is exponentially small in $L-N$ if $\ell \ge 2$. The derivation, detailed in Section \ref{sec:S4}, builds on the mathematical work \cite{BaikLiuC} and on the theory of orthogonal polynomials.\par
We also complement the analytic estimate with quantitative numerical evidence. Define the error 
\begin{equation}
\label{def:errorNL}
	\err_N (\ell, \tau) = \ln \frac{ \mL_N (t)\vert_{2N<L<\infty}}{\mL_N (t)\vert_{L=\infty}} .
\end{equation}
The closer $\err_N $ is to zero, the better is the infinite chain approximation. Figure~\ref{fig:finitechainE} shows that \eqref{def:errorNL} converges quickly to 0 as $\ell$ is increased, and that the $L= \infty$ approximation is more accurate away from the critical point.
\begin{figure}[ht]
\centering
	\includegraphics[width=0.45\textwidth]{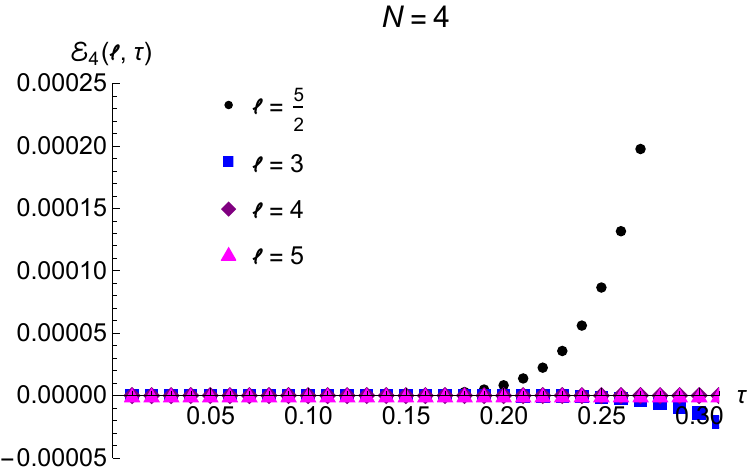}\hspace{0.05\textwidth}
	\includegraphics[width=0.45\textwidth]{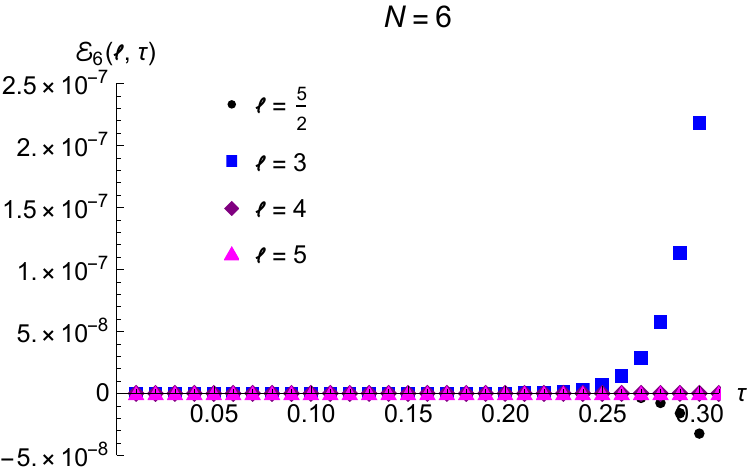}
\caption{Exact estimate of the error $\err_N$ at various $\ell$. Left: $N=4$, Right: $N=6$.}
\label{fig:finitechainE}
\end{figure}\par
We can moreover analyze the combined effect of finite $L$ and $N$ versus our asymptotic analysis. A quantitative probe of such net effect is 
\begin{equation}
\label{eq:errorRdef}
	\mathcal{R}_{N} (\ell, \tau) = 1 - \frac{ f (\tau)}{ f_{N,L} (\tau) } ,
\end{equation}
where $f$ (resp. $f_{N,L}$) is the dynamical free energy \eqref{eq:defDFE} in the large $N$ limit at $L=\infty$ computed analytically (resp. the dynamical free energy at finite $N$ and $2N<L<\infty$). The quantity \eqref{eq:errorRdef} is shown in Figure~\ref{fig:finitechainRvst}-\ref{fig:finitechainRvsL}. We infer from the plots that $f_{N,L} (\tau)$ approaches $f (\tau)$ monotonically from below, and that the discrepancy between the finite chain results and the asymptotic analysis are almost entirely due to finite $N$ effects, compared to which the $\ell < \infty$ effects are negligible if $\ell >2$. These conclusions agree with the fully analytic main result \ref{main3}.\par
\begin{figure}[ht]
\centering
	\includegraphics[width=0.45\columnwidth]{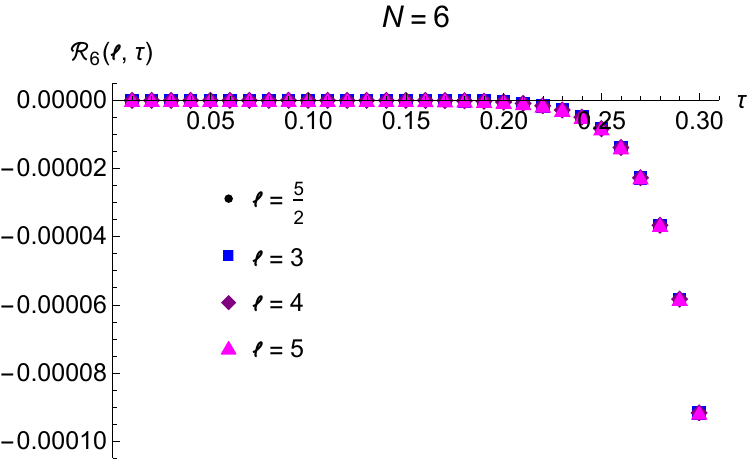}\hspace{0.05\columnwidth}
	\includegraphics[width=0.45\columnwidth]{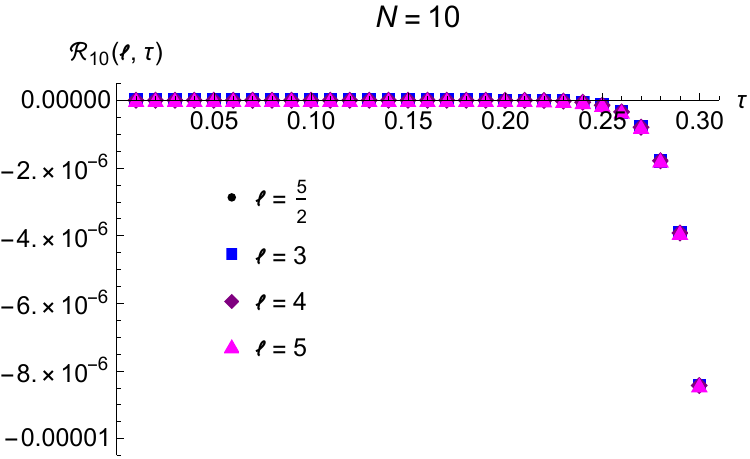}
\caption{Exact estimate of the error $\mathcal{R}_N$ at various $\ell$. Left: $N=6$, Right: $N=10$.}
\label{fig:finitechainRvst}
\end{figure}\par
\begin{figure}[ht]
\centering
	\includegraphics[width=0.45\columnwidth]{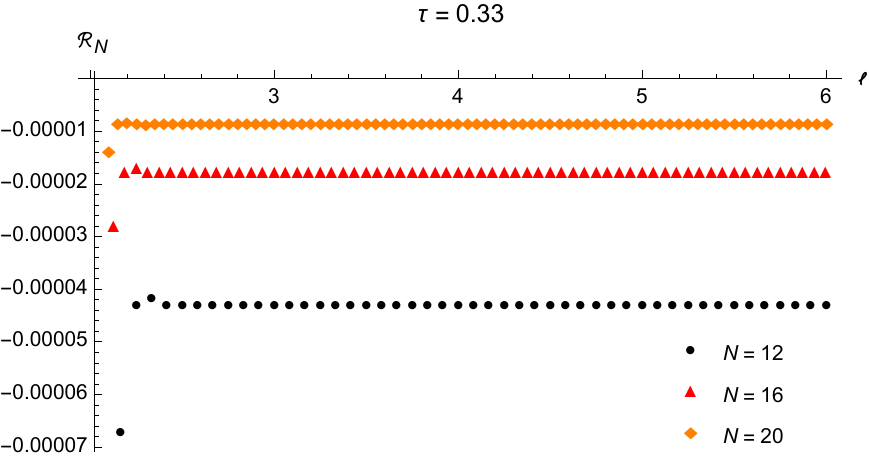}
\caption{Error $\mathcal{R}_N$ as a function of $\ell$ at $\tau\approx \tau_{\text{cr}}$.}
\label{fig:finitechainRvsL}
\end{figure}\par

\subsection{Main result 4: Odd \texorpdfstring{$N$}{N} and quantum speed limit}
Quantum speed limits (QSLs) pinpoint underlying time scales of physical processes by producing lower bounds on the expectation value of an observable or on the rate of change of a quantum state \cite{mandelstam1991uncertainty,margolus1998maximum,ness2021observing} (for a review \cite{Deffner:2017cxz}). We have seen above that, for every odd $N$, $\mL_N (t)$ oscillates with damped amplitude and has zeros, as shown in Figure~\ref{fig:Ltvst_N}. Thus, we focus here on the earliest time at which an evolved state $\lvert \psi (t) \rangle = e^{i t H } \lvert \psi_0 \rangle$ is orthogonal to the initial state $\lvert \psi_0 \rangle $ \cite{vaidman1992minimum}. In the current model, let us denote the corresponding value of $\tau$ as $\tau_{N} ^{\text{QSL}}$, that is 
\begin{equation*}
    \tau_{N} ^{\text{QSL}} = \min \left\{ \tau > 0 \ : \ \mG_N (t) = 0 \right\} ,
\end{equation*}
where we recall that $t$ and $\tau$ are related through \eqref{eq:deftau}. By construction, $\tau_{N} ^{\text{QSL}}$ is intimately related to the zeros of the Loschmidt echo \cite{Zhou:2021bfo}.\par
We show the existence of a finite limiting value $\tau^{\text{QSL}}$ of the QSL, namely 
\begin{equation*}
    \exists \ 0< \tau^{\text{QSL}} < \infty \ : \quad \lim_{k \to \infty} \tau_{2k+1} ^{\text{QSL}} = \tau^{\text{QSL}} .
\end{equation*}
The proof relies on the existence, for $\tau <  \tau_{N} ^{\text{QSL}} $, of a system of orthogonal polynomials on the unit circle, that are naturally associated to the Toeplitz determinant \eqref{eq:GU=Tdet}. The lower bound $ \tau^{\text{QSL}} > 0$ can be shown applying Szeg\H{o}'s strong limit theorem \cite{Szegoth}, which is the classical result for the asymptotics of a wide class of Toeplitz determinants, including \eqref{eq:GU=Tdet}, complemented with analytic arguments.\par
Moreover, we obtain that the limit value is precisely 
\begin{equation}
\label{eq:tQSL=tcr}
    \tau^{\text{QSL}} = \tau_{\text{cr}} ,
\end{equation}
which is the content of our main result \ref{main4}. The proof of \eqref{eq:tQSL=tcr} is technical and we defer it to Section \ref{sec:Nodd}. It uses random matrix theory, exploiting also a connection with the Toda lattice \cite{Adler99}.\par
Evidence for the identity \eqref{eq:tQSL=tcr} is given in Figure~\ref{fig:tauQSLNodd}, which shows the first $27$ values of $\tau^{\text{QSL}} _{2k+1}$ computed numerically.
\begin{figure}[ht]
\centering
	\includegraphics[width=0.67\columnwidth]{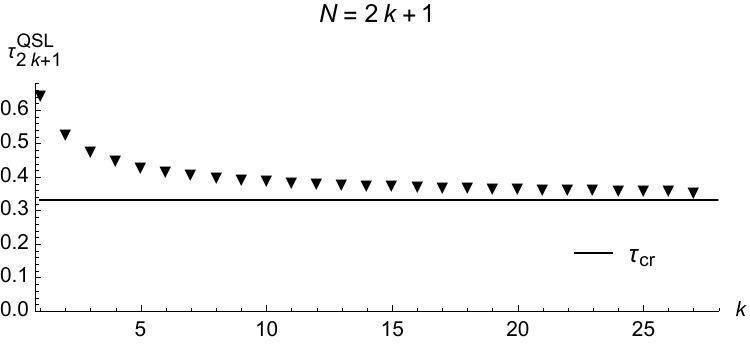}
\caption{Values of $\tau^{\mathrm{QSL}} _{2k+1}$ for $k=1, \dots, 27$. The horizontal asymptote is $\tau_{\mathrm{cr}}$.}
\label{fig:tauQSLNodd}
\end{figure}\par

\part*{Detailed statements and proofs}
\addcontentsline{toc}{part}{Detailed statements and proofs}

\section{Setup: Loschmidt echo and random matrix theory}
\label{sec:S1}
\subsection{The spin chain}
		Consider a one-dimensional spin chain with $L$ qubits, $L \gg 1$. We label the qubits $k=1, \dots, L$ and use the notation $\left\{ \lvert \uparrow \rangle_k ,  \lvert \downarrow \rangle_k \right\}$ for the natural basis in the $k^{\text{th}}$ copy of the single-qubit Hilbert space $\mathscr{H}$. The Hilbert space of the spin chain is
  \begin{equation*}
       \mathscr{H}^{\otimes L} := \bigotimes_{j=1} ^{L} \mathrm{Span}\left\{ \lvert \uparrow \rangle_j ,  \lvert \downarrow \rangle_j \right\}   ,
  \end{equation*}
  the $L^{\text{th}}$ tensor product of the single-qubit Hilbert space, with $\dim \left( \mathscr{H}^{\otimes L}\right) = 2^L$.\par
  The spin chain is prepared in the initial state 
		\begin{equation*}
			\lvert \Uparrow \rangle := \bigotimes_{k= 1} ^{L} \lvert \uparrow \rangle_k = \lvert  \uparrow  \uparrow \dots  \uparrow \uparrow \rangle ,
		\end{equation*}
		normalized as $\langle \Uparrow \vert \Uparrow \rangle = 1$. It is possible to assemble $N\le L $ spin-flip operators acting on $N$ distinct copies of the single-qubit Hilbert space into string operators
		\begin{equation}
		\label{eq:stringopsigma}
			\sigma^{-} _{k_1 \cdots k_N} = \sigma_{k_{N}} ^{-} \cdots \sigma_{ k_2 } ^{-} \sigma_{ k_1 } ^{-} ,
		\end{equation}
		$k_i \ne k_j$ if $i \ne j $ (and likewise for $\sigma^{+} _{k_1 \cdots k_N}$). Thanks to the commutation relations, we can restrict without loss of generality to configurations with 
		\begin{equation*}
			k_1 < k_2 < \cdots < k_N .
		\end{equation*}
		Acting with the operators \eqref{eq:stringopsigma} on $\lvert \Uparrow \rangle $ produces the states 
		\begin{align}
			\lvert N, \kappa \rangle & := \sigma_{k_{N}} ^{-} \cdots \sigma_{ k_2 } ^{-} \sigma_{ k_1 } ^{-} \lvert \Uparrow \rangle \label{eq:stateNkappa} \\
			& = \lvert \uparrow \uparrow \dots  \uparrow  \underbrace{ \downarrow}_{k_1^{\text{th}}} \uparrow  \dots  \uparrow  \underbrace{\downarrow}_{k_2^{\text{th}}} \uparrow  \dots \uparrow  \underbrace{\downarrow}_{k_{N}^{\text{th}}} \uparrow  \dots \uparrow  \uparrow \rangle . \notag
		\end{align}
		These states are labelled by the integer $N \in \mathbb{N}$ and a partition 
		\begin{equation*}
			\kappa = \left\{ \kappa_1 \ge \kappa_2 \ge \cdots \ge \kappa_N \ge 0 \right\} \subset \mathbb{N}^N
		\end{equation*}
		of length at most $N$. (The dependence on $L$ is left implicit). The partition $\kappa$ is related to the labels $\left\{ k_j \right\}_{j=1, \dots, N}$ of the qubits acted upon through 
		\begin{equation*}
			\kappa_j = k_{N-j+1} -(N -j+1), \quad j=1, \dots, N.
		\end{equation*}
		Notice that it is important to specify $N$, not only $\kappa$, because those partitions with length$(\kappa)<N$ give distinct operators \eqref{eq:stringopsigma} for distinct values of $N$.\par
  The states $\lvert N, \kappa \rangle$ are multi-domain wall states, consisting of strings of $\uparrow $ alternating with strings of $\downarrow $. In particular, the empty partition $\kappa = \emptyset $ corresponds to the state 
		\begin{align*}
			\lvert \psi_0 \rangle := \lvert N, \emptyset \rangle & = \sigma_{N} ^{-} \cdots \sigma_{ 1 } ^{-} \lvert \Uparrow \rangle \\
			& = \lvert \underbrace{ \downarrow \downarrow \dots \downarrow }_{N}  \underbrace{ \uparrow  \dots \uparrow \uparrow }_{L-N} \rangle ,
		\end{align*}
		which has a single domain wall, for a chain with open boundary conditions (OBC), or a two-sided domain wall for a chain with periodic boundary conditions (PBC).

\subsection{Quantum quench}
\label{sec:derivquench}
		We consider a quantum quench protocol in which we begin with a state $\lvert N , \kappa \rangle $ and a trivial Hamiltonian and, at $t=0$, we suddenly turn on the spin-$\frac{1}{2}$ Heisenberg XX Hamiltonian  
  \begin{equation*}
    H_{\text{XY}} = - \sum_{j=0} ^{L-1}	\left(  \sigma_j ^{-} \sigma_{j+1} ^{+} + \sigma_j ^{+} \sigma_{j+1} ^{-} \right) .
  \end{equation*}
  We are interested in the return probability, that goes under the name of Loschmidt echo, or fidelity:
		\begin{equation}
			\mL_{N,L} (t; \kappa ) = \lvert \langle N, \kappa \lvert e^{-it H_{\text{XY}}} \rvert N, \kappa \rangle \rvert^2 .
		\end{equation}
		The corresponding probability amplitude, the Loschmidt amplitude, is
		\begin{equation}
        \label{eq:defamplitudeGNLtkappa}
			\mG_{N,L} (t; \kappa ) = \langle N, \kappa \lvert e^{-it H_{\text{XX}}} \rvert N, \kappa \rangle .
		\end{equation}\par
		Let us recall that, in general, the Loschmidt echo measures, by definition, the overlap of a state $e^{it H_0} \lvert \psi \rangle $, evolved forward in time with an unperturbed Hamiltonian $H_0$, with a state $e^{i t H} \lvert \psi \rangle $ evolved with a perturbed Hamiltonian $H$. See \cite{Gorin:2006} for a review. Here we focus on one of the simplest possible setups: the initial Hamiltonian is trivial, $H_0 = 0$, the perturbed Hamiltonian is $H=H_{\text{XY}}$ and the state is \eqref{eq:stateNkappa}.\par
  \medskip
  Assuming an infinitely long chain, $L\to \infty$, the Loschmidt amplitude is given by \cite{Bogoliubov:2009txC}
		\begin{equation}
		\label{eq:GNintegralgen}
			\mG_{N,L=\infty} (t; \kappa ) = \int_{G} \dd U ~ \chi_{\kappa} (U) \chi_{\kappa} (U^{-1}) ~  e^{-it \mathrm{Tr} \left( U+U^{-1} \right) } ,
		\end{equation}
		where $\chi_{\kappa}$ is the character of the representation $\kappa$ of $G$, e.g. a Schur polynomial if $G=SU(N)$, and $\dd U$ is the normalized Haar measure on $G$, such that 
  \begin{align*}
      \langle N, \kappa \vert N, \kappa \rangle  &= \mG_N (0) \\
        &= \int_{G} \dd U ~ \chi_{\kappa} (U) \chi_{\kappa} (U^{-1}) = 1 .
  \end{align*}
The derivation of \eqref{eq:GNintegralgen} is based on the observation that the Loschmidt amplitude \eqref{eq:defamplitudeGNLtkappa} satisfies the differential equation 
\begin{equation}
\label{eq:diffeqBogoliubov}
    i \frac{\dd \mG_{N,L} (t; \kappa) }{\dd t } = \sum_{a=1}^{N} \left[ \mG_{N,L} (t; \kappa + \Box_a )  + \mG_{N,L} (t; \kappa - \Box_a ) \right] .
\end{equation}
In the latter expression, $\kappa \pm \Box_a$ means the partition obtained by appending or removing a box to the $a^{\text{th}}$ row of the Young tableaux representing $\kappa $. Equivalently, 
\begin{equation*}
    (\kappa_1, \kappa_2, \dots, \kappa_a , \dots, \kappa_N ) \pm \Box_a := (\kappa_1, \kappa_2, \dots, \kappa_a \pm 1 , \dots, \kappa_N ) .
\end{equation*}
For instance, 
\begin{equation*}
    \begin{aligned}
        & \kappa = (4,3,1,1) \\
        & \begin{ytableau} \ & \ & \ & \ \\  \ & \ & \ \\  \ \\ \
        \end{ytableau}
    \end{aligned}\qquad\qquad
    \begin{aligned}
        & \kappa + \Box_2= (4,4,1,1) \\
        & \begin{ytableau} \ & \ & \ & \ \\  \ & \ & \ & \ \\  \ \\ \
        \end{ytableau}
    \end{aligned}\qquad\qquad
    \begin{aligned}
        & \kappa - \Box_2= (4,2,1,1) \\
        & \begin{ytableau} \ & \ & \ & \ \\  \ & \  \\  \ \\ \
        \end{ytableau}
    \end{aligned}
\end{equation*}\par
\begin{rmk}[Time evolution of domain wall state]
    Remember that one is free to choose $\kappa$, and hence the initial state $\lvert N, \kappa \rangle$, as part of the data that specify the system. Once the initial state is prepared, formula \eqref{eq:diffeqBogoliubov} governs the time evolution of the system, expressed as a superposition in the basis \eqref{eq:stateNkappa}.\par
    The single-domain wall state $\lvert \psi_0 \rangle $ corresponds to the empty partition $\kappa = \emptyset$, for which the character $\chi_{\emptyset} =1$ is the identity function in \eqref{eq:GNintegralgen}. In this initial state, the time evolution equation \eqref{eq:diffeqBogoliubov} shows that the first order variation of the Loschmidt amplitude $\mG_{N,L} (t; \kappa)$ is given by the nearest-neighbour hopping of a qubit located at the boundary of the domain wall. As a consistency check, this behaviour can easily be retrieved by direct analysis of the time evolution in the Hamiltonian formalism.
\end{rmk}
A solution to \eqref{eq:diffeqBogoliubov} at imaginary time $t=-i\beta $ was given in \cite{Bogoliubov:2006C,Bogoliubov:2007C} in the form of a determinant. The derivation is algebraic and does not rely on reality of the parameters, thus it applies straightforwardly to the case at hand. The specific form of the determinant solving \eqref{eq:diffeqBogoliubov} is sensitive to the boundary conditions, so let us pause to discuss these first. We will come back and complete the proof of \eqref{eq:GNintegralgen} afterwards.\par
\medskip
In this work, we consider two types of boundary conditions:
\begin{itemize}
\item[i)] Periodic boundary conditions (PBC for short);
\item[ii)] Absorbing boundary conditions on the left and Open boundary conditions on the right (henceforth ABC for short).
\end{itemize}
By \emph{absorbing} boundary conditions, we mean that we set to zero the probability amplitude of a hop of the $0^{\text{th}}$ qubit to its left \cite{Bogoliubov:2006C}.\par
We may also allow for open boundary conditions (OBC) on both sides of the chain. As we will explain below, neglecting this choice results in no loss of generality for what concerns our results.\par
   It follows from \cite{Bogoliubov:2009txC} that $\mathcal{G}_{N,L} (t;\kappa)$ solving \eqref{eq:diffeqBogoliubov} with PBC is a Toeplitz determinant. For ABC, one gets instead a Toeplitz+Hankel determinant. We do not write down their form explicitly because we will not use it in the rest of the work. More details on the derivation will be given for the $\kappa= \emptyset$ case below.\par
   Formula \eqref{eq:GNintegralgen} follows from applying the celebrated Andr\'eief's identity \cite{AndreiefC,MeetAC} to these determinants.
   \begin{lem}[Andr\'eief's identity]
   \label{lem:AI}
   Let $\dd \mu (z)$ be a Borel measure on $\mathscr{D}$ and $\left\{ \Phi_{a-1} \right\}_{a=1} ^{N}$ and $\left\{ \Psi_{a-1} \right\}_{a=1} ^{N}$ be two collections of $\mu$-integrable functions on $\mathscr{D}$. Then
   \begin{multline*}
    \frac{1}{N!} \int_{\mathscr{D}} \dd \mu (z_1) \cdots \int_{\mathscr{D}} \dd \mu (z_N) ~ \det_{1 \le a,b \le N} \left[ \Phi_{a-1}  (z_b) \right] \det_{1 \le a,b \le N} \left[ \Psi_{a-1}  (z_b) \right] \\ = \det_{1 \le a,b, \le N} \left[ \int_{\mathscr{D}} \dd \mu(z) ~ \Phi_{a-1}  (z) \Psi_{b-1}  (z) \right] .
   \end{multline*}
   \end{lem}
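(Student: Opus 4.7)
The plan is to prove this classical identity by expanding both determinants via the Leibniz formula, exploiting the permutation symmetry of the product measure $\dd \mu(z_1) \cdots \dd \mu(z_N)$ to collapse one of the two permutation sums, and then recognizing the remaining sum as a determinant. This is the standard route, and no deep obstacle is expected; the main technical point is careful bookkeeping of the two permutations and their signs.

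\textbf{Step 1: Expand both determinants.} Writing $M_{ab} = \Phi_{a-1}(z_b)$ and $N_{ab} = \Psi_{a-1}(z_b)$, the Leibniz expansion gives
\begin{equation*}
\det_{1\le a,b\le N}[\Phi_{a-1}(z_b)] \; \det_{1\le a,b\le N}[\Psi_{a-1}(z_b)] \;=\; \sum_{\sigma,\tau \in S_N} \mathrm{sgn}(\sigma)\,\mathrm{sgn}(\tau) \prod_{a=1}^{N} \Phi_{a-1}(z_{\sigma(a)})\,\Psi_{a-1}(z_{\tau(a)}).
\end{equation*}
Inserting this into the multiple integral and interchanging sum with integral (justified by $\mu$-integrability of each $\Phi_{a-1}\Psi_{b-1}$ and finiteness of $S_N$) reduces the left-hand side to a finite sum of product integrals indexed by pairs $(\sigma,\tau)$.

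\textbf{Step 2: Symmetrize and factor the integrals.} Since the product measure is symmetric under any relabelling of $(z_1,\dots,z_N)$, I substitute $z_a \mapsto z_{\sigma^{-1}(a)}$ in the inner integral, which turns the integrand into $\prod_a \Phi_{a-1}(z_a)\Psi_{a-1}(z_{\pi(a)})$ with $\pi := \tau\circ\sigma^{-1}$. Each $z_a$ then appears exactly twice: as the argument of $\Phi_{a-1}$ and of $\Psi_{\pi^{-1}(a)-1}$. Hence the iterated integral factors into single integrals and equals
\begin{equation*}
\prod_{a=1}^{N} A_{a,\,\pi^{-1}(a)}, \qquad A_{ab} \,:=\, \int_{\mathscr{D}} \dd\mu(z)\, \Phi_{a-1}(z)\,\Psi_{b-1}(z).
\end{equation*}

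\textbf{Step 3: Collapse the sums and recognize the determinant.} The sign becomes $\mathrm{sgn}(\sigma)\mathrm{sgn}(\tau) = \mathrm{sgn}(\sigma)\mathrm{sgn}(\pi\sigma) = \mathrm{sgn}(\pi)$, which is independent of $\sigma$. Summing over $\sigma \in S_N$ therefore produces a factor $N!$, while the remaining sum over $\pi \in S_N$ (equivalently over $\pi^{-1}$, since $\mathrm{sgn}(\pi)=\mathrm{sgn}(\pi^{-1})$) yields
\begin{equation*}
N! \sum_{\pi \in S_N} \mathrm{sgn}(\pi) \prod_{a=1}^{N} A_{a,\pi(a)} \;=\; N! \, \det_{1\le a,b \le N}[A_{ab}].
\end{equation*}
Dividing through by $N!$ gives the claimed identity.

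\textbf{Remarks on potential pitfalls.} The only subtlety is tracking the permutations through the change of variables in Step 2; a common slip is to forget that the symmetry of $\dd\mu(z_1)\cdots \dd\mu(z_N)$ (not of the integrand!) is what is being used. No hypothesis beyond $\mu$-integrability of each $\Phi_{a-1}\Psi_{b-1}$ is required, so Fubini applies and Step 2 is rigorous. Alternative proofs (e.g. via the Cauchy--Binet formula applied to a suitable factorization, or by induction on $N$) are available but less transparent than the Leibniz-plus-symmetrization route sketched here.
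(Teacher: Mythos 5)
Your proof is correct. Note that the paper itself does not prove Lemma \ref{lem:AI}: it is quoted as a classical result with a citation to Andr\'eief's 1886 note, so there is no in-paper argument to compare against. Your route --- Leibniz expansion of both determinants, use of the symmetry of the product measure $\dd\mu(z_1)\cdots\dd\mu(z_N)$ to reduce the double sum over $(\sigma,\tau)$ to a single sum over $\pi$ weighted by $N!$, and factorization of the iterated integral into the entries $A_{ab}$ --- is the standard textbook proof and is rigorous under the stated integrability hypothesis. One cosmetic nit: after the substitution $z_a\mapsto z_{\sigma^{-1}(a)}$ the surviving permutation is $\pi=\sigma^{-1}\circ\tau$ rather than $\tau\circ\sigma^{-1}$ as written; this is immaterial, since $\mathrm{sgn}(\sigma^{-1}\tau)=\mathrm{sgn}(\tau\sigma^{-1})=\mathrm{sgn}(\sigma)\mathrm{sgn}(\tau)$ and, at fixed $\sigma$, either composition sweeps out all of $S_N$ as $\tau$ does, so the collapse of the $\sigma$-sum to a factor of $N!$ and the identification of the remaining sum with $\det[A_{ab}]$ go through unchanged.
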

   Equivalently, one could use the generalization of the Heine--Szeg\H{o} identity in \cite{BumpDiaconisC}. In each case, we arrive at \eqref{eq:GNintegralgen} with 
\begin{itemize}
\item[i)] $G=U(N)$ for PBC;
\item[ii)] $G=USp(2N)$ for ABC.\footnote{We denote $USp(2N) := Sp(N) \cap U(2N)$ the group of unitary-symplectic $2N \times 2N$ matrices.}
\end{itemize}
Furthermore, it is possible to realize a chain with OBC at both ends by breaking the $U(1)$ translation invariance of the PBC chain, cutting it open. It has the effect of removing the center symmetry (which is a 1-form symmetry, in the gauge theory parlance), thus the integration in \eqref{eq:GNintegralgen} is over $PSU(N)$.
\begin{rmk}[OBC and global form of $G$]
It is oftentimes quoted in the literature that, for OBC, the pertinent group is $SU(N)$. From the analysis of the symmetries of the model we get that the resulting group should be centerless, hence its global form is $PSU(N)$. This subtlety is not relevant for the ensuing discussion.
\end{rmk}\par

\subsection{Preparing the quench in the simplest domain wall state}
We focus on the simplest initial state with domain wall, 
\begin{equation*}
\lvert \psi_0 \rangle := \lvert N, \emptyset \rangle .
\end{equation*}
To reduce clutter, we denote 
    \begin{equation*}
        \mG_{N,L} (t) := \mG_{N,L} (t; \kappa=\emptyset )  \qquad \mL_{N,L} (t) := \mL_{N,L} (t; \kappa=\emptyset ) .
    \end{equation*}\par
\medskip
The determinants expressing the Loschmidt amplitude with, respectively, PBC and ABC are \cite{Bogoliubov:2009txC,Perez-Garcia:2013lba}:
\begin{subequations} 
\begin{align}
	\mG^{(\text{\rm PBC})}_{N,L=\infty} (t) &=  \det_{1 \le a,b \le N } \left[  i^{b-a} J_{a-b} (-2t) \right], \label{eq:GU=Tdetlong} \\
	\mG^{(\text{\rm ABC})}_{N,L=\infty} (t) &= \det_{1 \le a,b \le N } \left[  i^{b-a} J_{a-b} (-4t) - i^{-b-a} J_{a+b} (-4t) \right] , \label{eq:GSp=THdetlong}
\end{align}
\label{eq:G=detlong}
\end{subequations}
where $J_{\nu}$ are the Bessel functions of the first kind. For trivial $\kappa = \emptyset$, the integral \eqref{eq:GNintegralgen} becomes simply 
\begin{subequations}
\begin{align}
    \mG^{(\text{\rm PBC})}_{N,L=\infty} (t)  \ &= \ \left. \langle \psi_0 \lvert e^{-it H_{\text{XY}}} \rvert \psi_0 \rangle \right\rvert_{\text{\rm PBC}} \ = \ \int_{U(N)} \dd U ~ e^{-it \mathrm{Tr} \left( U + U^{-1} \right) } \\
    \mG^{(\text{\rm ABC})}_{N,L=\infty} (t) \ &= \ \left. \langle \psi_0 \lvert e^{-it H_{\text{XY}}} \rvert \psi_0 \rangle \right\rvert_{\text{\rm ABC}} \ = \ \int_{USp(2N)} \dd U ~ e^{-it \mathrm{Tr} \left( U + U^{-1} \right) } .
\end{align}
\label{eq:Gmatrixintegralform}
\end{subequations}
It is customary in random matrix theory to reduce integrals of the form \eqref{eq:Gmatrixintegralform} to integrals over eigenvalues, thanks to Weyl's integration formula.
\begin{lem}[Weyl's integration formula]
    Let $G$ be a compact, connected Lie group, $\mathbb{T}_G$ a maximal torus and $\mathrm{W}_G$ the Weyl group of $G$ determined by $\mathbb{T}_G$. Denote by $\dd U$ the normalized Haar measure on $G$ and $\dd \mathsf{t} $ the Lebesgue measure on the torus. Besides, let $\Phi$ be a class function on $G$. Then 
    \begin{equation*}
        \int_G \dd U ~\Phi (U)  = \frac{1}{\lvert \mathrm{W}_G \rvert }\int_{\mathbb{T}_G} \dd \mathsf{t} ~ \Phi (\mathsf{t}) \left\lvert \Delta_{\text{\rm Weyl}}  (\mathsf{t}) \right\rvert^2 ,
    \end{equation*}
    where, denoting $\mathrm{Roots}_+ (G)$ the set of positive roots of $G$, 
    \begin{equation*}
        \Delta_{\text{\rm Weyl}}  (\mathsf{t}) := \prod_{\alpha \in \mathrm{Roots}_+ (G)} \left( e^{\alpha (\mathsf{t})/2} - e^{-\alpha (\mathsf{t})/2} \right) .
    \end{equation*}
\end{lem}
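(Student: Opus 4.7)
The plan is to deduce the identity from the push-forward of the Haar measure under the conjugation map $\Psi\colon G/\mathbb{T}_G \times \mathbb{T}_G \to G$, $(g \mathbb{T}_G, \mathsf{t}) \mapsto g \mathsf{t} g^{-1}$. First I would invoke the classical conjugacy theorem: every element of a compact, connected Lie group is conjugate to an element of a fixed maximal torus. This shows $\Psi$ is surjective, and over the open dense set of regular elements in $G$ the fibre over $g \mathsf{t} g^{-1}$ is exactly the Weyl orbit of $\mathsf{t}$, so $\Psi$ is a $\lvert \mathrm{W}_G \rvert$-to-one smooth covering away from a set of Haar measure zero.

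Next I would compute $\lvert \det d\Psi \rvert$ at a point $(e\mathbb{T}_G, \mathsf{t})$ with $\mathsf{t}$ regular. Using the $\mathrm{Ad}(\mathbb{T}_G)$-invariant root-space decomposition of the complexified Lie algebra $\mathfrak{g}_{\mathbb{C}} = \mathfrak{t}_{\mathbb{C}} \oplus \bigoplus_{\alpha} \mathfrak{g}_{\alpha}$, the differential acts as the identity on $\mathfrak{t}$ and as $\mathrm{Ad}(\mathsf{t}^{-1}) - \mathbf{1}$ on each root space, with eigenvalue $e^{-\alpha(\mathsf{t})} - 1$ on $\mathfrak{g}_{\alpha}$. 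Pairing each positive root $\alpha$ with $-\alpha$ and factoring out $e^{-\alpha(\mathsf{t})/2}$ produces
\begin{equation*}
\lvert \det d\Psi \rvert = \prod_{\alpha \in \mathrm{Roots}_+(G)} \lvert e^{\alpha(\mathsf{t})/2} - e^{-\alpha(\mathsf{t})/2} \rvert^2 = \lvert \Delta_{\text{Weyl}}(\mathsf{t}) \rvert^2 ,
\end{equation*}
and this value is independent of the coset representative by left-invariance of the Haar measure, so it extends to arbitrary base points by translation.

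Applying the change of variables and using that $\Phi$ is a class function, so $\Phi(g \mathsf{t} g^{-1}) = \Phi(\mathsf{t})$, the integral over $G/\mathbb{T}_G$ decouples and contributes a factor $\mathrm{vol}(G/\mathbb{T}_G)$. With both Haar measures normalized to unit mass and after dividing by the $\lvert \mathrm{W}_G \rvert$-fold multiplicity of $\Psi$ on the regular set, one recovers the stated formula. The main technical point is matching normalizations: I would need to check that the pushforward under $\Psi$ of the product measure weighted by $\lvert \Delta_{\text{Weyl}} \rvert^2$ is proportional to the Haar measure on $G$, which reduces to verifying its left-$G$-invariance. This is the only place where compactness and connectedness of $G$ are genuinely used, via the uniqueness of Haar measure to fix the overall constant; the singular locus, being of codimension at least one in both $\mathbb{T}_G$ and $G$, has measure zero and can be disregarded.
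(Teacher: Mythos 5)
The paper does not actually prove this lemma: it is quoted as the classical Weyl integration formula and immediately applied to reduce the $U(N)$ and $USp(2N)$ integrals to eigenvalue integrals, so there is no in-paper argument to compare against. Your proposal is the standard textbook proof (conjugation map $\Psi\colon G/\mathbb{T}_G\times\mathbb{T}_G\to G$, conjugacy theorem, $\lvert \mathrm{W}_G\rvert$-fold covering over the regular locus, Jacobian from the root-space decomposition), and the core computation is right: the differential at $(e\mathbb{T}_G,\mathsf{t})$ is the identity on $\mathfrak{t}$ and $\mathrm{Ad}(\mathsf{t}^{-1})-\mathbf{1}$ on $\mathfrak{g}/\mathfrak{t}$, and pairing $\alpha$ with $-\alpha$ gives $(e^{-\alpha(\mathsf{t})}-1)(e^{\alpha(\mathsf{t})}-1)=\lvert e^{\alpha(\mathsf{t})/2}-e^{-\alpha(\mathsf{t})/2}\rvert^{2}$ since $\alpha$ is imaginary-valued on the compact torus. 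The one step you should not leave to "uniqueness of Haar measure" is the overall constant: uniqueness only yields proportionality, whereas the formula asserts the precise factor $1/\lvert\mathrm{W}_G\rvert$ with both measures normalized to unit mass. To pin it down you must either do the volume bookkeeping with compatible $\mathrm{Ad}$-invariant Riemannian volume forms (so that $\mathrm{vol}(G)=\mathrm{vol}(G/\mathbb{T}_G)\,\mathrm{vol}(\mathbb{T}_G)$ via the fibration $G\to G/\mathbb{T}_G$), or evaluate both sides on $\Phi\equiv 1$ and verify $\int_{\mathbb{T}_G}\lvert\Delta_{\text{Weyl}}(\mathsf{t})\rvert^{2}\,\dd\mathsf{t}=\lvert\mathrm{W}_G\rvert$, which follows from expanding $\lvert\Delta_{\text{Weyl}}\rvert^{2}=\prod_{\alpha}(1-e^{\alpha})$ (product over all roots) and extracting the constant Fourier coefficient, equivalently from the Weyl denominator formula. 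Also, strictly speaking the fibre of $\Psi$ over a regular element is a set of $\lvert\mathrm{W}_G\rvert$ pairs in $G/\mathbb{T}_G\times\mathbb{T}_G$ in bijection with $\mathrm{W}_G$, not the Weyl orbit of $\mathsf{t}$ itself, though your count is the correct one. With those two points tightened the argument is complete.
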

Weyl's integration formula allows to rewrite \eqref{eq:Gmatrixintegralform} as the eigenvalue integrals
\begin{subequations} 
\begin{align}
	\mG^{(\text{\rm PBC})}_{N,\infty} (t) &=  \frac{1}{N!} \int_{[0,2 \pi]} \frac{\dd \theta_1}{2 \pi} \cdots \int_{[0,2 \pi]} \frac{\dd \theta_N}{2 \pi} ~  \left\lvert \Delta_{U(N)} (\theta_1, \dots, \theta_N ) \right\rvert^2 ~ e^{- i 2 t \sum_{a=1}^{N} \cos \theta_a} , \label{eq:GU=GWWlong} \\
	\mG^{(\text{\rm ABC})}_{N,\infty} (t) &=  \frac{1}{N!} \int_{[0,\pi]} \frac{\dd \theta_1}{2 \pi} \cdots \int_{[0,\pi]} \frac{\dd \theta_N}{2 \pi} ~  \left\lvert \Delta_{Sp(N)} (\theta_1, \dots, \theta_N ) \right\rvert^2 ~ e^{- i 4 t \sum_{a=1}^{N} \cos \theta_a} , \label{eq:GSp=GWWlong}
\end{align}
\label{eq:G=MMlong}
\end{subequations}
where the Vandermonde factors are \cite{ForresterC}
\begin{subequations} 
\begin{align}
	\Delta_{U(N)} (\theta_1, \dots, \theta_N ) & = \prod_{1 \le a < b \le N}  2 \sin \left( \frac{\theta_a - \theta_b}{2} \right)  , \\
	\Delta_{Sp(N)} (\theta_1, \dots, \theta_N) &= \prod_{1 \le a < b \le N}  \left( \cos \theta_a - \cos \theta_a \right)  \prod_{a=1}^{N} 2 \sin \theta_a .
\end{align}
\end{subequations}
As claimed in the previous discussion, the equality between \eqref{eq:G=MMlong} and \eqref{eq:G=detlong} follows from identifying in the Vandermonde determinants a form suitable for applying Lemma \ref{lem:AI} to \eqref{eq:G=MMlong}, and computing the integrals 
\begin{equation*}
    \oint_{U(1)} \frac{\dd z}{2 \pi i z} z^{\nu} e^{i t (z+z^{-1})} = \frac{1}{i^{\nu}} J_{\nu} (-2t) .
\end{equation*}
An alternative derivation of the matrix model representation \eqref{eq:Gmatrixintegralform} is based on a Jordan--Wigner transformation, see for instance \cite{Stephan:2017}.\par
\begin{rmk}[Loschmidt echo and gauge theory]
   The matrix model \eqref{eq:GU=GWWlong} for the spin chain with PBC is closely related to the Gross--Witten--Wadia (GWW) model \cite{Gross:1980he,Wadia:1980cp,Wadia:2012fr}, originally introduced in two-dimensional lattice gauge theory. More precisely, the GWW model is the imaginary-time version of \eqref{eq:GU=GWWlong}, i.e. replacing $t \mapsto -i \beta$. The inverse temperature $\beta$ plays the role of the inverse Yang--Mills coupling in the gauge theory interpretation \cite{Gross:1980he}. The imaginary-time version of \eqref{eq:GSp=GWWlong} is the analogue of the GWW model with $USp(2N)$ gauge group.\par
   This gauge theory analogy will partly inspire the methods to study \eqref{eq:G=MMlong} in later sections.
\end{rmk}
From the expressions \eqref{eq:Gmatrixintegralform} we easily recover the universal, short-time behaviour of the Loschmidt echo.
\begin{thm}
    At leading order in the short time expansion, the Loschmidt echo has the universal behaviour
    \begin{equation*}
	\mL_{N,\infty} (t \ll 1 ) \approx 1- t^2 \left\langle \mathrm{Tr} \left( U+U^{-1} \right)^2 \right\rangle^{\mathrm{conn}.} _G ,
\end{equation*}
with the average over $G=U(N)$ for PBC and over $G=Usp(2N)$ for ABC. The superscript indicates the connected correlator
\begin{equation*}
    \left\langle A \cdot B \right\rangle^{\mathrm{conn}.} _G :=  \left\langle A \cdot B \right\rangle_G - \left\langle A \right\rangle_G \left\langle  B \right\rangle_G .
\end{equation*}
\end{thm}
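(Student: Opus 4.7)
The plan is to Taylor expand the matrix integral representation of $\mG_{N,\infty}(t)$ from \eqref{eq:Gmatrixintegralform} in powers of $t$, and then take the modulus squared. The key preliminary observation is that, for $U$ in the compact group $G$, unitarity gives $U^{-1}=U^{\dagger}$, so $\mathrm{Tr}(U+U^{-1})=2\,\mathrm{Re}\,\mathrm{Tr}(U)$ is a real-valued class function on $G$. Hence the averages
\begin{equation*}
    a := \left\langle \mathrm{Tr}(U+U^{-1}) \right\rangle_G, \qquad b := \left\langle \mathrm{Tr}(U+U^{-1})^2 \right\rangle_G
\end{equation*}
are both real numbers. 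This reality is what forces the $O(t)$ term in $\mL_{N,\infty}$ to vanish and allows the $O(t^2)$ coefficient to be identified with a connected correlator rather than a bare second moment.

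Next, I would expand the integrand as
\begin{equation*}
    e^{-it\,\mathrm{Tr}(U+U^{-1})} = 1 - it\,\mathrm{Tr}(U+U^{-1}) - \frac{t^2}{2}\,\mathrm{Tr}(U+U^{-1})^2 + O(t^3),
\end{equation*}
the error being uniform in $U$ since $\lvert\mathrm{Tr}(U+U^{-1})\rvert \le 2N$ on $G$ (so compactness justifies swapping the expansion with the Haar integral). Integrating term by term over $G$ yields
\begin{equation*}
    \mG_{N,\infty}(t) = \left(1 - \tfrac{t^2}{2}\,b\right) - i\,t\,a + O(t^3),
\end{equation*}
whose real and imaginary parts, up to this order, are $1-\tfrac{t^2}{2}b$ and $-t\,a$ respectively.

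Finally, taking the modulus squared,
\begin{equation*}
    \mL_{N,\infty}(t) = \left(1-\tfrac{t^2}{2}b\right)^2 + t^2 a^2 + O(t^3) = 1 - t^2\,(b - a^2) + O(t^3),
\end{equation*}
and recognizing $b-a^2$ as the connected correlator $\left\langle \mathrm{Tr}(U+U^{-1})^2 \right\rangle^{\mathrm{conn.}}_G$ proves the claim. The derivation is manifestly uniform in the boundary conditions: only the choice of group $G$ (namely $U(N)$ for PBC and $USp(2N)$ for ABC) enters, and only through the symbol $\langle\,\cdot\,\rangle_G$. There is no substantive obstacle here; the whole content of the statement is the reality of the class function $\mathrm{Tr}(U+U^{-1})$, without which the $O(t)$ contribution to the echo would not cancel and the subtraction characteristic of the connected correlator would not emerge.
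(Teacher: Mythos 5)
Your proof is correct and follows essentially the same route as the paper: Taylor-expand the Haar integral \eqref{eq:Gmatrixintegralform} to second order in $t$, use $\mG_N(0)=1$ from the normalized measure, and recombine the square of the first moment with the second moment into the connected correlator. The only cosmetic difference is that you compute $\lvert \mG_{N,\infty}(t)\rvert^2$ via real and imaginary parts (making the reality of $\mathrm{Tr}(U+U^{-1})$ explicit), whereas the paper writes $\mL_{N,\infty}(t)=\mG_{N,\infty}(t)\,\mG_{N,\infty}(-t)$ and multiplies the two expansions, which amounts to the same thing.
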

\begin{proof}
Expanding the integrand in \eqref{eq:Gmatrixintegralform} at small $t$ one gets 
    \begin{equation*}
        \mG_{N, \infty} (\pm t) \approx \int_{G} \dd U ~ \left[ 1 \mp it  \mathrm{Tr} \left( U+U^{-1} \right) + \frac{1}{2} (\pm it)^2   \mathrm{Tr} \left( U+U^{-1} \right)^2 \right] .
    \end{equation*}
    Recall that we are using the normalized Haar measure on the Lie groups, so $\mG_N (0)=1$. The above expansion then implies
    \begin{align*}
        \mG_{N, \infty} (t) \mG_{N, \infty} (-t) & \approx \left[  1 - it \left\langle \mathrm{Tr} \left( U+U^{-1} \right) \right\rangle_G  + \frac{(-it)^2}{2} \left\langle \mathrm{Tr} \left( U+U^{-1} \right)^2 \right\rangle _G \right] \\
        & \quad \times \left[  1 + it \left\langle \mathrm{Tr} \left( U+U^{-1} \right) \right\rangle_G  + \frac{(it)^2}{2} \left\langle \mathrm{Tr} \left( U+U^{-1} \right)^2 \right\rangle _G \right] \\
        & = 1 + t^2 \left( \left\langle \mathrm{Tr} \left( U+U^{-1} \right) \right\rangle_G  \right)^2 - 2 \times \frac{t^2}{2} \left\langle \mathrm{Tr} \left( U+U^{-1} \right)^2 \right\rangle _G .
    \end{align*}
    Observing that 
    \begin{equation*}
        \mL_{N, \infty} (t) =  \mG_{N, \infty} (t) \mG_{N, \infty} (-t) 
    \end{equation*}
    concludes the proof.
\end{proof}

\subsection{Loschmidt echo on a finite chain}
		The expressions \eqref{eq:G=MMlong} hold for an infinitely long chain to which we apply the operators \eqref{eq:stringopsigma}. More precisely, \eqref{eq:G=MMlong} must be understood as the $L \to \infty$ limit of a chain of fixed length (or circumference) in which the number $L$ of qubits is progressively increased. Note that the actual length of the chain is irrelevant, because it drops out from any probability amplitude by normalization.\par
  \medskip
  The finite-$L$ matrix models, that give rise to \eqref{eq:G=MMlong} upon taking $L \to \infty $, are discrete analogues of \eqref{eq:G=MMlong}, in which the $N$ eigenvalues occupy $L$ discrete sites. The angular variables $\left\{ \theta_a \right\}_{a=1, \dots, N}$ are discretized into $\left\{ \theta (s_a) \right\}_{a=1, \dots, N}$, where 
		\begin{equation}
		\label{eq:discrthetaofs}
			\theta (s) = \frac{2 \pi }{L} (s-1), \qquad s \in \left\{ 1, \dots, L \right\} .
		\end{equation}
  \begin{itemize}
  \item[i)] In a finite chain of $L$ qubits with PBC, the circular topology $U(1)$ is replaced with the discrete topology
		\begin{equation*}
			 \left\{ z \in \mathbb{S}^1 \ : \ z^L = 1  \right\} \cong \mathbb{Z}_L ,
		\end{equation*}
  where we describe $\mathbb{Z}_L$ as a the multiplicative group $\left\{ e^{i 2 \pi (s-1)/L} , \ s=1 \dots, L \right\}$.
  \item[ii)] Likewise, in a finite chain of $L$ qubits with ABC, the semicircle topology $U(1)/\mathbb{Z}_2 ^{\mathsf{P}}$ is replaced by the multiplicative group $\left\{ e^{i \pi (s-1)/L} , \ s=1 \dots, L \right\}$. Here $\mathbb{Z}_2 ^{\mathsf{P}}$ is the parity symmetry $z \mapsto -z$.
\end{itemize}
Therefore, the matrix models for $L < \infty$ are the Riemann sums:
		\begin{subequations} 
\begin{align}
	\mG^{(\text{\rm PBC})}_{N,L< \infty} (t)  &=  \frac{1}{L^N N!} \sum_{s_1=1}^{L} \cdots \sum_{s_N=1}^{L}  \left\lvert \Delta_{U(N)} (\theta (s_1), \dots , \theta (s_N) ) \right\rvert^2 ~ e^{- i 2 t \sum_{a=1}^{N} \cos \theta (s_a) } , \label{eq:GU=discrlong} \\
	\mG^{(\text{\rm ABC})}_{N,L< \infty} (t)  &=  \frac{1}{L^N N!} \sum_{s_1=1}^{L} \cdots \sum_{s_N=1}^{L}  \left\lvert \Delta_{Sp(N)} (\theta (s_1), \dots , \theta (s_N) ) \right\rvert^2 ~ e^{- i 4 t \sum_{a=1}^{N} \cos \theta (s_a) } . \label{eq:GSp=discrlong}
\end{align}
\label{eq:G=MMdiscrlong}
\end{subequations}
		The discrete ensembles \eqref{eq:G=MMdiscrlong} admit a determinant presentation, as well. 
\begin{prop}
    For every $k \in \mathbb{Z}$ and $w \in \mathbb{C}$, let 
    \begin{equation*}
        \mathfrak{I}_k ( w ) := \frac{1}{L} \sum_{s=1}^{L} e^{- w \cos \theta (s)} e^{i k \theta (s) } .
    \end{equation*}
    Then 
    \begin{subequations} 
\begin{align}
	\mG^{(\text{\rm PBC})}_{N,L=\infty} (t) &=  \det_{1 \le a,b \le N } \left[ \mathfrak{I}_{a-b} (i 2t) \right],  \\
	\mG^{(\text{\rm ABC})}_{N,L=\infty} (t) &= \det_{1 \le a,b \le N } \left[   \mathfrak{I}_{a-b} (i4t) -  \mathfrak{I}_{a+b} (i4t) \right] .
\end{align}
\label{eq:discr=detlong}
\end{subequations}
\end{prop}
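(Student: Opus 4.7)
The plan is to apply Andreief's identity (Lemma \ref{lem:AI}) to each of the discrete ensembles \eqref{eq:GU=discrlong} and \eqref{eq:GSp=discrlong}, treating the normalized counting measure $\frac{1}{L}\sum_{s=1}^{L} \delta_{\theta - \theta(s)}$ as a discrete Borel measure $\dd \mu$ on the lattice $\mathscr{D} = \{\theta(s)\}_{s=1}^{L}$. The combinatorial factor $1/N!$ in \eqref{eq:G=MMdiscrlong} matches the one in Andreief's identity, the $1/L^N$ is absorbed by distributing $1/L$ into each of the $N$ integrations against $\dd \mu$, and the separability of the Gaussian-type weight $\prod_{a} e^{-i 2t \cos\theta(s_a)}$ (or $e^{-i4t\cos\theta(s_a)}$) allows us to move it into $\dd \mu$.

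For the PBC case I would first use the standard Vandermonde factorization
\begin{equation*}
    \left\lvert \Delta_{U(N)}(\theta_1, \ldots, \theta_N) \right\rvert^2 = \det_{1 \le a,b \le N}\left[ e^{i (a-1) \theta_b} \right] \cdot \det_{1 \le a,b \le N}\left[ e^{-i (a-1) \theta_b} \right],
\end{equation*}
and apply Lemma \ref{lem:AI} with $\Phi_{a-1}(\theta) = e^{i(a-1)\theta}$ and $\Psi_{b-1}(\theta) = e^{-i(b-1)\theta}$. The resulting $(a,b)$ entry is
\begin{equation*}
    \frac{1}{L}\sum_{s=1}^{L} e^{i (a-b) \theta(s)} e^{-i2t \cos\theta(s)} = \mathfrak{I}_{a-b}(i2t),
\end{equation*}
which proves the first identity.

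For the ABC case I would follow the same strategy, starting from the symplectic Weyl denominator
\begin{equation*}
    \Delta_{Sp(N)}(\theta_1, \ldots, \theta_N) \propto \det_{1 \le a,b \le N}\left[ 2\sin(a \theta_b) \right],
\end{equation*}
which follows by column operations on $\prod_{a<b}(\cos\theta_a - \cos\theta_b) \prod_{a} 2\sin\theta_a$ using Chebyshev polynomials of the second kind. Writing $2\sin(a\theta) = -i(e^{ia\theta} - e^{-ia\theta})$ and applying Lemma \ref{lem:AI} produces an $(a,b)$ entry proportional to
\begin{equation*}
    -\frac{1}{L}\sum_{s=1}^{L} \left( e^{ia\theta(s)} - e^{-ia\theta(s)}\right) \left( e^{ib\theta(s)} - e^{-ib\theta(s)}\right) e^{-i4t\cos\theta(s)} .
\end{equation*}
Expanding the product gives four terms of the form $\mathfrak{I}_{\pm(a \pm b)}(i4t)$, and the identity $\mathfrak{I}_{-k}(w) = \mathfrak{I}_{k}(w)$ (valid because $\cos$ is even and the lattice $\{\theta(s)\}_{s=1}^{L}$ is invariant under $\theta \mapsto -\theta \pmod{2\pi}$) collapses them to $2\bigl[\mathfrak{I}_{a-b}(i4t) - \mathfrak{I}_{a+b}(i4t)\bigr]$. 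The row-independent scalar prefactors are then removed by multilinearity of the determinant.

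The main technical hurdle I expect lies in the ABC case: one must carefully track how the powers of $2$ and $i$ arising from rewriting $\Delta_{Sp(N)}$ as a determinant of sines, the factors extracted from each row by multilinearity, the Weyl group order $\lvert \mathrm{W}_{Sp(N)}\rvert = 2^N N!$ implicit in the discretization of \eqref{eq:GSp=GWWlong}, and the overall normalization $1/(L^N N!)$ in \eqref{eq:GSp=discrlong} combine to produce exactly the clean determinant $\det\left[ \mathfrak{I}_{a-b}(i4t) - \mathfrak{I}_{a+b}(i4t) \right]$ asserted in the proposition. The PBC computation, by contrast, is a direct application of Andreief's identity and should be essentially routine.
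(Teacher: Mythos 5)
Your proposal is correct and follows essentially the same route as the paper, whose entire proof is the single sentence that the result is ``a direct application of Andreief's identity with measure $\mu (z) = e^{-it (z+z^{-1})} \frac{1}{L} \sum_{s=1} ^{L} \delta (z-e^{i \theta (s)} )$''; your Vandermonde factorizations into $\det[e^{\pm i(a-1)\theta_b}]$ and $\det[2\sin(a\theta_b)]$ and the use of $\mathfrak{I}_{-k}=\mathfrak{I}_k$ are exactly the steps that sentence leaves implicit. The constant-tracking you flag in the ABC case is a real (if routine) bookkeeping chore, but the paper does not carry it out either, so you have not omitted anything the authors supply.
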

\begin{proof}
    It is a direct application of Andreief's identity (Lemma \ref{lem:AI}) with measure 
    \begin{equation*} 
    \mu (z) = e^{-it (z+z^{-1})} \frac{1}{L} \sum_{s=1} ^{L} \delta (z-e^{i \theta (s)} ) .
    \end{equation*}
\end{proof}\par
\begin{rmk}
    The presentations of the echos as determinants in \eqref{eq:G=detlong} and \eqref{eq:discr=detlong} is especially convenient for numerical evaluation at finite $N,L$ and arbitrary time $t$.
\end{rmk}
        \medskip
		For practical purposes, instead of working with \eqref{eq:GSp=GWWlong}-\eqref{eq:GSp=discrlong}, when considering the ABC chain we use the change of variables $x_a = \cos \theta_a$, which produces 
		\begin{align}
			\mG^{(\text{\rm ABC})}_{N,L=\infty} (t)  & = \frac{1}{N!} \int_{[-1,1]^N}  \prod_{1 \le a < b \le N}  (x_a-x_b)^2 ~ \prod_{a=1}^{N} e^{-  i 4 t  x_a }  \sqrt{ 1- x_a^2 } ~ \frac{\dd x_a}{2 \pi } , \label{eq:GSPreallong} \\
			\mG^{(\text{\rm ABC})}_{N,L<\infty} (t) &=  \frac{1}{L^N N!} \sum_{x_1 \in \scS_L } \cdots \sum_{x_N \in \scS_L }  \prod_{1 \le a < b \le N}  (x_a-x_b)^2 ~ \prod_{a=1}^{N} e^{- i 4 t  x_a}  \sqrt{ 1- x_a^2 } , \label{eq:GSPrealdiscr}
		\end{align}
		where the finite-$L$ model has domain 
		\begin{equation*}
			 \scS_L = \left\{ x \in [-1,1] \ : \ \left( x+ i \sqrt{1-x^2} \right)^L = 1  \right\} .
		\end{equation*}
  The function in the definition of $\scS_L$ comes from the logarithmic form of the arccosine function, and then writing $z=e^{i \mathrm{ArcCos} (x)}$.\par
		\medskip
  \begin{rmk}[Finite echo in the thermodynamic limit]
      We stress that a very important consequence of the simple quench setup we consider is the finitude of the Loschmidt echo in the thermodynamic limit $L \to \infty$. This is in sharp contrast with most quench protocols, in which one typically has $\ln \mL (t) \approx - L \tilde{f} (t)$ at $L \gg 1$, for some $L$-independent function $\tilde{f}$ \cite{Heyl:2017blm}.
  \end{rmk}

\subsection{Determinants of Bessel functions}
Looking back at the expressions \eqref{eq:G=detlong} and plotting them as functions of $t$ for several values of $N \in \mathbb{N}$, we uncover the following pattern:
\begin{itemize}
    \item For $N\in 2 \mathbb{N}+1$, $\mL_{N,L} (t)$ decays as $e^{-t^2}$ near $t \gtrsim 0$, then vanishes at a given $t=t^{\text{QSL}}_N$ and shows damped oscillations from that point on;
    \item For $N \in 2\mathbb{N}$, $\mL_{N,L} (t)$ decays as $e^{-t^2}$ near $t \gtrsim 0$, then reduces its slope and has an asymptotic decay to zero with small oscillations on top of that trend.
\end{itemize}
(See also \cite{Krapivsky:2017sua}, where such behaviour was first noticed). We stress that the behaviour of $\mL_{N,L=\infty} (t)$, including the oscillations for $N$ odd, are a consequence of the analytic properties of the Bessel functions $J_{\nu} (2t)$:
		\begin{itemize}
			\item Exponentially decreasing near $t \gtrsim 0$;
			\item Oscillating at $t \gg 1$.
		\end{itemize}
		Crucial is their assembly in the Toeplitz determinant \eqref{eq:GU=Tdetlong} or the Toeplitz$+$Hankel determinant \eqref{eq:GSp=THdetlong}.\par
For the most part of the body of this note, we consider $N \in 2 \mathbb{N}$, that is, an even number of spins $\lvert \downarrow \rangle$. The discussion of $N \in 2\mathbb{N}+1$ is deferred to Section \ref{sec:Nodd}.\par
\medskip		
		As a side remark, we notice that the factors of $i^{a-b}$ (where $i = \sqrt{-1}$) in \eqref{eq:GU=Tdetlong} appear in such a way that 
		\begin{equation}
		\label{eq:BesselwithandwithoutI}
			\mL_N ^{\text{(PBC)}} (t) = \left\lvert \det_{1 \le a,b \le N} \left[ J_{a-b} (2t) \right] \right\rvert^2
		\end{equation}
		$\forall \ N \in \mathbb{N}$, that is, they drop out of the determinant, cf. Figure~\ref{fig:BelleJdets}.\par
		\begin{figure}[ht]
			\centering
				\includegraphics[width=0.45\textwidth]{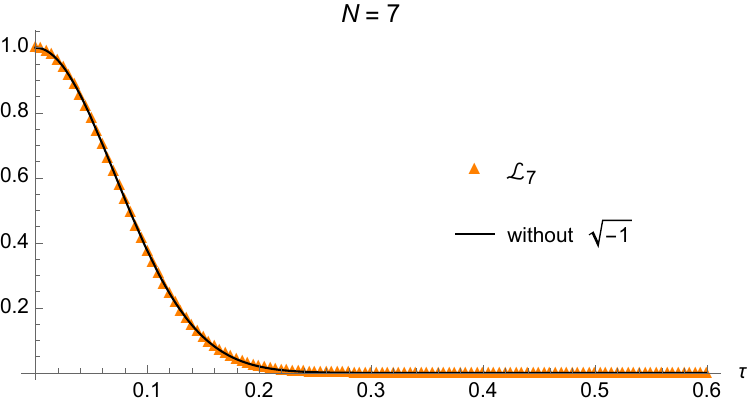}\hspace{0.05\textwidth}
				\includegraphics[width=0.45\textwidth]{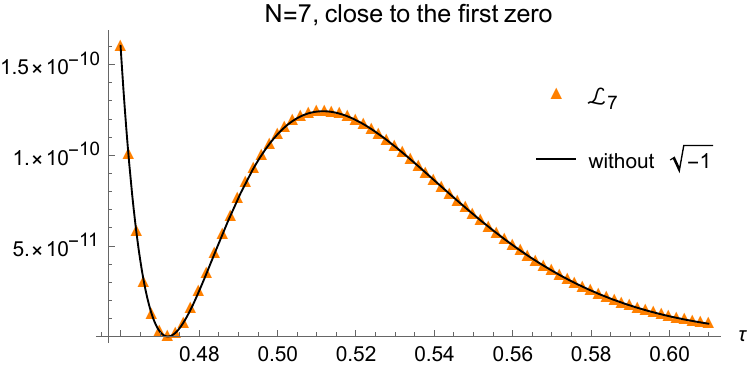}
			\caption{The identity \eqref{eq:BesselwithandwithoutI} at $N=7$.}
			\label{fig:BelleJdets}
			\end{figure}

\section{Dynamical quantum phase transition}
\label{sec:PT}
		In this section we discuss the large $N$ planar limit of the models \eqref{eq:GU=GWWlong}-\eqref{eq:GSPreallong}, and prove the existence of a third order dynamical quantum phase transition.\par
		For concreteness, we begin with the analysis of the $L=\infty$ chain, and come back to the experimentally accessible $L< \infty$ case in the next section.\par
  \medskip
  The content of this section is organized as follows.
  \begin{itemize}
    \item In Subsection \ref{sec:planarlimit} we introduce the planar limit and set up the problem at large $N$.
    \item Then, we take a detour to discuss the large $N$ third order phase transition taking place at imaginary time. In Subsection \ref{sec:PBCGWW} we review the large $N$ solution of $\mG_{N,\infty} ^{\text{(PBC)}} (-i \beta)$, which is the GWW model, and the associated third order phase transition. This is a review of known results, which we report here for completeness and to introduce the tools we will use in later subsections. After that, in Subsection \ref{sec:ABCGWW} we extend the result to the ABC chain in imaginary time, that is, to the large $N$ limit of $\mG_{N,\infty} ^{\text{(ABC)}} (-i \beta)$. The latter model was recently studied in \cite{Kimura:2022mavC}.
    \item We then proceed with the analysis of interest, with real time. We discuss the large $N$ planar limit of the chain with ABC in Subsection \ref{sec:DQPTABCLinfty} and with PBC in Subsection \ref{sec:DQPTPBCLinfty}. We will prove therein that the chain undergoes a third order DQPT. These two subsections form the core of the work and contain many of the main technical achievements.
  \end{itemize}
Because of their intrinsically more technical nature, each of the Subsections \ref{sec:PBCGWW} to \ref{sec:DQPTPBCLinfty} is segmented into several steps, to orient the reader through the computations.

\subsubsection*{Statement of the results}
    We will uncover a third order dynamical quantum phase transition in the planar large $N$ limit, $N,t \to \infty$ with $\tau= t/N$ fixed. The existence of this transition is insensitive to the choice of boundary conditions. Moreover, it will be manifest in the derivation that the phase transition we find \emph{is not} merely a rotation of the GWW phase transition from imaginary to real time. This is confirmed both by the critical value $\tau_{\text{cr}} = 0.33137 \dots$ and the explicit form of the dynamical free energy.\par
    Very much in the paradigm of dynamical quantum phase transitions \cite{Heyl:2017blm}, we observe a change in the analytic properties of the dynamical free energy in the complex plane.

  \subsection{Planar limit of matrix models}
  \label{sec:planarlimit}

  Let us begin by setting up large $N$ planar limit of the models \eqref{eq:GU=GWWlong}-\eqref{eq:GSPreallong} as well as their imaginary-time version. Standard reviews on the large $N$ limit of matrix models include \cite{DiFrancesco:1993C,Marino:2004C,Eynard:2015aeaC}. For unitary matrix models see e.g. \cite{Mandal:1989ryC,Jain:2013pyC,Santilli:2019wvq,Santilli:2020uehC}.\par
  To treat real and imaginary time at once, throughout the present subsection we define the quantities 
  \begin{subequations}
  \begin{align}
        \hat{\mG}_{N, \infty} ^{(\text{\rm PBC})} (w) &= \frac{1}{N!} \int_{[0, 2 \pi]^N}  \left\lvert \Delta_{U(N)} (\theta_1, \dots, \theta_N ) \right\rvert^2 ~ \prod_{a=1}^{N} \frac{\dd \theta_a}{2 \pi} e^{- w \sum_{k \ge 1} \frac{c_k}{k} \cos (k \theta_a) } , \label{eq:hatGPBC} \\
        \hat{\mG}_{N, \infty} ^{(\text{\rm ABC})} (w) &= \frac{1}{N!} \int_{[-1,1]^N}  \prod_{1 \le a<b \le N} (x_a-x_b)^2 ~ \prod_{a=1}^{N} \frac{\dd x_a}{2 \pi} \sqrt{1-x_a^2} ~e^{- 2 w \sum_{k \ge 1} \frac{c_k}{k} T_k (x_a) } , \label{eq:hatGABC}
  \end{align}   
  \label{eq:hatGxidef}
  \end{subequations}
  where $\left\{ c_k \right\}_{k \ge 1} $ is a finite collection of real coefficients and $\left\{ T_k \right\}_k$ are the Chebyshev polynomials of the first kind. For later reference, $\left\{ U_k \right\}_k$ will be the Chebyshev polynomials of the second kind. Here $w \in \C $ is a complex parameter, and the real- and imaginary-time dynamics are recovered substituting
  \begin{align*}
    \text{real time :}& \qquad w \mapsto i t \in i \mathbb{R}_{\ge 0}, \\
    \text{imaginary time :}& \qquad w \mapsto \beta \in \mathbb{R}_{\ge 0}.
  \end{align*}
  Moreover, the original models of interest correspond to the coefficients 
  \begin{equation*}
        c_k= 2 \delta_{k,1} . 
  \end{equation*}\par
  \medskip
  We consider the planar limit \cite{Brezin:1977sv} of the matrix models \eqref{eq:hatGxidef}. The planar limit is a large $N$ limit originally introduced by 't Hooft for the study of Quantum Chromodynamics \cite{tHooft:1973alwC} and has since then proved crucial in a wide variety of problems in high-energy physics and beyond. It owes its name to the fact that only planar Feynman diagrams are retained in this large $N$ limit.\par
We start by rewriting 
\begin{subequations}
		\begin{align}
			\hat{\mG}_{N, \infty} ^{(\text{\rm PBC})} (w) &= \frac{1}{N!} \int_{[0, 2 \pi]^N} \frac{\dd \theta_1 \cdots \dd \theta_N}{ (2\pi)^N} ~ \exp \left( - N^2 S_{\text{eff}} ^{(\text{\rm PBC})} (\theta_1, \dots, \theta_N) \right) , \\
   \hat{\mG}_{N, \infty} ^{(\text{\rm ABC})} (w) &= \frac{1}{N!} \int_{[-1,1]^N} \frac{\dd x_1 \cdots \dd x_N}{ (2\pi)^N} ~ \exp \left( - N^2 S_{\text{eff}} ^{(\text{\rm ABC})} (x_1, \dots, x_N) \right)
		\end{align}
  \label{eq:expintegrandMM}
  \end{subequations}
  where the ``effective actions'' are 
  \begin{align*}
      S_{\text{eff}} ^{(\text{\rm PBC})} (\theta_1, \dots, \theta_N) & = \frac{1}{N} \sum_{a=1} ^{N} \left[ \frac{w}{N} \left( \sum_{k \ge 1} \frac{c_k}{k} \cos (k \theta_a) \right) - \frac{1}{N} \sum_{b \ne a} \ln \left\lvert 2 \sin \left( \frac{\theta_a - \theta_b}{2} \right) \right\rvert \right] \\
       S_{\text{eff}} ^{(\text{\rm ABC})} (x_1, \dots, x_N) & = \frac{1}{N} \sum_{a=1} ^{N} \left[ \frac{2w}{N} \left( \sum_{k \ge 1} \frac{c_k}{k} T_k (x_a) \right) - \frac{1}{N} \ln \sqrt{1-x_a^2} - \frac{1}{N} \sum_{b \ne a} \ln \left\lvert x_a - x_b \right\rvert \right] .
  \end{align*}
  They are obtained by passing the Vandermonde determinants in the exponential.\par
  Observe that, in the limit $N \to \infty$, every sum over the indices $a,b=1, \dots, N$ grows linearly in $N$, thus the terms $\frac{1}{N} \sum_{a=1}^{N}$ have a well-defined large $N$ limit. For the two pieces in the effective action to compete at the same order in $N$, for $N \gg 1$, one is led to consider the \emph{'t Hooft scaling} of the parameter $w$, with 
  \begin{equation}
  \label{eq:generictHooft}
    \xi := \frac{w}{N} \text{ fixed as $N \to \infty$}.
  \end{equation}
\begin{rmk}
    In the original gauge theory framework, $w$ is inversely proportional to the squared Yang--Mills coupling $g_{\text{\tiny YM}}^2 $, so that $\frac{1}{\xi} \propto g_{\text{\tiny YM}}^2  N$ is what is customarily called \emph{'t Hooft coupling} in the gauge theory literature.\par
    For the imaginary-time dynamics, $w \mapsto \beta$, the planar limit corresponds to send the inverse temperature $\beta \to \infty$ linearly with $N$. In other words, the 't Hooft scaling intertwines the large $N$ and low temperature limits, instead of taking them independently.
\end{rmk}
We conclude that, with the scaling \eqref{eq:generictHooft}, the integrands in \eqref{eq:expintegrandMM} are suppressed at large $N$ as $e^{-N^2 S_{\text{eff}}} $, with the effective actions having a finite large $N$ limit. Therefore, in the large $N$ limit, the integrals \eqref{eq:expintegrandMM} are dominated by the stationary points of the integrands, which are the saddle points of the effective action.\par
This leads to the consideration of the system of $N$ saddle point equations:
\begin{subequations}
    \begin{align}
        \text{PBC :}& \qquad \qquad \frac{\partial  S_{\text{eff}} ^{(\text{\rm PBC})}}{\partial \theta_a} = 0 \qquad \forall \ a=1, \dots , N ; \\
         \text{ABC :}& \qquad \qquad \frac{\partial  S_{\text{eff}} ^{(\text{\rm ABC})}}{\partial x_a} = 0 \qquad \forall \ a=1, \dots , N.
    \end{align}
\end{subequations}
These are the Euler--Lagrange equations for \eqref{eq:expintegrandMM}. Explicitly, we have the saddle point equations 
\begin{align*}
   \text{PBC :}& \qquad \qquad  \frac{1}{N}\sum_{b \ne a} \cot \left( \frac{\theta_a - \theta_b}{2} \right)  = - \xi \sum_{k \ge 1} c_k \sin (k \theta_a) , \\
   \text{ABC :}& \qquad \qquad  \frac{2}{N}\sum_{b \ne a}  \frac{1}{x_a - x_b}  = 2 \xi \sum_{k \ge 1} c_k U_{k-1} (x_a) + \frac{1}{N} \frac{x_a}{\sqrt{1-x_a^2}} ,
\end{align*}
$\forall a= 1, \dots, N$. Note the factor of $2$ coming out of differentiating the double sum. In the PBC case, this factor of 2 cancels a factor $\frac{1}{2}$ from the derivative of the Vandermonde term.
\begin{rmk}
\label{remarkABCsub}
     Note that the square-root term in the ABC effective action is sub-leading at large $N$ and will not play a role in determining the saddle points. As a matter of fact, our results in the planar limit apply to any matrix model
     \begin{equation*}
         \frac{1}{N!} \int_{-1} ^{1} \frac{\dd x_1}{2\pi} \cdots \int_{-1} ^{1} \frac{\dd x_N}{2\pi} \prod_{1 \le a < b \le N} (x_a-x_b)^2 ~ \prod_{a=1}^{N} e^{-i4t x_a + \varphi (x_a) }
     \end{equation*}
     where $\varphi: [-1,1] \to \C$ 
     \begin{itemize}
         \item[($i$)] has no branch cuts on $(-1,1)$, and
         \item[($ii$)] has coefficients that do not depend on $N$.
     \end{itemize}
     As particular instances, this accounts for replacing the integral over $USp(2N)$ in \eqref{eq:GSp=GWWlong} with integration over $SO(2N)$ or $SO(2N+1)$.
\end{rmk}\par
\medskip
In the large $N$ limit, the discrete index $a=1,\dots,N$ can be effectively replaced by a continuous index 
\begin{equation*}
    \mathsf{a} = \frac{a}{N}, \qquad 0 <\mathsf{a} \le 1 .
\end{equation*}
This substitution implements the replacement 
\begin{equation*}
    \frac{1}{N} \sum_{a=1} ^{N} F (\theta_a)  \ \mapsto \ \int_0 ^1 \dd \mathsf{a} F (\theta (\mathsf{a})) ,
\end{equation*}
for arbitrary function $F$, where the $N$ eigenvalues are grouped into a function $\theta (\mathsf{a})$ defined according to 
\begin{equation*}
    \theta (\mathsf{a}) = \theta _{\lfloor \mathsf{a}N \rfloor } .
\end{equation*}
The procedure is exactly analogous for the substitution $x_a \mapsto x (\mathsf{a}) = x_{\lfloor \mathsf{a}N \rfloor }$ in the ABC case.\par
One then arrives at the saddle point equations $\forall \  0< \mathsf{a} \le 1$: 
\begin{subequations}
    \begin{align}
  \text{PBC :}& \qquad \qquad  \dashint_{0} ^1 \dd \mathsf{b} \cot \left( \frac{\theta (\mathsf{a})  - \theta (\mathsf{b}) }{2} \right)  = - \xi \sum_{k \ge 1} c_k \sin (k \theta \mathsf{a} ) , \\
   \text{ABC :}& \qquad \qquad  \dashint_0 ^1 \dd \mathsf{b}  \frac{1}{x (\mathsf{a}) - x (\mathsf{b}) } = \xi \sum_{k \ge 1} c_k U_{k-1} (x (\mathsf{a}) ),
\end{align}
\end{subequations}
up to $O(1/N)$ corrections, that are negligible in the large $N$ limit. The symbol $\dashint $ means the Cauchy principal value integral.\par
\medskip
Clearly, it is not handy to deal with a system of $N$ equations in the large $N$ limit. For this reason, it is customary to introduce the eigenvalue densities 
		\begin{align*}
			\rho^{(\text{\rm PBC})} (\theta) = \frac{1}{N}\sum_{a=1} ^{N} \delta (\theta - \theta_a) , \\
			\rho^{(\text{\rm ABC})} (x) = \frac{1}{N}\sum_{a=1} ^{N} \delta (x-x_a) ,
		\end{align*}
		(we will omit the superscript when no confusion can arise) which are normalized by definition and have compact support at large $N$. Besides, it follows from the definition together with the discussion above that
  \begin{equation}
    \lim_{N \to \infty} \rho^{(\text{\rm PBC})} (\theta) = \left( \frac{ \dd \theta (\mathsf{a}) }{\dd \mathsf{a}} \right)^{-1} .
    \label{eq:defrhoderivative}
  \end{equation}
  Thus, in practice, we read the above equation as $\rho^{(\text{\rm PBC})} (\theta)  \dd \theta = \dd \mathsf{a} $ and enforce the replacement 
  \begin{equation*}
    \int_{0} ^{1} \dd \mathsf{b} F(\theta (\mathsf{b})) \ \mapsto \ \int \dd \theta \rho^{(\text{\rm PBC})} (\theta) F(\theta) ,
  \end{equation*}
  and likewise for $\rho^{(\text{\rm ABC})} (x)$.\par
  Putting all the pieces together, we are finally led to the saddle point equations 
 \begin{subequations}
 \begin{align}
		\text{PBC:}& \qquad \qquad \dashint \dd \varphi \rho (\varphi) ~ \cot \left( \frac{\theta- \varphi}{2} \right) = - \xi \sum_{k \ge 1} c_k \sin (k\theta) , \label{eq:SPEgenericPBC} \\
		\text{ABC:}& \qquad \qquad \dashint \dd y \rho (y) ~ \frac{1}{x-y} = \xi \sum_{k \ge 1} c_k U_{k-1} (x) .  \label{eq:SPEgenericABC}
		\end{align}  
\label{eq:SPEgenericmodel}
\end{subequations}
These are integral equations to be solved for the eigenvalue densities $\rho^{(\text{\rm PBC})}$ and $\rho^{(\text{\rm ABC})}$. For the isotropic XY Heisenberg spin chain we are interested with, these equations reduce to 
\begin{subequations}
\begin{align}
		\text{PBC:}& \qquad \qquad \dashint \dd \varphi \rho (\varphi) ~ \cot \left( \frac{\theta- \varphi}{2} \right) = - 2 \xi \sin (\theta) , \label{eq:SPEPBC} \\
		\text{ABC:}& \qquad \qquad \dashint \dd y \rho (y) ~ \frac{1}{x-y} = 2 \xi . \label{eq:SPEABC}
		\end{align}
  \label{eq:SPEchain}
\end{subequations}
The substitutions $w = \beta$ and $w=it$ give the imaginary-time and real-time dynamics, respectively. Because we are dealing with the planar limit, we introduce the scaled quantities 
\begin{subequations}
    \begin{align}
        \text{imaginary time :}& \qquad \qquad \gamma := \frac{\beta}{N} \text{ fixed}; \\
        \text{real time :}& \qquad \qquad \tau := \frac{t}{N} \text{ fixed}.
    \end{align}
\end{subequations}\par
To conclude the computation, we recall from \eqref{eq:expintegrandMM} that, at leading order in the large $N$ limit,
\begin{equation*}
    \hat{\mG}_{N, \infty} (w= \xi N) \approx e^{-N^2 S_{\text{eff}} [\rho]} ,
\end{equation*}
for either choice of boundary condition. In the latter expression, $S_{\text{eff}} [\rho]$ means the effective action evaluated on the eigenvalue density that solves the pertinent equation in \eqref{eq:SPEgenericmodel}. Therefore, it follows by construction of the planar limit that 
\begin{equation*}
    \lim_{N \to \infty} \frac{1}{N^2} \ln \hat{\mG}_{N, \infty} (w= \xi N) = S_{\text{eff}} [\rho] ,
\end{equation*}
with the right-hand side finite.
\begin{rmk}[Normalization of the free energy]
    It follows from this general argument that, in the planar limit, the matrix models we consider behave as $\ln \mG (t) \propto N^2 f(\tau)$, with $f(\tau)$ a continuous function of $\tau=t/N$, independent of $N$. Therefore, we normalize the dynamical free energy of the system according to 
		\begin{equation}
		\label{eq:defDFEtaulargeN}
			f (\tau) = - \frac{1}{2 N^2} \ln \mL_{N,\infty} (t) ,
		\end{equation}
		in agreement with the standard normalization in the random matrix theory literature.\par
    With the aim of reducing confusion among imaginary-time and real-time quantities, we define for imaginary time 
			\begin{equation}
   \label{eq:defGWWFE}
				\mathcal{F} (\gamma) = \lim_{N \to \infty} \frac{1}{N^2} \ln  \mG_{N, \infty} (-i \beta) ,
			\end{equation}
   where $\gamma = \beta/N$ and 
   \begin{equation*}
    \mG_{N, \infty} (-i \beta) = \langle \psi_0 \lvert e^{-\beta H_{\text{XY}}} \rvert \psi_0 \rangle .
   \end{equation*}
	Note the opposite sign in \eqref{eq:defGWWFE} with respect to the dynamical free energy \eqref{eq:defDFEtaulargeN}.
\end{rmk}
Before solving the saddle point equations \eqref{eq:SPEchain} for both imaginary and real time and any choice of boundary conditions, let us note the following.
\begin{lem}
    Let $\rho (z) $ be the eigenvalue density associated to a matrix model with bounded integration domain $\mathscr{D}$. Then 
    \begin{equation*}
        \rho (z) \ge 0 \qquad \forall z \in \mathscr{D} .
    \end{equation*}
\end{lem}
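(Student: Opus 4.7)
The plan is to read positivity directly off the finite-$N$ definition of the empirical eigenvalue measure and then propagate it to the continuum limit, since the lemma is essentially a tautology once the density is viewed as the weak-$*$ limit of a sequence of probability measures on a bounded domain.

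For any finite $N$, the empirical measure
\begin{equation*}
\rho_N(z) := \frac{1}{N} \sum_{a=1}^{N} \delta(z - z_a)
\end{equation*}
is a convex combination of Dirac masses with strictly positive weights $1/N$, and hence a positive Borel probability measure on $\mathscr{D}$. Equivalently, for any continuous $\phi \ge 0$ on $\mathscr{D}$ one has $\int_{\mathscr{D}} \phi \, d\rho_N = \frac{1}{N} \sum_{a} \phi(z_a) \ge 0$. The density $\rho$ is by construction the weak-$*$ limit of the sequence $(\rho_N)_{N \in \N}$. Because $\mathscr{D}$ is bounded, this sequence is tight and, by Prokhorov's theorem, possesses weak-$*$ accumulation points in the space of Radon measures on $\overline{\mathscr{D}}$. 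The cone of positive measures is closed under weak-$*$ limits: if $\rho_{N_k} \rightharpoonup \rho$ and $\phi \ge 0$ is continuous, then $\int \phi \, d\rho = \lim_k \int \phi \, d\rho_{N_k} \ge 0$. Hence $\rho$ is itself a positive Radon measure of unit mass, and whenever it is absolutely continuous with respect to Lebesgue (or arc-length) measure, its density satisfies $\rho(z) \ge 0$ on $\mathscr{D}$.

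A more pedestrian alternative uses the identification \eqref{eq:defrhoderivative} directly: ordering the eigenvalues so that the map $\mathsf{a} \mapsto \theta(\mathsf{a})$ is non-decreasing on $(0,1]$ gives $d\theta/d\mathsf{a} \ge 0$, and therefore its reciprocal $\rho = (d\theta/d\mathsf{a})^{-1}$ is also non-negative wherever it is finite. The only substantive hypothesis being used in either argument is boundedness of $\mathscr{D}$, which is what prevents mass from escaping and guarantees tightness; no deeper machinery is required, so this step is the lightest of the preparatory statements and presents no real obstacle. I would however flag the standing assumption, implicit in the planar-limit formalism, that the weak-$*$ limit of $\rho_N$ actually exists and is absolutely continuous with respect to arc-length measure on the supporting contour, since for the complex-weight models considered in later subsections the support is a curve $\Gamma \subset \C$ rather than a real interval or $U(1)$; on such a curve the same argument applies after parametrization by arc length.
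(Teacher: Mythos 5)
Your proposal is correct, and in fact contains two arguments. The second, ``pedestrian'' one is essentially the paper's own proof: the paper also works from the characterization \eqref{eq:defrhoderivative}, ordering the eigenvalues so that $\mathsf{a}\mapsto z(\mathsf{a})$ is monotone and using the boundedness of $\mathscr{D}$ to bound the reciprocal difference quotient $\lvert \mathsf{a}-\mathsf{b}\rvert / \lvert z(\mathsf{a})-z(\mathsf{b})\rvert$ below by $1/(Nd)>0$ with $d$ the diameter of $\mathscr{D}$, then letting $\mathsf{b}\to\mathsf{a}$ and $N\to\infty$ to conclude $\rho\ge 0$. Your first argument is a genuinely different and more robust route: positivity of the empirical measures $\rho_N$ plus closedness of the positive cone under weak-$*$ limits, with boundedness entering only through tightness. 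That version buys generality (it does not require the map $\mathsf{a}\mapsto z(\mathsf{a})$ to be differentiable, only that the limit measure exist and be absolutely continuous), at the cost of invoking Prokhorov rather than a one-line difference-quotient estimate. Your closing caveat is well taken and consistent with how the paper actually uses the lemma: positivity is imposed only for the real-weight (imaginary-time) ensembles, where it serves to discard spurious solutions of \eqref{eq:SPEchain}; for the complex-weight models the density on $\Gamma\subset\C$ is complex-valued and positivity is replaced by the reality condition \eqref{eq:realitycondGamma}, so the lemma is not meant to apply there.
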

\begin{proof}
    The boundedness of $\mathscr{D}$ by definition implies 
    \begin{equation*}
        d:= \sup_{z_1,z_2 \in \mathscr{D}} \lvert z_1-z_2 \rvert \qquad  0 \le d < \infty .
    \end{equation*}
    Therefore 
    \begin{equation*}
        \frac{ \lvert z_a - z_b \rvert }{\lvert a-b \rvert } \le  \lvert z_a - z_b \rvert \le d  
    \end{equation*}
    which, replacing $a= \mathsf{a} N$ and rearranging, yields 
    \begin{equation*}
        \frac{\lvert \mathsf{a} - \mathsf{b} \rvert }{\lvert z(\mathsf{a} )- z(\mathsf{b}) \rvert } \ge \frac{1}{Nd} >0 .
    \end{equation*}
    Sending $N \to \infty$ and $\mathsf{b} \to \mathsf{a}$ using the characterization \eqref{eq:defrhoderivative}, we conclude that $\rho (z) \ge 0$.
\end{proof}
Because both matrix models we consider have compact support, the eigenvalue densities $\rho ^{(\text{\rm PBC})} (\theta) $ and $\rho ^{(\text{\rm ABC})} (x) $ must be non-negative definite. Solutions to \eqref{eq:SPEchain} that do not satisfy such constraint are to be discarded.

	\subsection{Review of the GWW third order phase transition}
	\label{sec:PBCGWW}
		The celebrated Gross--Witten--Wadia (GWW) model \cite{Gross:1980he,Wadia:1980cp,Wadia:2012fr} is the imaginary-time version of \eqref{eq:GU=GWWlong}, equivalently \eqref{eq:hatGPBC} with $w=\beta$ and $c_k = 2 \delta_{k,1}$. It was originally devised in the study of two-dimensional gauge theory on the lattice. The term $\exp \left[ - \beta \mathrm{Tr} (U + U^{-1}) \right]$ was derived from the Wilson lattice action for a one-plaquette model of lattice two-dimensional gauge theory \cite{Gross:1980he}.\par
		In this subsection we review the derivation of its third order phase transition, as a warm up to present the saddle point techniques. The reader familiar with the subject may move directly to the next subsection.\par
		\medskip
		To begin with, we make a change of variables $\theta_a \mapsto \theta_a - \pi$ for later convenience, which shifts the integration domain to $[-\pi, \pi]^N$ and reflects $\beta \mapsto - \beta$. According to the general discussion of Subsection \ref{sec:planarlimit}, the planar limit of the model is governed by the saddle point equation 
		\begin{equation}
		\label{eq:SPEGWW}
			\dashint_{-\pi} ^{\pi} \dd \varphi ~ \rho (\varphi) \cot \left( \frac{\theta- \varphi}{2} \right) = 2 \gamma \sin \theta  ,
		\end{equation}
		where we recall that $\gamma = \beta/N$ plays the role of the inverse 't Hooft coupling.

		\subsubsection*{Statement of the result for the eigenvalue density}
		The solution to \eqref{eq:SPEGWW} is given by \cite{Gross:1980he,Wadia:2012fr}
		\begin{equation}
				\rho (\theta) = \begin{cases} \frac{1}{2\pi} \left[  1 + 2 \gamma \cos \theta \right] \id_{-\pi \le \theta \le \pi} & \gamma \le \frac{1}{2} , \\ \frac{2 \gamma}{\pi} \cos \frac{\theta}{2} \sqrt{ \frac{1}{2 \gamma}  - \left( \sin \frac{\theta}{2} \right)^2 } \id_{-\theta_0 \le \theta \le \theta_0 } & \gamma > \frac{1}{2} \end{cases}
			\label{eq:completesolrhoGWW}
			\end{equation}
			where $\id$ is the indicator function and $\theta_0$ in the phase $\gamma > \frac{1}{2}$ is the positive angle with $\sin \frac{\theta_0}{2} = \sqrt{1/(2 \gamma )}$.
		
		\subsubsection*{Solution in the first phase}
		In order to solve \eqref{eq:SPEGWW} for $\rho (\theta)$, we begin with the ansatz
		\begin{equation}
		\label{eq:ansatz1}
		    \text{supp} \rho = [-\pi, \pi ] .
		\end{equation}
		Plugging the expansion 
		\begin{equation*}
		    \cot \left( \frac{\theta - \varphi}{2} \right) = 2 \sum_{k=1} ^{\infty } \left[ \cos (k\theta) \sin (k \varphi) - \sin (k\theta) \cos (k \varphi)  \right] ,
		\end{equation*}
		valid for $\varphi \ne \theta $, in \eqref{eq:SPEGWW} we get 
		\begin{equation*}
		    \frac{1}{\pi}\dashint_{-\pi} ^{\pi} \dd \varphi ~ \rho (\varphi) \sum_{k=1} ^{\infty } \left[ \cos (k\theta) \sin (k \varphi) - \sin (k\theta) \cos (k \varphi)  \right] =  \frac{\gamma}{\pi} \sin \theta .
		\end{equation*}
		Using the ansatz \eqref{eq:ansatz1} we can perform the integrals on the left-hand side and get a generating function for the Fourier coefficients of $\rho (\theta)$. Equating with the right-hand side one finds 
		\begin{equation}
		\label{eq:GWWrhoPI}
		    \rho (\theta) = \frac{1}{2\pi} +  \frac{\gamma}{\pi} \cos \theta .
		\end{equation}
		The $0^{\text{th}}$ Fourier coefficient is not fixed by the saddle point equation, but rather is fixed by normalization together with the ansatz \eqref{eq:ansatz1}.\par
		It is straightforward to extend the solution thus obtained to the saddle point equation \eqref{eq:SPEgenericPBC}, finding 
		\begin{equation*}
		    \rho (\theta) = \frac{1}{2\pi} \left[ 1 + 2 \gamma \sum_{k \ge 1} c_k \cos (k \theta) \right] .
		\end{equation*}\par
		The solution \eqref{eq:GWWrhoPI} is subject to the constraint 
		\begin{equation*}
		    \rho (\theta) \ge 0 \qquad \forall -\pi \le \theta \le \pi ,
		\end{equation*}
		which is violated around $\theta = \pm \pi$ for $\gamma > \frac{1}{2}$, and around $\theta=0$ for $\gamma < - \frac{1}{2}$. Because $\gamma$ is a scaled inverse temperature, we only consider the physically meaningful region $\gamma > 0$. Therefore, at $\gamma > \frac{1}{2}$, the solution \eqref{eq:GWWrhoPI}is invalid, indicating a phase transition. We must drop the assumption \eqref{eq:ansatz1} and look for a different eigenvalue density.
		
		\subsubsection*{Solution in the second phase}
		Giving up on the assumption \eqref{eq:ansatz1}, we then make the ansatz that $\rho (\theta) $ is supported on (one or more) arc(s) on the unit circle.\par
		To this aim, it is standard procedure to introduce the \emph{planar resolvent} 
		\begin{equation}
		\label{eq:defplanarres}
			\omega (z) = \int \frac{\dd u}{u} ~ \varrho (u) \frac{z+u}{z-u} , \qquad z \in \mathbb{C} \setminus \text{supp} \varrho ,
		\end{equation}
		where $\text{supp} \varrho = \Gamma \subset \mathbb{S}^1$ and $\varrho (e^{i \theta})= \rho (\theta)$. By construction, 
		\begin{equation}
		\label{eq:rhofromomega}
			4 \pi \rho( \theta) = \omega_{+} (e^{i \theta} ) - \omega_{-} (e^{i \theta} ) ,
		\end{equation}
		where we are adopting the standard shorthand notation 
		\begin{equation*}
		    \omega_{\pm} (z) := \lim_{\varepsilon \to 0^{+}}  \omega (e^{i \theta} \pm i \varepsilon ) . 
		\end{equation*}
		Then, expressing \eqref{eq:SPEGWW} in terms of the exponential variable $z= e^{i \theta} \in \Gamma$, yields 
		\begin{equation}
		\label{eq:jumpSPEomega}
		    \frac{1}{2} \left[ \omega_{+} (z) + \omega_{-} (z) \right] = - i \gamma \left( z + z^{-1} \right) , \qquad z \in \Gamma .
		\end{equation}
		This is a jump equation for $\omega (z)$ along $\Gamma$, supplemented with the condition 
		\begin{equation*}
		    \lim_{\lvert z \rvert  \to \infty} \omega(z) = 1 ,
		\end{equation*}
		which follows from the normalization of the eigenvalue density together with the definition \eqref{eq:defplanarres}. We have thus set up a Riemann--Hilbert problem, whose solution governs the planar limit of the GWW model in the phase $\gamma > \frac{1}{2}$.\par
		The solution for $\omega (z)$ is derived in a standard way. The jump equation \eqref{eq:jumpSPEomega} implies a contour integral representation for the planar resolvent, with integration contour encircling $\Gamma$. The final result then follows from standard contour manipulations, see e.g. \cite{Jain:2013pyC,Santilli:2019wvq,Santilli:2020uehC} for the details.\par
		
		\subsubsection*{Free energy}
			Equipped with the solution \eqref{eq:completesolrhoGWW} for the eigenvalue density, we can compute $\mathcal{F} (\gamma) $ as defined in \eqref{eq:defGWWFE}. More precisely, it is convenient to compute 
			\begin{equation*}
				\frac{ \dd \mathcal{F}^{\text{(PBC)}} }{ \dd \gamma } = 2 \int \dd \theta \rho (\theta ) \cos (\theta) = \begin{cases}  2 \gamma & 0 \le \gamma \le \frac{1}{2} \\ 2- \frac{1}{2\gamma} & \gamma \ge \frac{1}{2} . \end{cases} 
			\end{equation*}
			The result is that $\mathcal{F}^{\text{(PBC)}} (\gamma)$ is continuous and twice differentiable, but it fails to be smooth at $\gamma= \frac{1}{2}$. This is the third order GWW phase transition.

	\subsection{A GWW transition for the ABC chain in imaginary time}
	\label{sec:ABCGWW}
		We now extend the above result to the imaginary-time version of the ABC. 
	\subsubsection*{Statement of the result}	
	The outcome of the analysis is: the model has a third order transition at the same critical value $\gamma_{\text{cr}}= \frac{1}{2}$ as the imaginary-time PBC chain (i.e. the original GWW model). Moreover
		\begin{equation}
		\label{eq:ABCequals2PBC}
			\mathcal{F}^{\text{(ABC)}} = 2 \mathcal{F}^{\text{(PBC)}} .
		\end{equation}
		The result has been recently obtained in \cite{Kimura:2022mavC} using slightly different methods.\par
		
	\subsubsection*{Solution in the first phase}
		Taking the large $N$ limit \emph{after} the change of variables $x_a=\cos \theta_a$ yields the imaginary-time version of \eqref{eq:SPEABC}:
		\begin{equation}
		\label{eq:SPEABCGWWlong}
			\dashint \dd y \rho (y) \frac{1}{x-y} = - 2 \gamma .
		\end{equation}
		In fact, we can easily solve the more general problem 
		\begin{equation*}
			\dashint_{-1}^{1} \dd y ~ \rho (y) \frac{1}{x-y} = - \gamma \sum_{k \ge 1} c_k U_{k-1} (x) ,
		\end{equation*}
		where $U_{k-1}$ are the Chebyshev polynomials of second kind. From the Stieltjes transform of the Chebyshev polynomials 
		\begin{equation*}
			\dashint_{-1}^{1} \dd y  \frac{T_k (y)}{\pi (x-y)\sqrt{1-y^2}} = -  U_{k-1} (x) 
		\end{equation*}
		we infer 
		\begin{equation*}
			\rho (x) = \frac{1}{\pi \sqrt{1-x^2}} \left[ 1+ \gamma \sum_{k \ge 1} c_k T_k (x) \right] .
		\end{equation*}
		Specializing to the case of interest, we have 
		\begin{equation}
		\label{eq:rhoPISpNGWW}
			\rho (x) = \frac{1 +2 \gamma x }{\pi \sqrt{1-x^2}} , \qquad -1 \le x \le 1 .
		\end{equation}
		Again, the solution holds for $0 \le \gamma \le \frac{1}{2}$. Beyond the critical value, the solution ceases to be non-negative close to the edge $x=-1$. Therefore, the GWW transition becomes a hard-edge to soft-edge transition in this setup.
		
	\subsubsection*{Solution in the second phase}
		We look for a so-called ``soft-edge'' density, in which the inverse square-root singularity near $x=-1$ is replaced by a square-root behaviour near a boundary $A>-1$.\par
		The solution to \eqref{eq:SPEABCGWWlong} under these circumstances is known (see e.g. \cite{DeiftBookC}) and goes under the name of Tricomi's formula \cite{TricomiC}. The final answer in the case of interest reads 
		\begin{equation}
			\rho (x)= \begin{cases} \frac{1 +2 \gamma x }{\pi \sqrt{1-x^2}}  \id_{-1 < x < 1 } & 0 \le \gamma \le \frac{1}{2} \\ \frac{2 \gamma}{\pi} \sqrt{\frac{x-A}{1-x}} \id_{A \le x \le 1 } & \gamma > \frac{1}{2} ,\end{cases}
			\label{eq:completesolrhoABCGWW}
		\end{equation}
		where $A= \frac{\gamma-1}{\gamma}$ is fixed by normalization. 
	\begin{rmk}
	The eigenvalue density \eqref{eq:completesolrhoABCGWW} is precisely the image of the GWW density \eqref{eq:completesolrhoGWW}, upon restriction to $[0,\pi]$ and mapping $\cos \theta = x$.
	\end{rmk}
	
	\subsubsection*{Free energy}
		Armed with the eigenvalue density \eqref{eq:completesolrhoABCGWW} we can easily compute $\frac{\dd \mathcal{F}^{\text{(ABC)}} }{\dd \gamma }$, as above. Direct integration gives 
		\begin{equation*}
			\mathcal{F}^{\text{(ABC)}} = 2 \mathcal{F}^{\text{(PBC)}} 
		\end{equation*}
		as advertised in \eqref{eq:ABCequals2PBC} and in agreement with \cite{Kimura:2022mavC}.\par
		In the first phase, the relation \eqref{eq:ABCequals2PBC} is a corollary of Johansson's  extension of Szeg\H{o} strong limit theorem \cite{Johansson:1997C}. Our result is to prove it beyond the phase transition.
		\begin{thm}[\cite{Szegoth,Johansson:1997C}]\label{thm:Szego}
		    Let $\left\{ c_k \right\}_{k \ge 1}$ be a collection of coefficients satisfying 
		    \begin{equation*}
		        \sum_{k\ge 1} \lvert c_k \rvert < \infty \qquad \text{and} \qquad \sum_{k\ge 1} k \lvert c_k \rvert^2 < \infty ,
		    \end{equation*}
		    and define 
		    \begin{align*}
		    \mathcal{Z}_N ^{(\text{\rm PBC})} & := \int_{U(N)} \dd U ~\exp \left[ \sum_{k\ge 1} c_k \mathrm{Tr} (U^k + U^{-k}) \right] , \\
		        \mathcal{Z}_N ^{(\text{\rm ABC})} & := \int_{USp(2N)} \dd U ~\exp \left[ \sum_{k\ge 1} c_k \mathrm{Tr} (U^k + U^{-k}) \right] .
		    \end{align*}
		    It holds that 
		    \begin{align*}
		        \lim_{N \to \infty}  \ln  \mathcal{Z}_N ^{(\text{\rm PBC})} & = \sum_{k \ge 1} k \lvert c_k \rvert ^2 , \\
		        \lim_{N \to \infty} \ln  \mathcal{Z}_N ^{(\text{\rm ABC})} & = 2 \sum_{k \ge 1} k \lvert c_k \rvert ^2  .
		    \end{align*}
		\end{thm}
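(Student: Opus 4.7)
The plan is to reduce both limits to (generalized) strong Szeg\H{o} theorems for determinants associated with classical Lie groups. I will first convert each partition function into a determinant. Using Weyl's integration formula together with Andr\'eief's identity (Lemma \ref{lem:AI}), with the symbol $f(e^{i\theta}) = \exp\!\bigl[2\sum_{k\ge 1} c_k \cos(k\theta)\bigr]$ and Fourier coefficients $\hat f_{k}$, I obtain
\begin{align*}
\mathcal{Z}_N^{(\text{PBC})} &= \det_{1 \le a,b \le N}\!\left[\hat f_{a-b}\right], \\
\mathcal{Z}_N^{(\text{ABC})} &= \det_{1 \le a,b \le N}\!\left[\hat f_{a-b} - \hat f_{a+b}\right].
\end{align*}
The hypotheses $\sum_{k\ge 1} |c_k| < \infty$ and $\sum_{k\ge 1} k|c_k|^2 < \infty$ are exactly the statement that $\log f$ lies in the Krein algebra $\ell^1(\mathbb{Z}) \cap H^{1/2}(U(1))$, which is the natural regularity class for the limit theorems I intend to use.

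For the PBC case, the classical strong Szeg\H{o} limit theorem asserts that, under these hypotheses, the Toeplitz determinant satisfies
\begin{equation*}
\det T_N(f) \sim G(f)^N \, \exp\!\left(\sum_{k \ge 1} k\, \widehat{\log f}_{k}\, \widehat{\log f}_{-k}\right)
\end{equation*}
as $N\to\infty$, where $G(f) = \exp(\widehat{\log f}_0)$. Because $\log f(e^{i\theta}) = 2\sum_{k\ge 1} c_k\cos(k\theta)$ has vanishing zeroth Fourier coefficient and $\widehat{\log f}_{\pm k} = c_{k}$ for $k\ge 1$, the prefactor $G(f)^N$ equals $1$ and the correction collapses to $\exp(\sum_{k\ge 1}k|c_k|^2)$. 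Taking logarithms delivers the first identity.

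For the ABC case, the key observation is that the eigenvalues of $U \in USp(2N)$ arrange in conjugate pairs $\{e^{i\theta_j}, e^{-i\theta_j}\}_{j=1}^{N}$, so that the symbol effectively acts on $2N$ points constrained to be symmetric about $U(1)\cap\mathbb{R}$. I will then invoke the Toeplitz$+$Hankel strong-limit theorem that underlies Johansson's extension \cite{Johansson:1997C}: for even symbols $f(z) = f(z^{-1})$ in the Krein algebra, $\det(T_N(f) - H_N(f)) \sim G(f)^N \, E_{\mathrm{sp}}(f)$ with $\log E_{\mathrm{sp}}(f) = 2\sum_{k\ge 1}k|c_k|^2$. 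The factor of $2$ is the direct analytic counterpart of the conjugate pairing of symplectic eigenvalues, or equivalently of the fact that $\mathrm{Tr}(U^k)$ on $USp(2N)$ converges (in the central limit sense of \cite{Johansson:1997C}) to a Gaussian with twice the variance of its $U(N)$ counterpart.

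The main obstacle is the ABC case: while PBC follows verbatim from the 1952 strong Szeg\H{o} theorem, the Toeplitz$+$Hankel asymptotics are considerably more delicate because the symmetry $f(z) = f(z^{-1})$ and the Hankel piece conspire to produce the doubling constant, and one must verify that the Krein regularity hypothesis transfers correctly across the group-theoretic ``folding'' that relates $USp(2N)$ to a symmetric-configuration sector of $U(2N)$. Once this input is in place, the identification $\log E_{\mathrm{sp}}(f) = 2 \log E(f)$ is immediate and yields the second identity, consistent with the planar relation $\mathcal{F}^{(\text{ABC})} = 2\mathcal{F}^{(\text{PBC})}$ of \eqref{eq:ABCequals2PBC} beyond the coefficient-wise level.
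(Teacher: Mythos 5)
Your proposal is correct and follows essentially the same route as the paper: the paper's proof simply cites Szeg\H{o}'s strong limit theorem for the $U(N)$ case and Johansson's extension to the classical groups (with a normalization caveat) for the $USp(2N)$ case, which is exactly the pair of inputs you invoke after the (standard) Weyl--Andr\'eief reduction to Toeplitz and Toeplitz$+$Hankel determinants. The extra detail you supply --- the Krein-algebra identification of the hypotheses and the vanishing of $G(f)^N$ --- is a faithful unpacking of the same argument rather than a different one.
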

		\begin{proof}
		The $U(N)$ part of the theorem dates back to Szeg\H{o} \cite{Szegoth}. For its extension to other classical Lie groups, in particular $USp(2N)$, see \cite{Johansson:1997C,Garcia-Garcia:2019uveC}, but be aware of the difference in normalization.\par
		Szeg\H{o}'s theorem and its extension consider the strict $N \to \infty$ limit, with all other parameters fixed. The agreement with the value in the first phase follows by real-analyticity arguments, see \cite{Santilli:2019wvq} for an explicit proof.
		\end{proof}

	\subsubsection*{Folding trick}
		The spin chain model we are considering endows us with a visual, albeit non-rigorous, argument for the factor of 2 in formula \eqref{eq:ABCequals2PBC}.\par
		Imagine to begin with a chain with PBC and $2N+1$ flipped spins, yielding a $U(2N+1)$ matrix integral. Because of the $N^2$-scaling, we have 
		\begin{equation*}
			\mathcal{F}^{\text{(PBC)}} \vert_{N \mapsto 2N+1} = 2^2 \mathcal{F}^{\text{(PBC)}} \vert_{N} 
		\end{equation*}
		at leading order in $N$. Then, we remove the middle one among the $2N+1$ qubits and replace it with an absorbing wall. At the same time, we cut open the chain at the antipodal point, thus obtaining two copies of a chain with $N$ flipped spins and ABC on the left, OBC on the right. Because the two effects of introducing boundaries/interfaces are expected to be sub-leading in $N$, we predict 
		\begin{equation*}
			\mathcal{F}^{\text{(PBC)}} \vert_{N \mapsto 2N+1} = 2 \mathcal{F}^{\text{(ABC)}} \vert_{N } .
		\end{equation*}
		Combining the last two equations we obtain \eqref{eq:ABCequals2PBC}, which is of course only valid at leading order in the planar limit. It is indeed known \cite{Johansson:1997C}, and in agreement with our argument, that the ABC model has $1/N$ corrections, as opposed to the $1/N^2$ corrections in the PBC model.\par
		We emphasize that the spin chain argument predicting \eqref{eq:ABCequals2PBC} is general and does not rely on the specific couplings of the model, as long as it admits a description in terms of matrix integrals. As a matter of fact, the relation holds in the real-time dynamics as well.\par
		\medskip
		To conclude the digression into imaginary-time dynamics and the associated third order phase transition, we compare in Figure~\ref{fig:thermalGWW} the finite $N$ results with the asymptotic formula. The plot shows the perfect agreement of our asymptotic expressions, derived analytically, with the numerical exact result at large but finite $N$.
		\begin{figure}[ht]
			\centering
				\includegraphics[width=0.45\textwidth]{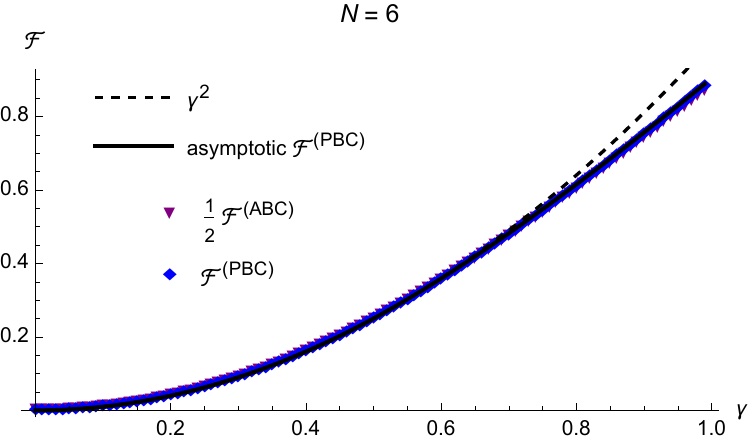}\hspace{0.08\textwidth}
				\includegraphics[width=0.45\textwidth]{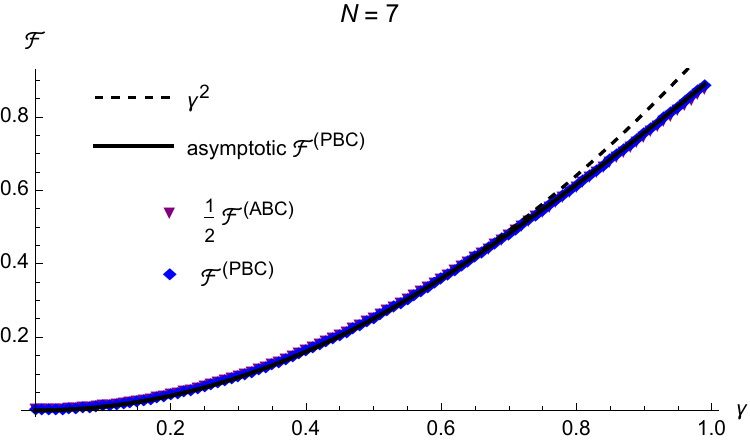}
			\caption{Plot of $\mathcal{F}=\frac{1}{2N^2} \ln \langle \psi_0 \lvert e^{- \beta H_{\text{XY}}} \rvert \psi_0 \rangle  $ at $N=6$ (left) and $N=7$ (right).}
			\label{fig:thermalGWW}
			\end{figure}
	
	\subsection{Dynamical quantum phase transition: ABC}
    \label{sec:DQPTABCLinfty}
		At this stage we can go back to the main subject: solving the real-time dynamics at leading order in $N \gg 1$. We begin from the chain with ABC and solve \eqref{eq:SPEABC} with $\xi = i \tau $ first, that is 
		\begin{equation}
		\label{eq:SPEABCit}
		    \dashint \dd y \frac{\rho (y) }{x-y} = i 2 \tau 
		\end{equation}
		The chain with PBC is discussed in the next subsection.\par
		Because the right-hand side of \eqref{eq:SPEABCit} is imaginary, the standard techniques do not apply, as it is obvious that they would produce a real-valued left-hand side. We therefore need to extend the methods utilized above to account for the imaginary values of the parameters in the effective action.\par
		We begin with a bird-eye discussion of complex actions and their saddle points, and then go on and solve the planar limit of the Loschmidt echo for the chain with ABC.\par
		\begin{rmk}[A minus sign]\label{rmk:sign}
		To avoid dragging cumbersome factors of $(-1)$ in the ensuing computations, we make a change of variables $\tilde{x}_a=-x_a$ in the matrix model \eqref{eq:GSPreallong} (and drop the tilde henceforth). In practice, it is tantamount to solve \eqref{eq:SPEABCit} with replacement $\tau \mapsto - \tau $.
		\end{rmk}

		\subsubsection*{Complex saddles}
		We have noted that the saddle point equation \eqref{eq:SPEABCit} requires a complex solution for $\rho (x)$. It follows from the effective action 
		\begin{align*}
		    \lim_{N \to \infty} S_{\text{eff}} ^{(\text{\rm ABC})} (x_1, \dots, x_N) & = \lim_{N \to \infty} \left[ - \frac{1}{N^2} \sum_{1 \le a\ne b \le N} \ln \left\lvert x_a-x_b \right\rvert + \frac{1}{N} \sum_{a=1} ^{N} i 4 \tau x_a \right] \\
		    & = i 4 \tau \int_0 ^1 \dd \mathsf{a} ~ x (\mathsf{a}) - \dashint_{\mathsf{b} \ne \mathsf{a}} \dd \mathsf{a} \dd \mathsf{b} ~ \ln \left\lvert  x (\mathsf{a}) -  x (\mathsf{b}) \right\rvert \\
		    & = i 4 \tau \int \dd x \rho (x) ~x - \dashint \dd x \dd y \rho (x) \rho (y) ~ \ln \lvert x-y \rvert 
		\end{align*}
		from \eqref{eq:GSPreallong} having an imaginary coefficient. Therefore, in the planar large $N$ limit, $\mG_{N,L=\infty} (t) \approx e^{-N^2 S_{\text{eff}} ^{(\text{\rm ABC})} [\rho] }$ is dominated by complex saddle point configurations.\par
		It is known that, in general, effective actions in Quantum Field Theory and in matrix models may admit complex saddle points. We refer to \cite{Garcia:1996npC,Guralnik:2007rxC,Ferrante:2013hgC} for extensive discussion, and \cite{Marino:2008yaC,Marino:2012zqC} for a treatment of the complex saddle points in matrix models. An important feature of the models \eqref{eq:GU=GWWlong}-\eqref{eq:GSPreallong} is that the complex saddles of the effective actions are the dominant ones in the planar limit.\par
		\begin{rmk}
		The fact that the saddle point configuration may be complex does not automatically imply that the action evaluated on that saddle point is complex. In fact, it turns out not to be the case for the models we consider.
		\end{rmk}
		In conclusion, to solve \eqref{eq:SPEABCit} we gain insight from the Quantum Field Theory approach and relax the condition on the integration domain, so that we let it pass through the complex saddle points.\par
		\begin{rmk}[Replica wormholes]
		The inclusion of complex saddles in the gravitational path integral is playing a prominent role in black hole physics. The recent efforts to solve the Hawking paradox and the factorization problem build on the refined analysis of complex saddle points, known as \emph{replica wormholes} \cite{Penington:2019kkiC,Almheiri:2019qdqC}. It has been proposed that these wormholes can become the dominant saddle point, against the Hawking saddle point configuration. This change of dominant saddle is accompanied with a first order phase transition, consistent with the Page curve. See \cite{Almheiri:2020cfmC} for an overview.
		\end{rmk}

		\subsubsection*{Contour deformation and reality condition}
		In the light of the above discussion, in practice we allow
		\begin{equation*}
			\text{supp} \rho = \Gamma \subset \C ,
		\end{equation*}
		i.e. the saddle point configuration of the matrix model \eqref{eq:GSPreallong} places the eigenvalues on some arc or union of arcs $\Gamma $ in the complex plane. See e.g. \cite{David:1990skC} for an early discussion and \cite{Copetti:2020dilC} for a recent application in the context of black holes.\par
		Of course, consistency requires 
		\begin{equation*}
		    \lim_{\tau \to 0^{+}} \Gamma = [-1,1] 
		\end{equation*}
		and we will henceforth only focus on solutions $\Gamma$ that, for small enough values of $\tau$ (in a sense made precise momentarily) are homotopic to $[-1,1]$.\par
		Having relaxed the assumption on $\text{supp} \rho$, we write \eqref{eq:SPEABCit} as 
		\begin{equation}
		\label{eq:cplexSPEABC}
			\dashint_{\Gamma} \dd y \frac{\rho (y)}{z-y} = - i 2 \tau  \qquad \forall z \in \Gamma ,
		\end{equation}
		where now $z,y \in \C$ need not be real. Notice the minus sign in the right-hand side compared to \eqref{eq:SPEABCit}, according to Remark \ref{rmk:sign}.\par
		We define a coordinate $\sigma \in \Gamma $ and parametrize the embedding $\Gamma \hookrightarrow \C$ through $\sigma \mapsto z(\sigma)$, which we normalize according to $\lvert \dot{z} (\sigma) \rvert^2 =1 $. The integrals below are understood in such a parametrization, which we leave implicit.\par
		\medskip
		After having obtained a distribution $\rho (x)$ that solves \eqref{eq:cplexSPEABC}, we determine the shape of $\Gamma$ requiring the probabilities
		\begin{equation}
		\label{eq:realitycondGamma}
			0 \le \int_{z_0} ^{z_1} \dd z \rho (z) \le 1 , \qquad \forall z_0,z_1 \in \Gamma .
		\end{equation}\par
		 So, in particular, $\Gamma$ must be a level set along which the integral is real-valued.\par

		\subsubsection*{Solution in the first phase}
		 Gaining insight from the imaginary-time solution in Subsection \ref{sec:ABCGWW}, we expect that the solution in the first phase is the analytic continuation of the imaginary-time solution.\par
		 Indeed, under the ansatz that $\Gamma$ is homotopic to $[-1,1]$, the function 
		\begin{equation}
		\label{eq:rhoABCrealtPI}
			\rho (x) = \frac{1+i2 \tau x}{\pi \sqrt{1-x^2}}
		\end{equation}
		solves \eqref{eq:cplexSPEABC}. It can be checked directly by plugging \eqref{eq:rhoABCrealtPI} in the left-hand side of \eqref{eq:cplexSPEABC} and solving the integral using the assumption on $\Gamma$.\par
		\begin{prop}
		    The 1-cycle $\Gamma $ is the component homotopic to the interval $[-1,1]$ of the level set
		    \begin{equation}
		\label{eq:defGammacplexABC}
			\Re \left[ \log \left( -iz + \sqrt{1-z^2} \right) -2\tau \sqrt{1-z^2} \right] =0 .
		\end{equation}
		\end{prop}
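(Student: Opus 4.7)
The plan is to derive the equation \eqref{eq:defGammacplexABC} directly from the reality condition \eqref{eq:realitycondGamma} applied to the density \eqref{eq:rhoABCrealtPI}, and then to isolate the correct connected component.

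First I would introduce an antiderivative of the density. Define
\begin{equation*}
    F(z) \;=\; \int_{-1}^{z} \rho (w)\, \dd w \;=\; \frac{1}{\pi}\int_{-1}^{z} \frac{1+i2\tau w}{\sqrt{1-w^{2}}}\,\dd w .
\end{equation*}
Both antiderivatives are elementary: $\int \dd w /\sqrt{1-w^{2}} = i\log(-iw+\sqrt{1-w^{2}}) + C$ (as can be checked by direct differentiation, since $\frac{d}{dw}\log(-iw+\sqrt{1-w^{2}})=-i/\sqrt{1-w^{2}}$), while $\int w\,\dd w/\sqrt{1-w^{2}} = -\sqrt{1-w^{2}}+C$. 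Combining these,
\begin{equation*}
    F(z) \;=\; \frac{i}{\pi}\Bigl[\log\bigl(-iz+\sqrt{1-z^{2}}\bigr)-2\tau\sqrt{1-z^{2}}\Bigr] + \text{const},
\end{equation*}
where the constant is fixed by $F(-1)=0$. A direct evaluation at $z=\pm 1$ shows $F(-1)=0$ and $F(1)=1$, so the normalization $\int_{\Gamma}\rho=1$ is automatic for any contour on which $F$ is single-valued from $-1$ to $1$.

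Next I would reformulate the reality condition \eqref{eq:realitycondGamma} as a level-set condition. Since $\int_{z_{0}}^{z_{1}}\rho\,\dd z = F(z_{1})-F(z_{0})$ along $\Gamma$ is required to be real for every $z_{0},z_{1}\in\Gamma$, the imaginary part $\Im F$ must be constant on $\Gamma$. Because the endpoints $\pm 1$ belong to the closure of $\Gamma$ and $\Im F(\pm 1)=0$, the constant is zero. Using the explicit formula for $F$, this is exactly
\begin{equation*}
    \Re\!\left[\log\bigl(-iz+\sqrt{1-z^{2}}\bigr)-2\tau\sqrt{1-z^{2}}\right]=0 ,
\end{equation*}
which is \eqref{eq:defGammacplexABC}. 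Moreover, since $F$ is holomorphic off the branch cuts of the square root, the level sets of $\Im F$ are smooth curves away from the branch points, and $\Gamma$ must be a union of such curves.

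Finally I would select the correct component. At $\tau=0$ the equation reduces to $\Re\log(-iz+\sqrt{1-z^{2}})=0$, whose connected component containing $\pm 1$ is exactly the real interval $[-1,1]$ (this can be checked by writing $z=x$ with $x\in(-1,1)$, where the argument of the log has unit modulus). By continuity of the level sets in $\tau$ and analyticity of $F$ in a neighbourhood of the real interval, there is a unique component $\Gamma(\tau)$ of \eqref{eq:defGammacplexABC} that deforms continuously from $[-1,1]$ as $\tau$ is turned on, and it necessarily connects $-1$ to $1$ since these branch points are zeros of $\Re F$ for all $\tau$. The main technical point to handle carefully is the choice of branch for $\sqrt{1-z^{2}}$ and for the logarithm: one must check that along the deformed curve the branch can be chosen continuously and that no competing component of the level set (which would correspond to a subleading saddle) intrudes. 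This is where the assumption of homotopy to $[-1,1]$ becomes essential and, together with the implicit function theorem applied to $\Im F=0$ at each smooth point of the curve, pins down $\Gamma$ uniquely.
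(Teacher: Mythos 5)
Your proposal is correct and takes essentially the same route as the paper: both integrate the density \eqref{eq:rhoABCrealtPI} to the antiderivative $i\log(-iz+\sqrt{1-z^2})-i2\tau\sqrt{1-z^2}$ (the paper phrases it via $\mathrm{ArcSin}$ and its logarithmic form), recast the reality condition \eqref{eq:realitycondGamma} as constancy of the imaginary part along $\Gamma$, fix the constant to zero at the endpoints, and then select the component that deforms continuously from $[-1,1]$. Your added checks (the normalization $F(1)=1$ coming out automatically, and the explicit attention to branch choices) are consistent elaborations of the paper's terser argument rather than a different method.
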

		\begin{proof}
		The integral in \eqref{eq:realitycondGamma} with the solution \eqref{eq:rhoABCrealtPI} is easily solved and reads 
		\begin{equation*}
			\pi \int ^{z} \dd z \rho (z) = \mathrm{ArcSin} (z) - i 2 \tau \sqrt{1-z^2} \qquad \forall z \in \Gamma 
		\end{equation*}
		where $\mathrm{ArcSin}$ is the arcsine function, i.e. the extension of the inverse sine function $\sin^{-1} : [-1,1] \to [0, \pi]$ to $\C$. Using the logarithmic form 
		\begin{equation*}
		    \mathrm{ArcSin} (z) = i \log \left( -i z + \sqrt{1-z^2} \right) ,
		\end{equation*}
		the reality constraint becomes 
			\begin{equation*}
			\Im \left[ i \log \left( -iz + \sqrt{1-z^2} \right) -i2\tau \sqrt{1-z^2} \right] =0 ,
		\end{equation*}
		which is \eqref{eq:defGammacplexABC} upon multiplication by $\sqrt{-1}$.\par
		Equation \eqref{eq:defGammacplexABC} admits disconnected solutions and, by construction, we ought to retain the one component that is smoothly deformed into $[-1,1] \subset \R \subset \C$.
		\end{proof}
		
		\subsubsection*{Cross-check: consistency with known results}
		Notice that, taking $\tau \to 0$, the condition \eqref{eq:realitycondGamma} reduces to 
		\begin{equation*}
			0 \le \frac{1}{\pi} \left[ \mathrm{ArcSin} (z) -\mathrm{ArcSin} (-1) \right] \le 1 ,
		\end{equation*}
		where we have chosen as initial point $z_0=-1$. The latter equation is solved by $z \in [-1,1]$, thus
		giving back the contour 
		\begin{equation*}
		    \lim_{\tau \to 0^{+}} \Gamma = [-1,1] 
		\end{equation*}
		as it should.\par
		Furthermore, replacing $i \tau \mapsto \gamma $ in \eqref{eq:realitycondGamma}, that is, applying the argument to the imaginary-time dynamics of the system, the solution is still given by $z \in [-1,1]$ (cf. Figure~\ref{fig:realitycondGWWABC}), consistently with the standard procedure and the result of Subsection \ref{sec:ABCGWW}.\par
		\begin{figure}[ht]
		\centering
			\includegraphics[width=0.4\textwidth]{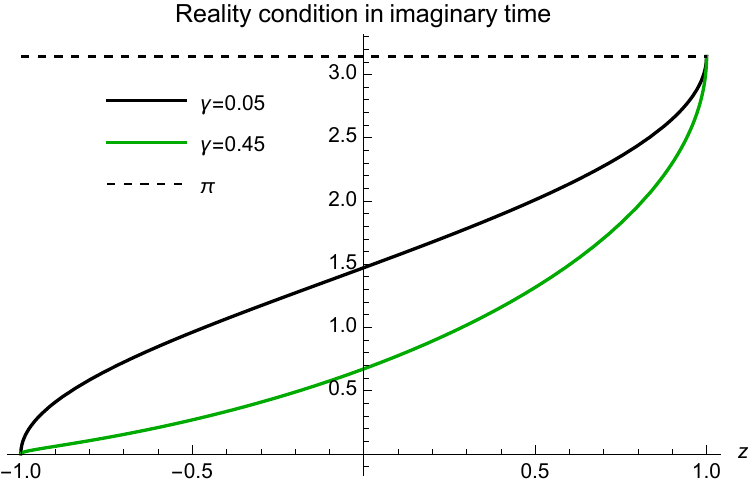}
		\caption{Solution to the imaginary-time version of the reality condition \eqref{eq:realitycondGamma} multiplied by $\pi$, for $\gamma=0.05$ (black) and $\gamma=0.45$ (green).}
		\label{fig:realitycondGWWABC}
		\end{figure}
		
		\subsubsection*{Criticality}
		At every $\tau >0$, the contour \eqref{eq:defGammacplexABC} intersects the imaginary axis in the upper half plane.\par
		The eigenvalue density $\rho (x)$ in \eqref{eq:rhoABCrealtPI} has a complex zero at 
		\begin{equation*}
			z_0 (\tau) = \frac{i}{2 \tau} ,
		\end{equation*}
		that is, placed in the positive imaginary semi-axis. Increasing $\tau >0$, $z_0 (\tau)$ descends from $i \infty$ and eventually intersects $\Gamma$ at the critical value $\tau_{\text{cr}}$, which is the (lowest positive) solution to 
		\begin{equation*}
			\Re \left[ \log \left( -i z_0 (\tau_{\text{cr}}) + \sqrt{1-z_0 (\tau_{\text{cr}})^2} \right) -2\tau \sqrt{1-z_0 (\tau_{\text{cr}}) ^2} \right] =0 ,
		\end{equation*}
		which, explicitly, is
		\begin{equation}
		\label{eq:defeqtaucritical}
			\sqrt{1 + 4 \tau_{\mathrm{cr}} ^2 } - \log \left(\frac{  1 \pm \sqrt{1 + 4 \tau_{\mathrm{cr}} ^2 } }{2 \tau_{\mathrm{cr}} } \right) = 0 .
		\end{equation}
		The solution is 
		\begin{equation*}
			\tau_{\mathrm{cr}} \approx 0.33137 .
		\end{equation*}
		At $\tau > \tau_{\mathrm{cr}}$, $\Gamma$ breaks up into a two-cut solution, with the gap in the support opening around 
		\begin{equation*}
		z=z_{0} (\tau_{\mathrm{cr}}) \approx i 1.51. 
		\end{equation*}
		See Figure~\ref{fig:ABCrealtimeGamma}. From the random matrix theory perspective, this is a one-cut to two-cut phase transition, with the peculiarity that the cut $\Gamma$ is deformed away from the real axis. This is distinct from the hard-edge to soft-edge transition we have seen for the imaginary-time (finite temperature) model in Subsection \ref{sec:ABCGWW}.\par
		\begin{figure}[ht]
		\centering
			\includegraphics[width=0.4\textwidth]{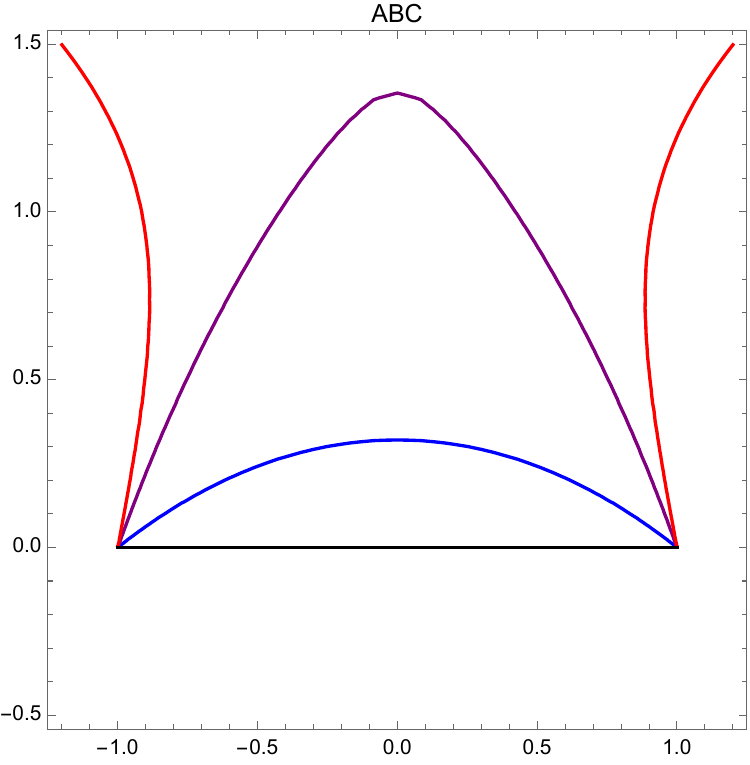}
		\caption{Solution to \eqref{eq:defGammacplexABC} homotopic to $[-1,1]$ at $\tau=$ 0(black), 0.15(blue), 0.33(purple), 0.4(red).}
		\label{fig:ABCrealtimeGamma}
		\end{figure}
		
	\subsubsection*{Solution in the second phase}
		In the new phase, the two disconnected arcs forming $\Gamma $ join $-1$ to $A$ and $B$ to $+1$, respectively, for a pair of points $(A,B)$ in the upper-half $\C$-plane. The reflection symmetry of the problem imposes $\Im A = \Im B$ and $\Re A = - \Re B$. Moreover, 
		\begin{equation*}
			\lim_{\tau \to  \tau_{\mathrm{cr}}} A = \lim_{\tau \to  \tau_{\mathrm{cr}}} B = \frac{i}{2 \tau_{\mathrm{cr}}} 
		\end{equation*}
		follows from the continuity of $\rho (x)$ at the critical point.\par
		In the new phase, we can set up a Riemann--Hilbert problem for the planar resolvent $\omega (z)$, similar to what was done in the second phase in imaginary-time, and solve it adapting the standard techniques to the present case.\par
		We omit the technical details, which are involved but standard. One can follow \cite{DiFrancesco:1993C} step by step for the two-cut phase, noting that the procedure is solely based on complex contour deformations, which go through identically when the endpoints $A,B$ lie in $\C$ and not in $\R$.\par
		The upshot is that we find the two-cut solution 
		\begin{equation}
		\label{eq:rhoABCrealtPII}
			\rho (x) = i \frac{2 \tau}{\pi} \sqrt{  \frac{(A-x)(x-B)}{1-x^2}} 
		\end{equation}
		and $\Im A = \Im B= \frac{i}{2\tau}$, while $\Re A = - \Re B$ is implicitly fixed by the normalization condition and the knowledge of $\Im A$.\par
		Let us mention that, \emph{a posteriori}, one may have leveraged the insight from two-cut matrix models with hard edges supported on subsets of $\R$, as for example in \cite{Cunden:2018C}, to derive \eqref{eq:rhoABCrealtPII}. Indeed, the na\"{i}ve ansatz for the resolvent would be precisely the planar resolvent $\omega (z)$ associated to \eqref{eq:rhoABCrealtPII}. Then, running the argument backwards, by selecting a closed contour of integration encircling $\Gamma$ one shows that such $\omega (z)$ solves the Riemann--Hilbert problem descending from \eqref{eq:SPEABCit}.
		
	\subsubsection*{Dynamical free energy}
		Having obtained the eigenvalue density, we can use it to compute the dynamical free energy \eqref{eq:defDFEtaulargeN}. In the first phase we have 
		\begin{align*}
			\left. \frac{\dd f^{\text{(ABC)}}}{\dd \tau} \right\rvert_{0 < \tau \le \tau_{\text{cr}}} & = - 4i \int_{\Gamma} \dd x ~ \rho (x) ~x \\
			& -4i \dashint_{-1} ^{1} \dd x \frac{x(1+i 2 \tau x)}{\pi \sqrt{1-x^2}}=  4 \tau ,
		\end{align*}
		where we use the form \eqref{eq:rhoABCrealtPI} of the eigenvalue density. To pass to the second line we have used that $\Gamma$ is homotopic to $[-1,1]$.\par
		\medskip
		We have not been able to obtain a closed form expression for $f(\tau)$ in the phase $\tau > \tau_{\text{cr}}$. Nevertheless, we can leverage the continuity of $ \frac{\dd f^{\text{(ABC)}}}{\dd \tau}$ at the critical point, which follows from the continuity of $\rho (x)$ and $\Gamma $. We then compute the expansion 
		\begin{subequations}
		\begin{align}
			A &=\frac{i}{2 \tau }-\left( 1-\frac{\tau_{\mathrm{cr}}}{ \tau}\right) +O(( \tau-\tau_{\mathrm{cr}})^{2}), \\
			B &=\frac{i}{2 \tau}+\left( 1-\frac{\tau_{\mathrm{cr}}}{ \tau}\right) +O(( \tau-\tau_{\mathrm{cr}})^{2}) .
		\end{align}
		\label{eq:approxABtaucr}
		\end{subequations}
		With these expressions at hand, we can compute $\left. \frac{\dd^2 f^{\text{(ABC)}}}{\dd \tau^2 } \right\rvert_{\tau > \tau_{\text{cr}}}  $ close to the critical point, which is enough to obtain the order of the phase transition.
		\begin{prop}
		    The dynamical free energy $f (\tau)$ is twice differentiable and its third derivative is discontinuous at $\tau=\tau_{\text{cr}}$.
		\end{prop}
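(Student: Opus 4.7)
The plan is to combine the phase-I derivation with a Taylor expansion of the two-cut solution near $\tau_{\mathrm{cr}}$. In phase I we already have $f'(\tau) = 4\tau$, hence
\begin{equation*}
    f'(\tau_{\mathrm{cr}}^{-}) = 4\tau_{\mathrm{cr}}, \qquad f''(\tau_{\mathrm{cr}}^{-}) = 4, \qquad f'''(\tau_{\mathrm{cr}}^{-}) = 0 .
\end{equation*}
The continuity of $f'$ at $\tau_{\mathrm{cr}}$ has already been noted in the text above the proposition; what remains is to match $f''$ across the transition and to find a non-vanishing jump in $f'''$.

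First I would write the phase-II derivative as
\begin{equation*}
    f'(\tau) = -4i \int_{\Gamma} \dd x\, \rho(x)\, x = \frac{8\tau}{\pi} \int_{\Gamma} \dd x\, x \sqrt{\frac{(A-x)(x-B)}{1-x^{2}}}
\end{equation*}
using \eqref{eq:rhoABCrealtPII}, then collapse the two-arc contour $\Gamma$ onto a closed loop encircling the segment $[A,B]$. This reduces the problem to a residue at infinity plus a period integral along $[A,B]$, expressible in closed form via complete elliptic integrals. The endpoints $A(\tau)$, $B(\tau)$ are fixed by $\Im A = \Im B = 1/(2\tau)$ together with the normalization $\int_{\Gamma}\rho\,\dd x = 1$, which combined with the reflection $x \mapsto -x$ of the weight enforces $\Re A = -\Re B$.

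I would then insert \eqref{eq:approxABtaucr} and expand $f'(\tau)$ in powers of the gap $B-A = 2(1-\tau_{\mathrm{cr}}/\tau) + O((\tau-\tau_{\mathrm{cr}})^{2})$. This is the standard ``merger of two cuts'' regime: on general RMT grounds, the shrinking-cut contribution to the period integral enters at quadratic order in the gap, so that
\begin{equation*}
    f'(\tau) = 4\tau + c_{2}\,(\tau-\tau_{\mathrm{cr}})^{2} + O((\tau-\tau_{\mathrm{cr}})^{3}) ,
\end{equation*}
giving immediately $f''(\tau_{\mathrm{cr}}^{+}) = 4$ (so that $f \in C^{2}$) and $f'''(\tau_{\mathrm{cr}}^{+}) = 2 c_{2}$.

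The main obstacle is the non-vanishing of $c_{2}$. Computing it requires pushing \eqref{eq:approxABtaucr} to the next order, which I would do by substituting the ansatz $A = \frac{i}{2\tau} - (1-\tau_{\mathrm{cr}}/\tau) + \alpha(\tau-\tau_{\mathrm{cr}})^{2} + \cdots$ (and its reflected counterpart for $B$) into the normalization condition, and solving perturbatively, using \eqref{eq:defeqtaucritical} to simplify. The resulting $c_{2}$ is an elementary function of $\tau_{\mathrm{cr}}$; the non-vanishing then follows by a direct numerical substitution of $\tau_{\mathrm{cr}} \approx 0.33137$, corroborated by an analytic argument ruling out the algebraic identity on $\tau_{\mathrm{cr}}$ that would force $c_{2} = 0$.
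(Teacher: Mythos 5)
Your strategy is essentially the paper's: expand the two-cut expression for $f'$ around $\tau_{\text{cr}}$ using \eqref{eq:approxABtaucr}, confirm $f''(\tau_{\text{cr}}^{+})=4$, and exhibit a non-zero quadratic coefficient $c_2$ in $f'$ on the $\tau>\tau_{\text{cr}}$ side. One correction to your plan: the $O((\tau-\tau_{\mathrm{cr}})^{2})$ terms in $A,B$ are \emph{not} needed to fix $c_2$, because with $\Im A=\Im B=1/(2\tau)$ and $\Re A=-\Re B$ the integrand depends on the endpoints only through the squared half-gap $\delta^{2}=\left(1-\tau_{\mathrm{cr}}/\tau\right)^{2}+O((\tau-\tau_{\mathrm{cr}})^{3})$, so the leading order of \eqref{eq:approxABtaucr} already determines $c_2$; carrying out the expansion directly gives $f'(\tau)=4\tau-8\left(1-\tau_{\mathrm{cr}}/\tau\right)^{2}\left(1+4\tau_{\mathrm{cr}}^{2}\right)^{-3/2}+\dots$, whose quadratic coefficient is manifestly negative, so the closing numerical check you propose is unnecessary.
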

		\begin{proof}
		We start by computing the second derivative of $f (\tau)$ close to the critical value $\tau_{\text{cr}}$:
		\begin{align*}
			\left. \frac{\dd^2 f^{\text{(ABC)}}}{\dd \tau^2 } \right\rvert_{\tau > \tau_{\text{cr}}}   \approx - 4i \frac{\dd \ }{\dd  \tau} \left( \frac{i 2 \tau x}{\pi }  \right) & \left[ \int_{-1}^{\frac{i}{ 2 \tau}-\left( 1-\frac{\tau_{\mathrm{cr}}}{ \tau} \right) } x \dd x  \sqrt{\frac{\left( \frac{i}{ 2 \tau}-\left( 1-\frac{\tau_{\mathrm{cr}}}{ \tau}\right) -x\right) \left( x-\frac{i}{ 2 \tau}-\left( 1-\frac{\tau_{\mathrm{cr}}}{ \tau}\right) \right) }{1-x^{2}}} 	 \right. \\
				& \left. +\int_{\frac{i}{ 2 \tau}+\left( 1-\frac{\tau_{\mathrm{cr}}}{ \tau}\right) }^{1}  x \dd x \sqrt{\frac{\left( \frac{i}{ 2 \tau}-\left( 1-\frac{\tau_{\mathrm{cr}}}{ \tau}\right) -x\right) \left( x-\frac{i}{ 2 \tau}-\left( 1-\frac{\tau_{\mathrm{cr}}}{ \tau}\right) \right) }{1-x^{2}}}   \right] 
		\end{align*}
		We have used the two-cut eigenvalue density \eqref{eq:rhoABCrealtPII} replacing $A$ and $B$ with their approximate form \eqref{eq:approxABtaucr}. Thus, we first differentiate and then evaluate the resulting integrals. The contributions involving $\frac{ \dd A}{\dd \tau }$ and $\frac{ \dd B}{\dd \tau }$ vanish because they multiply the integrand evaluated at the boundary, that vanishes. We finally get 
		\begin{equation*}
			\lim_{\tau \to \tau_{\text{cr}}^{+}} \frac{\dd^2 f^{\text{(ABC)}}}{\dd \tau^2 } = 4 .
		\end{equation*}
		The same argument can be improved to compute directly the first derivative of the dynamical free energy in the second phase. We obtain the result 
		\begin{align*}
			\left. \frac{\dd f^{\text{(ABC)}}}{\dd \tau } \right\rvert_{\tau > \tau_{\text{cr}}}   & \approx  \frac{ 8 \tau}{\pi} \left[ \int_{-1} ^{1} \dd x - \int_{\frac{i}{2 \tau}  -1 + \frac{\tau_{\mathrm{cr}}}{ \tau} } ^{\frac{i}{2 \tau}  +1 - \frac{\tau_{\mathrm{cr}}}{ \tau} } \dd x \right] x \sqrt{\frac{\left( \frac{i}{ 2 \tau}-\left( 1-\frac{\tau_{\mathrm{cr}}}{ \tau}\right) -x\right) \left( x-\frac{i}{ 2 \tau}-\left( 1-\frac{\tau_{\mathrm{cr}}}{ \tau}\right) \right) }{1-x^{2}}} \\
				& \approx \frac{ 8 \tau}{\pi}  \left\{  \int_{-1} ^{1} \dd x  \frac{x}{\sqrt{1-x^2}} \left( x - \frac{i}{2 \tau} \right) \left[ 1- \left( 1 - \frac{\tau_{\mathrm{cr}}}{ \tau} \right)^2 \frac{1 }{2 \left( x - \frac{i}{2 \tau} \right)^2 } \right] \right. \\
				& \ \qquad \qquad \left. -2 \left( 1 - \frac{\tau_{\mathrm{cr}}}{ \tau} \right)^2 \left[  \frac{i x}{\sqrt{1-x^2}} \right]_{x= \frac{i}{2 \tau}} \right\} \\
				& = 4 \tau - 8 \left( 1 - \frac{\tau_{\mathrm{cr}}}{ \tau} \right)^2 \left( 1 + 4 \tau_{\mathrm{cr}} ^2 \right)^{-\frac{3}{2}}  ,
		\end{align*}
		the symbol ``$\approx$'' indicating that we only retain the lowest non-trivial order in $(\tau_{\text{cr}} - \tau)$. We conclude that the phase transition is at least third order. By applying the same ideas to $\left. \frac{\dd^3 f^{\text{(ABC)}}}{\dd \tau^3 } \right\rvert_{\tau > \tau_{\text{cr}}}  $ we have explicitly checked that it is non-zero at $\tau_{\text{cr}}$.
		\end{proof}
		We conclude that the dynamical free energy signals a third order DQPT.\par
	
	\subsubsection*{Late time behaviour of the dynamical free energy}
		We can study the late time asymptotics $\tau \to \infty$ of the dynamical free energy $f (\tau)$. In this regime, the symmetry arguments are enough to constraint the form of $A$ and $B$ at leading order in $1/\tau$. One finds
		\begin{equation*}
			A = -1 +\frac{i}{2 \tau} + O(\tau^{-2}) , \qquad B = 1 +\frac{i}{2 \tau}  + O(\tau^{-2}) .
		\end{equation*}
		and a direct computation akin to the ones above yields 
		\begin{equation*}
			\left. \frac{\dd f^{\text{(ABC)}}}{\dd \tau } \right\rvert_{\tau > \tau_{\text{cr}}}  \propto \frac{1}{\tau} , \qquad \tau \to \infty ,
		\end{equation*}
		which means the dynamical free energy behaves logarithmically at late times.\par
		
	\subsubsection*{Relation with earlier works}
		Before concluding the present analysis, it is worthwhile to mention that the transition we find is essentially the one first discovered in \cite{DeanoC}, and further elaborated upon in \cite{CelsusDetC,Celsus:2020C}. Indeed, even though the model we are interested in is \eqref{eq:GSp=GWWlong}, after changing variables into \eqref{eq:GSPreallong} and applying Andr\'eief's identity, we land on a Hankel determinant which only differs by the one considered in \cite{DeanoC,CelsusDetC,Celsus:2020C} by terms which are sub-leading in $N$, cf. also Remark \ref{remarkABCsub}. Not surprisingly, then, we find agreement with the results therein, although using different mathematical techniques, namely a saddle point analysis in the complex plane instead of the asymptotics of a system of orthogonal polynomials.\par

	\subsection{Dynamical quantum phase transition: PBC}
    \label{sec:DQPTPBCLinfty}
    
    \subsubsection*{Statement of the result}
		We now solve the saddle point equation for the chain with PBC and prove the existence of a third order phase transition at the critical point \eqref{eq:defeqtaucritical}. Moreover, in agreement with the general argument presented at the end of Subsection \ref{sec:ABCGWW}, we will find by direct computation the relation 
		\begin{equation*}
			f^{\text{(ABC)}} (\tau) = 2 f^{\text{(PBC)}} (\tau) .
		\end{equation*}\par
		We consider \eqref{eq:SPEPBC} and rewrite it in exponential variables $z= e^{i \theta}$, $u=e^{i \varphi}$:
		\begin{equation}
			\dashint \frac{\dd u}{u} ~ \varrho (u) \frac{z+u}{z-u} = \tau \left( z - \frac{1}{z} \right) .
		\end{equation}
		Recall that the distribution $\varrho $ is related to $\rho$ through $\varrho (e^{i \theta}) = \rho (\theta)$.\par
		As in the real-time ABC chain, we ought to relax the assumption on the integration contour, in order to catch the contribution from the complex saddles of the model \eqref{eq:GU=GWWlong}.\footnote{Note that we do not consider a meromorphic deformation of the model at finite $N$, but rather take into account the large $N$ complex saddles of the undeformed model. Had we considered a meromorphic matrix model as in \cite{Santilli:2021eonC}, the resulting saddle point equation would be different.}\par
		
	\subsubsection*{Solution in the first phase}
		We begin with the ansatz $\text{supp} \varrho = \Gamma$ where, as opposed to the imaginary-time model of Subsection \ref{sec:PBCGWW} (i.e. the GWW model), we let $\Gamma$ be a 1-cycle in $\C$ homotopic to the unit circle $\mathbb{S}^1$. Topologically, $[\Gamma] \in H_1 (\C^{\ast})$ but the geometry (i.e. the shape) of the curve $\Gamma$ depends on $\tau$ and is subject to the constraint 
		\begin{equation*}
		    \lim_{\tau\to 0^{+}} \Gamma = \mathbb{S}^1 .
		\end{equation*}\par
		With this ansatz, and gaining insight from the imaginary-time result in Subsection \ref{sec:PBCGWW}, we obtain that 
		\begin{equation}
		\label{eq:rhoPBCrealtPI}
			\varrho (z) = \frac{1}{2 \pi} \left[ 1 + i \tau \left( z + \frac{1}{z} \right) \right] 
		\end{equation}
		solves the saddle point equation. We also observe that 
		\begin{equation*}
			\oint_{\Gamma} \frac{\dd u }{2 \pi u } \left[ 1 + i \tau \left( u + \frac{1}{u} \right) \right] \left( \frac{z+u}{z-u}  \right) = \begin{cases} -i + 2 \tau z & z \in \text{Int}(\Gamma) \\ i - 2 \tau /z & z \in \text{Ext}(\Gamma) \end{cases} 
		\end{equation*}
		where $\text{Int}(\Gamma)$ (resp. $\text{Ext}(\Gamma)$) is the interior (resp. exterior) of the closed Jordan curve $\Gamma$.\par
		\begin{prop}
		    The 1-cycle $\Gamma$ is the unique connected component homotopic to $\mathbb{S}^1$ of the level set 
		\begin{equation}
		\label{eq:defGammacplexPBC}
			\Im \left\{ -i \ln (z) + \tau \left( z - \frac{1}{z} \right)  \right\} =0 .
		\end{equation}
		\end{prop}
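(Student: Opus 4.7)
My plan is to mirror the strategy used for the ABC chain in Subsection \ref{sec:DQPTABCLinfty}: impose the constraint that the integrated eigenvalue density along $\Gamma$ takes real values, and then use this to pin down the shape of $\Gamma$. The analogue of the reality condition \eqref{eq:realitycondGamma} in the unitary setting is obtained by observing that the angular measure $d\theta$ translates under $z=e^{i\theta}$ into $du/(iu)$, so that the primitive $\int^{z}\frac{du}{iu}\varrho(u)$ must be real along $\Gamma$ for the fraction of eigenvalues between two points of $\Gamma$ to lie in $[0,1]$.

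First I would insert the explicit density \eqref{eq:rhoPBCrealtPI} into this primitive and integrate term by term. Using $\int du/u = \ln u$ and $\int(1+1/u^{2})\,du = u - 1/u$, one finds
$$\int \frac{du}{iu}\,\frac{1}{2\pi}\!\left[1+i\tau\!\left(u+\frac{1}{u}\right)\right] = \frac{1}{2\pi}\!\left[-i\ln z+\tau\!\left(z-\frac{1}{z}\right)\right]+\mathrm{const}.$$
Discarding the overall real prefactor $1/(2\pi)$ and the real additive constant of integration, imposing reality of this primitive along $\Gamma$ is exactly \eqref{eq:defGammacplexPBC}.

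To single out the relevant connected component I would argue by continuity in $\tau$. At $\tau=0$ the equation reduces to $\Im(-i\ln z) = -\ln|z| = 0$, i.e.\ to the unit circle $\mathbb{S}^{1}$. Define $F_{\tau}(z):= -i\ln z + \tau(z-1/z)$; the function $\Im F_{\tau}$ is single-valued on $\C\setminus\{0\}$ (the branch ambiguity of $\ln z$ being purely real), and a direct computation shows $\partial_{z} F_{\tau}|_{\mathbb{S}^{1}} = -i/z + \tau(1+1/z^{2})$, which has modulus uniformly close to $1$ for small $\tau$. The implicit function theorem then produces, for $\tau$ in a neighbourhood of $0$, a unique smooth deformation of $\mathbb{S}^{1}$ among the components of $\{\Im F_{\tau}=0\}$, which is the $\Gamma$ we retain.

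The main obstacle I anticipate is the global control over this level set as $\tau$ grows: equation \eqref{eq:defGammacplexPBC} admits additional branches emanating from the essential singularities at $0$ and $\infty$, and one must verify that the component tracked from $\tau=0$ remains a closed Jordan curve of class $[\Gamma]\in H_{1}(\C^{\ast})$ homotopic to $\mathbb{S}^{1}$. This can be handled by a qualitative analysis of $\Im F_{\tau}$ along the coordinate axes together with the residue computation displayed just after \eqref{eq:rhoPBCrealtPI}, which, via the interior/exterior jump of $\omega(z)$, confirms that \eqref{eq:SPEPBC} holds on any such $\Gamma$. The deformation is valid up to the first value of $\tau$ at which the zero $z_{0}(\tau)=i/(2\tau)$ of $\varrho$ touches $\Gamma$; this matches \eqref{eq:defeqtaucritical} and marks the end of the one-cut phase, in complete parallel with the ABC analysis.
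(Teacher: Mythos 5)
Your proposal is correct and follows essentially the same route as the paper: the paper's proof simply invokes the analogue of the reality condition \eqref{eq:realitycondGamma} for the primitive $\int^{z}\varrho(u)\,\frac{\dd u}{iu}$ and a direct computation, which is exactly the integration you carry out, and your antiderivative $\frac{1}{2\pi}\left[-i\ln z+\tau\left(z-\frac{1}{z}\right)\right]$ is the right one. Your additional remarks on single-valuedness of $\Im F_{\tau}$ and the implicit-function-theorem selection of the component reducing to $\mathbb{S}^{1}$ at $\tau=0$ only make explicit what the paper leaves implicit.
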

		\begin{proof}
		    It follows from the analogue of the reality condition \eqref{eq:realitycondGamma} by direct computation.
		\end{proof}
		We note that \eqref{eq:defGammacplexPBC} is manifestly invariant under the $\mathbb{Z}_2 ^{\mathsf{P}}$ symmetry $z \mapsto \frac{1}{z}$ and the time reversal $\mathbb{Z}_2 ^{\mathsf{T}}$ symmetry $(\tau, z) \mapsto (-\tau, -z) $ of \eqref{eq:GU=GWWlong}.\par
		
		\subsubsection*{Criticality}
		The solution \eqref{eq:rhoPBCrealtPI} for $\varrho (z)$ has zeros in $\C$ located at 
		\begin{equation}
			z_{\pm} (\tau) = \frac{i}{2 \tau} \left( 1 \pm \sqrt{1+4 \tau^2} \right) 
		\end{equation}
		and a phase transition takes place at the lowest positive value of $\tau$ at which either of these points hits $\Gamma$. The criticality condition is then 
		\begin{equation*}
			\Im \left\{ -i \ln (z_{\pm} (\tau_{\text{cr}})) + \tau_{\text{cr}} \left( z_{\pm} (\tau_{\text{cr}}) - \frac{1}{z_{\pm} (\tau_{\text{cr}})} \right)  \right\} =0 ,
		\end{equation*}
		which, using elementary manipulations, becomes 
		\begin{equation*}
			\sqrt{1 + 4 \tau_{\mathrm{cr}} ^2 } - \log \left(\frac{  1 \pm \sqrt{1 + 4 \tau_{\mathrm{cr}} ^2 } }{2 \tau_{\mathrm{cr}} } \right) = 0 .
		\end{equation*}
		This is exactly the same equation for $\tau_{\text{cr}}$ found in the ABC chain, cf. \eqref{eq:defeqtaucritical}. The phase transition then takes place at exactly the same critical value, as expected.\par
		A direct study shows that, as $\tau$ is increased, $\Gamma$ is progressively deformed away from $\mathbb{S}^1$ and eventually pinches at $\tau= \tau_{\text{cr}}$. See Figure~\ref{fig:PBCrealtimeGamma}.\par
		\begin{figure}[ht]
		\centering
			\includegraphics[width=0.4\textwidth]{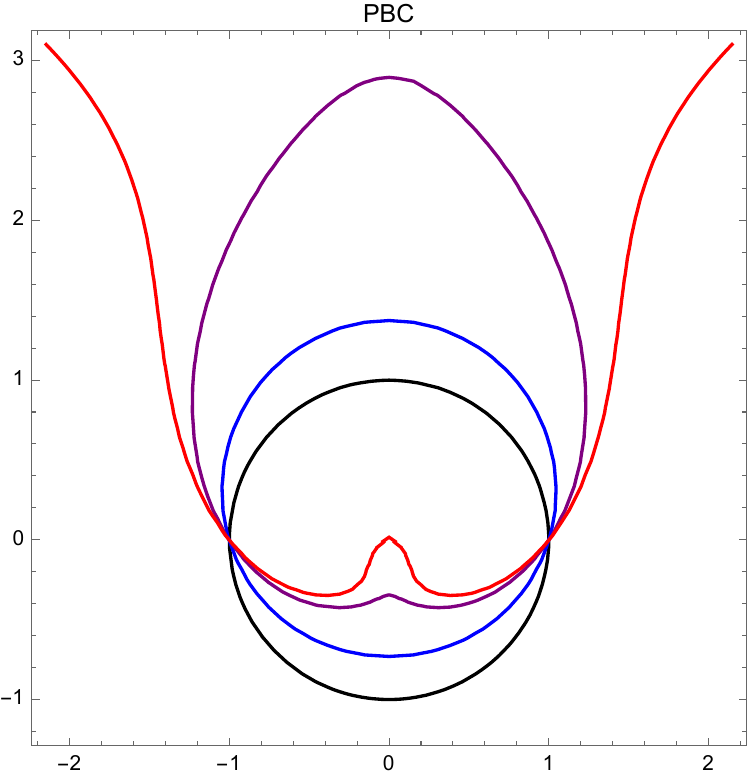}
		\caption{Solution to \eqref{eq:defGammacplexPBC} homotopic to $\mathbb{S}^1$ at $\tau=$ 0(black), 0.15(blue), 0.33(purple), 0.4(red).}
		\label{fig:PBCrealtimeGamma}
		\end{figure}\par

	\subsubsection*{Dynamical free energy}
		As in the ABC chain, we are unable to find a closed form expression for $f(\tau)$ in the phase $\tau > \tau_{\text{cr}}$. Nevertheless, we can mimic the computations above and evaluate
		\begin{equation*}
			\left. \frac{\dd f^{\text{(PBC)}}}{\dd \tau} \right\rvert_{0 < \tau \le \tau_{\text{cr}}} =2 \tau 
		\end{equation*}
		using \eqref{eq:rhoPBCrealtPI}. Besides, approximating near $\tau_{\text{cr}}$ in the second phase, we are able to prove that the transition is third order. The derivation is almost identical to the previous subsection.\par
		We conclude by showing $f(\tau)$ in Figure~\ref{fig:DynQuantFreeEnergy}. For comparison with the imaginary-time phase transition, we plot $f^{(\text{\rm PBC})} (\tau)$ and $\mathcal{F}^{(\text{\rm PBC})} (\gamma)$ together in Figure~\ref{fig:fvsFplot}. The plots make manifest that the dynamical free energy has a much more marked discrepancy from the parabola $\tau^2$ than its imaginary-time analogue, in agreement with the logarithmic behaviour at late times derived analytically.
		\begin{figure}[ht]
			\centering
				\includegraphics[width=0.45\textwidth]{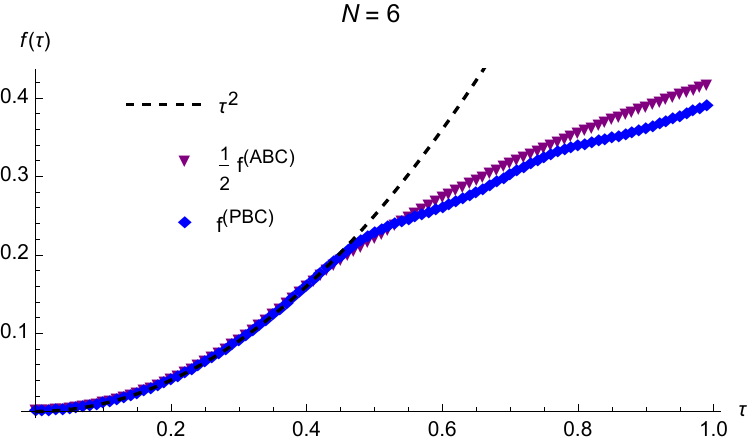}\hspace{0.08\textwidth}
				\includegraphics[width=0.45\textwidth]{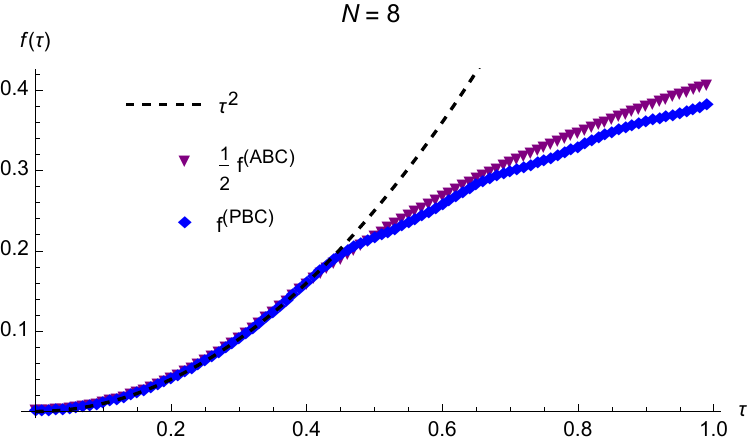}
			\caption{Plot of $f(\tau)  $ at $N=6$ (left) and $N=8$ (right).}
			\label{fig:DynQuantFreeEnergy}
			\end{figure}
			\begin{figure}[htb]
			\centering
				\includegraphics[width=0.65\textwidth]{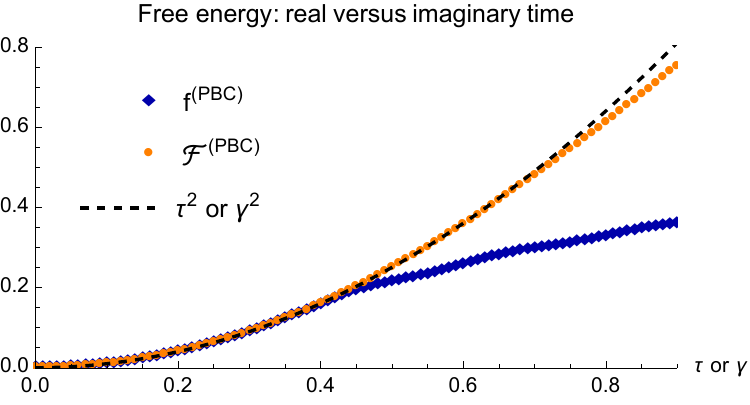}
			\caption{Comparative plot of $f(\tau) $ and $\mathcal{F} (\gamma)$ at $N=8$.}
			\label{fig:fvsFplot}
			\end{figure}

	\subsubsection*{Relation with earlier works}
	    Expression \eqref{eq:GU=GWWlong} for the Loschmidt echo was thoroughly analyzed in \cite{Krapivsky:2017sua}. However, that work did not consider the planar limit. Our result for $f^{\text{(PBC)}} (\tau) $ precisely interpolates between the large $N$ (at fixed $t$) and large $t$ (at fixed $N$) regimes of \cite{Krapivsky:2017sua}.
	    
	\subsubsection*{Mathematical status of the derivation}
        Before concluding this section, we highlight the extent to which the derivation of the DQPT is mathematically rigorous. To begin with, the map between the Loschmidt amplitude and the matrix model is an exact mathematical identity. For the most part, our manipulations based on RMT are also mathematically rigorous, especially all the statements in Section \ref{sec:S4} below.\par 
        However, we do not provide a complete proof of the convergence of $f (\tau)$ to its saddle point value. This prevents our main result from being a full-fledged theorem. Nevertheless, as explained in Remark \ref{rmk:convergence}, we substantiate our procedure by cross-checking the outcome with the existing literature. In particular, at the end of Subsection \ref{sec:DQPTABCLinfty} we have successfully compared our result on the ABC chain with the formulas in the mathematical works \cite{DeanoC,CelsusDetC,Celsus:2020C}, finding perfect agreement.
 
	\subsubsection*{Double-scaling limit and universality class}
	    It is an intriguing open mathematical problem to rigorously establish the universality class to which the DQPT belongs.\par
	    Setting up a Riemann--Hilbert problem, following e.g. in \cite{BDJ:1999,Baik:2000C}, one may take a double-scaling limit in which $\tau \to \tau_{\text{cr}} $  as $N $ goes to $\infty$, with 
	    \begin{equation*}
	        ( \tau_{\text{cr}} - \tau ) \propto N^{-\eta}
	    \end{equation*}
	    for some positive rational exponent $\eta$. However, it seems not possible to find a scaling such that the Riemann--Hilbert problem reduces to that of Painlev\'e II, the latter being the case for the double-scaling limit of the GWW model.\par
	    Therefore, we expect that the DQPT we uncover belongs to a \emph{different universality class} than GWW. From the analysis of the random matrix mechanism behind the transition, we may conjecture that the corresponding equation governing the double-scaling limit is Painlev\'e I. That is to say, the universality class would be the same as two-dimensional quantum gravity \cite{Brezin:1990rb,Gross:1989vs,Douglas:1989ve}, rather than GWW. It would be interesting to prove (or disprove) this conjecture.

\section{Robustness against impurities in the initial state}
\label{sec:S3}
It may happen that, in the preparation of the initial state $\lvert \psi_0 \rangle $, one or more of the spins are not aligned, thus producing a state that slightly differs by the single domain wall state $\lvert \downarrow  \downarrow \cdots  \downarrow \uparrow \uparrow \cdots \uparrow \rangle $. A desirable feature of a quantum quench, that makes the DQPT more easily accessible experimentally, is the robustness of the phase transition against the introduction of such impurities, or defects, in the initial state. In this section we show that it is indeed the case for the quench we are considering. Recent theoretical studies of spin chains (or lattice models) with defects include \cite{Aasen:2016dopC,Aasen:2020jwbC,Roy:2021jusC,Roy:2021xulC,Tan:2022vazC}.\par
\medskip
For concreteness, let us assume that, in the initialization of the quench, a spin $\lvert \uparrow \rangle_{N-p}$ in the position $N-p$ (i.e. the $p^{\text{th}}$ spin from the right) is not set to $\lvert \downarrow \rangle_{N-p}$. That is, we consider the Loschmidt amplitude 
\begin{equation}
\label{eq:defDefectEcho}
    \mathfrak{D}_{N,L} ^{(p)} (t) := \langle \underbrace{\downarrow  \dots \downarrow \downarrow }_{N-p-1} \uparrow  \underbrace{\downarrow  \downarrow \dots \downarrow }_{p}  \underbrace{\uparrow \uparrow \dots \uparrow }_{L-N} \left\lvert e^{-it H_{\text{XY}}} \right\rvert \psi_0 \rangle 
\end{equation}
where the notation $\mathfrak{D}^{(p)}$ stands for \emph{defect}. As usual, we assume for simplicity $L \gg N$ and set $L=\infty$ throughout the analysis. Besides, we consider the chain with PBC, being the analysis for ABC completely analogous.\par

\subsubsection*{Statement of the result}
A careful analysis of the amplitude in presence of an impurity, defined in \eqref{eq:defDefectEcho}, using random matrix theory techniques, leads to 
\begin{equation}
\label{eq:resultDefectleadingN}
    \lim_{N \to \infty} \frac{1}{N^2} \mathfrak{D}_{N+1,\infty} ^{(p)} (t) =   \lim_{N \to \infty} \frac{1}{N^2} \mG_{N, \infty} (t) ,
\end{equation}
for every $t$, meaning that the presence of an impurity does not spoil the DQPT.

\subsubsection*{Derivation of the leading order behaviour}

From the derivation in Subsection \ref{sec:derivquench}, in particular from \eqref{eq:GNintegralgen}, we have that 
\begin{equation*}
    \mathfrak{D}_{N,\infty} ^{(p)} (t) = \int_{U(N-1)} \dd U ~\chi_{\mathsf{A}_p} (U)  ~ \exp \left[-it \mathrm{Tr} \left( U + U^{-1} \right)  \right] ,
\end{equation*}
where $\mathsf{A}_p$ means the $p^{\text{th}}$ antisymmetric representation of $U(N-1)$.\par
We are interested in large $N$ behaviour of the ratio 
\begin{equation}
\label{eq:ratioDefect}
    \frac{\mathfrak{D}_{N+1,\infty} ^{(p)} (t)}{\mG_{N,\infty} (t)} = \left\langle \chi_{\mathsf{A}_p}  \right\rangle_{N,t}
\end{equation}
where the average is taken in the random matrix ensemble \eqref{eq:GU=GWWlong}. To reduce clutter, here and in the rest of the section we omit the dependence on $N,L$ and $t$ when there is no risk of confusion.\par
To begin with, we introduce the generating function of the defects:\footnote{The computations that follow are inspired by the analysis of certain order operators, the Wilson loop operators, in Quantum Field Theory in \cite{Hartnoll:2006C,Russo:2017ngfC,Santilli:2018byiC,Santilli:2021rcpC}.}
\begin{equation*}
    \mathfrak{D} (w) = \sum_{p=0} ^{N} w^p  \left\langle \chi_{\mathsf{A}_p}  \right\rangle .
\end{equation*}
Recall that, when written in terms of the eigenvalues $\left\{ e^{i \theta_a} \right\}_{a=1, \dots, N}$ of the random matrix $U$, the character in the $p^{\text{th}}$ antisymmetric representation is the $p^{\text{th}}$ elementary symmetric polynomial. The generating function then reads 
\begin{subequations}
\begin{align}
    \mathfrak{D} (w) & = \left\langle \det \left( 1+ w U \right)\right\rangle \\
    & = \left\langle \exp \left[\mathrm{Tr} \ln \left( 1+ w U \right) \right] \right\rangle . \label{eq:genfdefectlog}
\end{align}
\end{subequations}
It is explicit in the latter form that the determinant insertion in the matrix model \eqref{eq:GU=GWWlong} shifts the integrand by 
\begin{equation*}
    \exp \left[ - N^2 S_{\text{eff}} ^{(\text{\rm PBC})} (\theta_1, \dots, \theta_N )\right] \ \mapsto  \exp \left[ - N^2 S_{\text{eff}} ^{(\text{\rm PBC})} (\theta_1, \dots, \theta_N ) + N \left( \frac{1}{N} \sum_{a=1} ^{N} \log \left( 1 + w e^{i \theta_a} \right) \right) \right]  ,
\end{equation*}
with the second summand in the right-hand side being $O(N)$. Hence, its effect is negligible at large $N$ when compared to the term $N^2 S_{\text{eff}}$, and it does not alter the saddle point solution. In more detail, the procedure consists in going along the steps in Section \ref{sec:PT}, with the insertion of the term \eqref{eq:genfdefectlog} inside the integrand of the matrix model. We omit the details, as they are extremely similar to the derivation in Section \ref{sec:PT}, except that they carry this additional term. The newly inserted, $w$-dependent piece would only correct the Euler--Lagrange equations for $\rho$ by the $1/N$ contribution 
\begin{equation*}
    \frac{1}{N} \frac{w}{1+w z} 
\end{equation*}
which drops out in the planar limit. We conclude that the saddle point equation, and therefore the saddle point eigenvalue distribution, are not affected by these impurities. In the planar limit, the computation of $\mathfrak{D}^{(p)}$ reduces to
\begin{equation*}
     \mathfrak{D} ^{(p)} \approx e^{-N^2 S_{\text{eff}} ^{(\text{\rm PBC})} [\rho] }  ,
\end{equation*}
whence the advertised result \eqref{eq:resultDefectleadingN}.\par

\begin{cor}[Robustness against small variations of the initial state]
\begin{itemize}
    \item[(i)] Let $f^{(p)} (\tau)$ denote the dynamical free energy obtained replacing the state $\psi_0$ with the state $\lvert N, \kappa = \mathsf{A}_p \rangle$ in the definition of the Loschmidt echo. The partition $ \kappa = \mathsf{A}_pa$ consists of a single column of $p$ boxes, with $p<N$. In the planar limit considered so far, $f^{(p)} (\tau) \approx f (\tau)$.
     \item[(ii)] Similarly, let $\mathscr{P}= \left\{ p_1, \dots , p_m \ : \ 0 \le p_1 < \cdots < p_m <N\right\} $ be a collection of $m<N$ indices in the range $[0,N-1]$, and define the superposition state 
     \begin{equation*}
         \lvert N, \mathscr{P} \rangle:= \frac{1}{\sqrt{m}} \sum_{p \in \mathscr{P}  } \lvert N, \kappa = \mathsf{A}_p \rangle .
     \end{equation*}
     Besides, let $f^{(\mathscr{P})} (\tau)$ denote the dynamical free energy obtained replacing the state $\psi_0$ with $ \lvert N, \mathscr{P} \rangle$ in the definition of the Loschmidt echo. In the planar limit, $f^{(\mathscr{P})}  (\tau)\approx f (\tau)$.
\end{itemize}
\end{cor}
\subsubsection*{Derivation of the defect contribution}
We can go on and evaluate the first non-trivial contribution to the ratio \eqref{eq:ratioDefect}, which is the contribution of the defect in the planar limit. As we have said, the insertion of the impurity has a sub-leading effect that preserves the saddle point solution. We can thus evaluate $\mathfrak{D} (w) $ in the planar limit using the eigenvalue density computed in Subsection \ref{sec:DQPTPBCLinfty}. Rewriting \eqref{eq:genfdefectlog} in the planar limit, we obtain 
\begin{equation*}
    \frac{1}{N} \ln \mathfrak{D} (w) \approx \int_{\Gamma} \dd z \varrho (z) \log \left( 1+ w z \right) .
\end{equation*}
As with the dynamical free energy, it is easier to evaluate the derivative 
\begin{equation}\label{eq:dDdwgenFdefect}
    \frac{ \dd \ }{\dd w } \frac{1}{N} \ln \mathfrak{D} (w)  \approx \int_{\Gamma} \dd z \varrho (z) \frac{z}{1+wz} .
\end{equation}
In the first phase $\tau < \tau_{\text{cr}}$ the integral evaluates to $i 2 \tau$. Exponentiating and expanding, we finally arrive at 
\begin{equation*}
    \left. \frac{\mathfrak{D}_{N+1,\infty} ^{(p)} (t)}{\mG_{N,\infty} (t)} \right\rvert_{\tau < \tau_{\text{cr}} } \approx \frac{ (i t)^p }{p!}
\end{equation*}
valid in the large $N$ limit. In particular, the closer the impurity to the ``outer region'' with all spins up, corresponding to a lower value of $p$, the smaller is its net effect.\par
Once again, lacking an explicit expression for $\varrho (z)$ in the phase $\tau> \tau_{\text{cr}}$, we are unable to provide a closed-form expression for $\mathfrak{D}_{N+1,\infty} ^{(p)} (t)$ in this late time phase. Notwithstanding, formula \eqref{eq:dDdwgenFdefect} remains valid, from which the defect contribution is implicitly determined. In particular, the statements that the defect contribution is sub-leading in $N$ and its expectation value has a finite large-$N$ limit continue to hold.

\subsubsection*{Other impurities in the initial state}
The previous derivation is easily adapted to the case in which the impurity consists of a spin $\lvert \downarrow \rangle_{N+p}$, $p$ positions away from the domain wall. The difference in this case would be to replace $N+1$ with $N-1$, which of course has no effect in the large $N$ limit, and to replace the $p^{\text{th}}$ antisymmetric representation $\mathsf{A}_p$ with the $p^{\text{th}}$ symmetric representation $\mathsf{S}_p$. The defect generating function \eqref{eq:genfdefectlog} is replaced by 
\begin{equation*}
    \mathfrak{D} (w) = \left\langle \exp \left[ - \mathrm{Tr} \ln \left( 1- w U \right) \right] \right\rangle ,
\end{equation*}
while the rest of the argument goes through essentially unchanged.

\section{Dynamical quantum phase transition and finite system size}
\label{sec:S4}
		We now move back to the problem of analyzing the effect of a finite number of qubits on the DQPT.\par
		To this aim, we consider the finite-$L$, discrete matrix models \eqref{eq:G=MMdiscrlong} and take the scaling limit 
		\begin{equation}
		\label{eq:discrLNscaling}
			N, L \to \infty , \qquad \ell = \frac{L}{N} \ge 1 \text{ fixed},
		\end{equation}
		together with the planar limit with $\tau = \frac{t}{N}$ fixed. The additional scaling parameter $\ell$ has the meaning of (the inverse of) density of spins $\downarrow$ in the initial state $\lvert \psi_0 \rangle $.\par
		
	\subsubsection*{Statement of the results}
	The goal of the present section is to argue that the DQPT discussed above persists in the limit \eqref{eq:discrLNscaling} that accounts for the effects of finite ratio $\ell$. Two take-home messages emerge from our analysis:
	\begin{itemize}
	    \item A finite $\ell$ triggers a phase transition, which nevertheless does not spoil the DQPT previously uncovered if $\ell \gtrsim 1.2$;
	    \item In the phase $\tau < \tau_{\text{cr}}$ and with $\ell \ge 2$, the dependence on the system size is exponentially small in $(L-2N)$.
	\end{itemize}
	To distinguish the phase transitions induced by finite $N/L$ effects, as per the first bullet point, from the genuine DQPT we are interested in, we will refer to the former as \emph{finite density-induced phase transition} throughout the current section.
	
	\begin{rmk}[Naturalness of the scaling limit]
	    The scaling limit \eqref{eq:discrLNscaling} retains a finite density $\ell^{-1}$ of spins $\downarrow$. The limit of a large number of lattice sites with finite particle density is a commonly studied thermodynamic limit in condensed matter systems, and is moreover the most suited one to mimic the behaviour of an experimental setup, in which $L$ can be much larger than $N$ but their ratio is inevitably finite. Crucially, the outcomes of this section prove that there is no significant loss of generality in considering the simpler setting with $L\to \infty$ first and then $N \to \infty$. The critical behaviour of the system stays the same for every $\ell \ge 2$, up to \emph{exponentially small} corrections.
	\end{rmk}
	
	\subsubsection*{Discrete matrix models and phase transitions}
		It has been known since the work of Douglas and Kazakov \cite{Douglas:1993iiaC} that the discreteness of a matrix ensemble may induce a phase transition. The argument is simple: the discreteness imposes a lower bound on the distance between any two eigenvalues of the random matrix ensemble, namely the lattice spacing. This lower bound becomes a constraint on the eigenvalue density at large $N$. Whenever a given solution to the saddle point equation ceases to satisfy such a constraint in a region of parameter space, the model undergoes a phase transition.\par
		For the models we consider, which are discretized versions of $U(N)$ and $USp(2N)$ ensembles, we have the following result, see \cite{BaikLiuC}.
		\begin{lem}\label{lemma:discrGenBaik}
		    Let $\rho (z)$ be the eigenvalue density associated to a discrete matrix model with discrete domain $\mathscr{D}$. Then $\rho (z)$ is upper bounded according to 
		    \begin{equation*}
		        \left\lvert \rho (z) \right\rvert \le \left\lvert \rho_{\text{unif.}} ^{\mathscr{D}} \right\rvert 
		    \end{equation*}
		    where $\rho_{\text{unif.}} ^{\mathscr{D}}$ is the normalized, uniform probability density on the domain $\mathscr{D}$.
		\end{lem}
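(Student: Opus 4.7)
The plan is to exploit the squared Vandermonde factor in the joint density of eigenvalues, together with the discreteness of the support $\mathscr{D}$. In the discrete ensembles of interest (the PBC and ABC chains of length $L$), the weight of a configuration $(z_1,\dots,z_N)$ with $z_a \in \mathscr{D}$ is proportional to $\prod_{1\le a<b\le N}(z_a-z_b)^2$ multiplied by a smooth bulk factor. The first observation is that whenever two coordinates coincide on the lattice, the Vandermonde factor vanishes identically, so the admissible configurations are precisely those placing $N$ pairwise distinct eigenvalues on $N$ of the $L$ available lattice sites. This is the exact combinatorial analogue of a Pauli exclusion principle, and it is the only input needed.

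Next I would translate this combinatorial exclusion into a pointwise analytic bound on the large-$N$ density $\rho(z)=\frac{1}{N}\sum_a \delta(z-z_a)$. Fix an infinitesimal window $[z,z+\mathrm{d}z]$ in the continuum envelope of $\mathscr{D}$, and denote by $\mathrm{d}n_{\mathrm{sites}}(z)$ the number of lattice points falling inside. The expected number of eigenvalues in the window is $N\rho(z)\,\mathrm{d}z$, and site-by-site exclusion gives $N\rho(z)\,\mathrm{d}z \le \mathrm{d}n_{\mathrm{sites}}(z)$. Dividing by $N$ produces the bound $\rho(z)\le \mathrm{d}n_{\mathrm{sites}}(z)/(N\,\mathrm{d}z)$, and in the scaling limit $\ell=L/N$ fixed the right-hand side is, by construction, the normalized uniform density $\rho_{\mathrm{unif.}}^{\mathscr{D}}$ associated with $\mathscr{D}$. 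Taking absolute values then yields the statement.

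The main technical obstacle is the passage from the finite-$N$ atomic measure to the smooth continuum density. At finite $N$, $\rho$ is a sum of Dirac masses and the bound must be understood after smearing against test functions whose width is much larger than the lattice spacing but much smaller than the macroscopic scale of $\mathscr{D}$. I would make this precise by characterizing $\rho$ as the weak-$\ast$ limit of empirical measures constrained to $\mathscr{D}$, and then transport the site-wise inequality through the limit by a mollification argument, following the setup of \cite{BaikLiuC} for discrete $\beta$-ensembles. The delicate point is that the bound must hold \emph{uniformly} in $z$, including in the regime where the constraint saturates: these are exactly the frozen regions responsible for the Douglas--Kazakov-type phase transition that will be invoked in the rest of the section.
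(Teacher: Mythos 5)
Your argument is essentially the paper's: the only explicit proof given (of the corollary for the specific PBC/ABC domains, the general statement being quoted from Baik--Liu) rests on exactly the same exclusion principle --- the squared Vandermonde forces the $N$ eigenvalues onto distinct lattice sites, so the ordered angles satisfy $\lvert \theta(s_a)-\theta(s_b)\rvert \ge \tfrac{C}{L}\lvert a-b\rvert$, which in the scaling limit yields $C\lvert \varrho(z)\rvert \le \ell$ via the derivative characterization of the density; your window-counting inequality $N\rho(z)\,\mathrm{d}z \le \mathrm{d}n_{\mathrm{sites}}(z)$ is just the integrated form of that same spacing bound. The one caveat (which the paper itself only gestures at) is that in the real-time application $\rho$ is complex and supported on a deformed contour, so ``expected number of eigenvalues in a window'' is not literally $N\rho\,\mathrm{d}z$ and the bound must be run on absolute values of the parametrized spacing, exactly as the paper does.
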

		The discrete domains in \eqref{eq:G=MMdiscrlong}, arising from the consideration of the physical spin chain, are discrete subsets of $U(1)$ subject to the following additional constraints.
		\begin{lem}\label{lemma:discrconstr}
		 The eigenvalue densities associated to the discrete matrix ensembles \eqref{eq:G=MMdiscrlong} satisfy the condition
		\begin{subequations}
		\begin{align}
			\text{\rm PBC:}& \qquad \lvert \varrho (z) \rvert \le \frac{ \ell }{2\pi} \label{eq:finiteLconstraintPBC} \\
			\text{\rm ABC:}& \qquad \lvert \varrho (z) \rvert \le \frac{ \ell }{\pi}  \label{eq:finiteLconstraintABC} 
		\end{align}
		\label{eq:finiteLformulas}
		\end{subequations}
		$\forall z \in \text{supp} \varrho$. 
		\end{lem}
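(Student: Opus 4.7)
The plan is to deduce both bounds directly from Lemma \ref{lemma:discrGenBaik} by computing, in each case, the uniform probability density on the relevant discrete angular domain. First I would observe that both the PBC and ABC discretizations inherit an exclusion principle from the Vandermonde factor in \eqref{eq:G=MMdiscrlong}: the integrand vanishes on any configuration with two coincident eigenvalues, so each of the $L$ lattice sites is occupied by at most one of the $N$ eigenvalues. Hence the continuum eigenvalue density is bounded above by the density obtained from a fully filled lattice, which is precisely the uniform probability density to which Lemma \ref{lemma:discrGenBaik} refers.

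Next I would identify the two domains explicitly via the discretization \eqref{eq:discrthetaofs} and compute the uniform densities. For PBC the domain consists of $L$ equispaced points on $[0,2\pi]$ with lattice spacing $2\pi/L$. Since $\varrho$ is normalized as $\int \varrho\, d\theta = 1$ with each of the $N$ eigenvalues contributing weight $1/N$, the density associated with every lattice site being occupied equals $(L/N)/(2\pi) = \ell/(2\pi)$, yielding \eqref{eq:finiteLconstraintPBC}. For ABC the analogous counting on the semicircle $[0,\pi]$, with lattice spacing $\pi/L$, gives uniform density $(L/N)/\pi = \ell/\pi$ and hence \eqref{eq:finiteLconstraintABC}.

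The argument is essentially a bookkeeping step once Lemma \ref{lemma:discrGenBaik} has been invoked. The only subtlety --- and the closest thing to an obstacle --- is to keep distinct the $1/N$ normalization per eigenvalue and the geometric weight $1/L$ per lattice site. This becomes transparent by recalling the identification $\varrho(z)\, d\theta \leftrightarrow d\mathsf{a}$ set up in Subsection \ref{sec:planarlimit}, together with the elementary site-density counts $L/(2\pi)$ and $L/\pi$ per unit angle for PBC and ABC respectively, and no further technical input is required.
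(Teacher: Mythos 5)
Your argument is correct and is essentially the paper's own: the paper likewise derives the bound from the exclusion principle imposed by the discreteness, phrased as the minimum-spacing inequality $\lvert \theta(s) - \theta(s')\rvert \ge \frac{C}{L}\lvert s - s'\rvert$ for the ordered eigenvalues, which in the scaling limit yields $C\,\lvert\varrho(z)\rvert \le \ell$ with $C = 2\pi$ (PBC) or $C=\pi$ (ABC), matching your fully-packed-lattice count $(L/N)/C$. The one point both treatments leave slightly informal is the real-time case, where $\varrho$ is complex and supported on a deformed contour so the occupation-number picture must be replaced by a bound on $\lvert\varrho\rvert$; the paper flags this caveat explicitly at the end of its proof.
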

		It is crucial that, by construction, a finite number of qubits induces a discretization of the matrix ensembles \eqref{eq:G=MMlong}. Then, in the ABC chain, the change of variables $x_a=\cos \theta_a$ produces a \emph{non-uniform} discretization of the interval $[-1,1]$, inherited by the uniform discretization of the semicircle. This is encapsulated in the expression \eqref{eq:GSPrealdiscr}.\par
		\begin{proof}[Proof of Lemma \ref{lemma:discrconstr}]
		Lemma \ref{lemma:discrconstr} is a corollary of the more general Lemma \ref{lemma:discrGenBaik}. Since we will only need the content of Lemma \ref{lemma:discrconstr}, we prove it explicitly, building on \cite{Jain:2013pyC,Santilli:2019wvq}.\par
		The derivation of \eqref{eq:finiteLformulas} is as follows. Consider the discrete ensembles in \eqref{eq:G=MMdiscrlong} and use the invariance of the summand to restrict the discrete angles $\left\{ \theta (s_j) \right\}_{j=1, \dots , N}$, as defined in \eqref{eq:discrthetaofs}, to the principal Weyl chamber
		\begin{equation*}
			0 \le \theta (s_1) < \theta (s_2) < \cdots < \theta (s_N) \le C , \qquad C= \begin{cases} 2 \pi & \text{\rm PBC} \\ \pi & \text{\rm ABC} \end{cases}
		\end{equation*}
		at the expense of a factor $\lvert \mathrm{W} (G) \rvert = N!$. Then, one observes that 
		\begin{equation*}
			\lvert \theta (s) - \theta (s^{\prime}) \rvert \ge \frac{C}{L} \lvert s - s^{\prime} \rvert ,
		\end{equation*}
		which, having the planar limit in mind, is conveniently rewritten in the form 
		\begin{equation*}
			\frac{ \left\lvert \frac{s}{N} - \frac{s^{\prime}}{N} \right\rvert }{\lvert \theta (s) - \theta (s^{\prime}) \rvert } \le \frac{L}{C N} .
		\end{equation*}
		Here $C$ is the length of the support onto which the eigenvalues live, which is the circumference $C=2\pi$ for PBC and the semicircle $C=\pi$ for ABC. Obviously, the constraint is ineffective if one first takes $L \gg N$. Instead, expressing everything in terms of the discrete complex variables $z(s)$ and sending $N \to \infty$ with the scaling \eqref{eq:discrLNscaling}, one gets 
		\begin{equation*}
			C \lvert \varrho (z) \rvert \le \ell ,
		\end{equation*}
		which yields \eqref{eq:finiteLformulas}. In imaginary time, the absolute value can be omitted, because $\varrho(z)=\rho (\theta)$ is positive and of real argument. However, to generalize to real time dynamics, we ought to be careful with the absolute values.
		\end{proof}\par
		\medskip
		The rest of the present section is organized as follows.
		\begin{itemize}
		    \item Before delving in the analysis of the fate of the DQPT when considering a realistic chain with $L < \infty$, we mention related results for the imaginary-time model, i.e. the discrete GWW model. These are collected at the beginning of Subsection \ref{sec:finiteLGWW}.
		    \item Then, in the body of Subsection \ref{sec:finiteLGWW} we show explicitly how to derive a finite density-induced phase transition in the imaginary-time PBC chain, and extend the result to the imaginary-time ABC chain.
		    \item Having gained insight in the (arguably simpler) imaginary-time model, we face the effects of a finite system size in the real-time models we are interested in Subsection \ref{sec:finiteLDQPT}.
		    \item The final Subsections \ref{sec:ExpNumericDecay}-\ref{sec:ExpAccurateDecay} are devoted to show that the infinite system Loschmidt echo $\mL_{N, \infty} (t)$ approximates the finite size Loschmidt echo $\mL_{N, L<\infty} (t)$ with exponential accuracy in the first phase.
		\end{itemize}
		To be clear: in Subsections \ref{sec:finiteLGWW}, \ref{sec:finiteLDQPT} \& \ref{sec:ExpAccurateDecay} we work in the scaling limit \eqref{eq:discrLNscaling}, in which $L$ and $N$ are large but their ratio $\ell$ is finite. In Subsection \ref{sec:ExpNumericDecay}, instead, we compare numerically the analytic expectations with the Loschmidt echo at finite values of $L$ and $N$.

	\subsection{Finite size effects in imaginary time}
	\label{sec:finiteLGWW}
	
	\subsubsection*{Exponential accuracy of the infinite system approximation in imaginary time}
	Consider the imaginary-time version of the discrete ensemble \eqref{eq:GU=discrlong}, that is, the discrete GWW model.
	Rigorous estimates for the rate of convergence of such discrete ensemble (and its generalizations) to its continuous analogue \eqref{eq:GU=GWWlong}, from the first phase, have been given in \cite{BaikLiuC}. See also \cite{Perez-Garcia:2013lba} for a discussion in the context of spin chains. It turns out that the error in considering the continuous matrix model instead of the finite-$L$, discrete model is exponentially small in $(L-N)$ \cite{BaikLiuC}.\par
	Moreover, the scaling limit of the discrete GWW model has been extensively analyzed in \cite{Jain:2013pyC} (and later in \cite{Santilli:2019wvq} a generalization of the model), where it is shown that the discreteness of the ensemble induces a Douglas--Kazakov-type \cite{Douglas:1993iiaC} phase transition as a function of the parameter $\ell$.\par
	
	\subsubsection*{Finite density-induced phase transition in imaginary time}
		We now reconsider the imaginary-time PBC chain taking into account the constraint \eqref{eq:finiteLconstraintPBC}. Looking back at the eigenvalue density \eqref{eq:GWWrhoPI}, the condition \eqref{eq:finiteLconstraintPBC} reads
		\begin{equation*}
			1 + 2 \gamma \le \ell.
		\end{equation*}
		It it obvious that, for $\ell \gg 1$, the finite density $\ell^{-1}$ has no effect on the GWW phase transition. More specifically, having a finite value for $\ell$ produces a Douglas--Kazakov phase transition at 
		\begin{equation*}
			\gamma = \gamma_{\star} \equiv \frac{\ell -1}{2} ,
		\end{equation*}
		which does not affect the large $N$ GWW phase transition at $\gamma=\frac{1}{2}$ if $\ell \ge  2 + \varepsilon$ for arbitrary but fixed $\varepsilon >0$. We refer to \cite{Jain:2013pyC} for an exhaustive analysis.\par
		\medskip
		It is an easy task to repeat the argument to the imaginary-time chain with ABC. From the eigenvalue density \eqref{eq:rhoPISpNGWW} and imposing \eqref{eq:finiteLconstraintABC}, we get the same critical value $\gamma_{\star} = \frac{\ell-1}{2}$ for the phase transition induced by the discreteness of the matrix ensemble.\par
		\medskip
		The special value $\ell =2$ has a neat physical meaning. Recall that $\ell =L/N$, thus a chain with PBC and $\ell=2$ in the state $\lvert \psi_0 \rangle $ has exactly half of the qubits $\lvert \downarrow \rangle$ and half $\lvert \uparrow \rangle$. For $\ell <2$, it would be convenient to rephrase the problem as starting with the state 
		\begin{equation*}
			\bigotimes_{k=1} ^{L} \lvert \downarrow \rangle_k = \lvert \downarrow \downarrow \dots \downarrow \rangle 
		\end{equation*}
		and act with the spin-flip operators $\sigma^{+} _{1, \dots, L-N}$ on the last $(L-N)$ qubits.\par
		The value $\ell=2$ is the self-dual point of the $\mathbb{Z}_2$ spin flip duality. At such point, the duality of the model becomes a symmetry.

	\subsection{Finite size effects on the DQPT}
	\label{sec:finiteLDQPT}
		
		We start by recalling that the map $x=\cos \theta$ produces a non-uniform discretization of the interval $[-1,1]$, and hence of $\Gamma$. The constraint \eqref{eq:finiteLconstraintABC} then reads 
		\begin{equation*}
			\lvert \pi \sqrt{1-x^2} \varrho^{\text{(ABC)}} (x) \rvert \le \ell .
		\end{equation*}
		In particular, define 
		\begin{align*}
			m^{\text{(PBC)}} (\tau) & := \max_{z \in \Gamma} \lvert 2 \pi \varrho^{\text{(ABC)}} (z) \rvert , \\ 
			m^{\text{(ABC)}} (\tau) & := \max_{z \in \Gamma} \lvert \pi \sqrt{1-z^2} \varrho^{\text{(ABC)}} (z) \rvert  ,
		\end{align*}
		so that a finite density-induced DQPT takes place at the critical value $\tau_{\star}$ which solves 
		\begin{equation*}
			 m^{\text{(PBC)}} (\tau_{\star}) = \ell \qquad \text{ or } \qquad m^{\text{(ABC)}} (\tau_{\star}) = \ell .
		\end{equation*}
		Both eigenvalue densities have local maxima at $z= \pm 1$, and attain the same extremal value $\sqrt{1+4\tau^2}$. The other local maxima over $\C$ do not satisfy the initial condition, namely do not belong to the undeformed contour in the limit $\tau \to 0$. We then have 
		\begin{equation*}
			\tau_{\star} = \frac{\sqrt{\ell^2 -1}}{2} 
		\end{equation*}
		for both PBC and ABC. Requiring that this finite density-induced phase transition at $\tau_{\star}$ takes place at a late time $\tau_{\star} > \tau_{\text{cr}}$, not to invalidate the derivation of the DQPT discussed above, we are led to the condition $\ell > \ell_{\star} $ where 
		\begin{equation*}
		\frac{\sqrt{\ell_{\star}^2 -1}}{2}  = \tau_{\text{cr}} \quad \Longrightarrow \quad \ell_{\star} \approx 1.2 
		\end{equation*}

	\subsection{Finite vs infinite system: a numerical study}
	\label{sec:ExpNumericDecay}
		To give a quantitative estimate of the error in approximating a realistic chain, consisting of a number $L$ of qubits, with $L=\infty$, we define
		\begin{subequations}
		\begin{align}
			\err_N (\ell, \gamma) & = \ln \frac{  \left\lvert \langle \psi_0 \lvert e^{-\beta H_{\text{XY}}} \rangle \psi_0 \rangle_{L<\infty} \right\rvert^2 }{ \left\lvert \langle \psi_0 \lvert e^{-\beta H_{\text{XY}}} \rangle \psi_0 \rangle_{L=\infty} \right\rvert^2  }   =  2 N^2 \left[ \mathcal{F} (\gamma) \vert_{L< \infty} -  \mathcal{F} (\gamma) \vert_{L= \infty} \right] \label{eq:deferrorthermal} \\ 
			\err_N (\ell, \tau) & = \ln \frac{  \left\lvert \langle \psi_0 \lvert e^{- it H_{\text{XY}}} \rangle \psi_0 \rangle_{L<\infty} \right\rvert^2 }{ \left\lvert \langle \psi_0 \lvert e^{- it H_{\text{XY}}} \rangle \psi_0 \rangle_{L=\infty} \right\rvert^2  }   =  - 2 N^2  \left[ f(\tau)\vert_{L< \infty}  - f(\tau)\vert_{L= \infty} \right] .  \label{eq:deferrorquantum}
		\end{align}
		\end{subequations}
		From \cite{BaikLiuC}, in the imaginary-time scenario we have 
		\begin{equation*}
			\err_N (\ell, \gamma) \approx \ln \left( 1 + e^{-c(\gamma) (L-N)} \right) \approx e^{-c(\gamma) N(\ell -1)} \qquad \forall \ 0 \le \gamma \le \frac{1}{2}, \ \ell > 1+ \delta
		\end{equation*}
		for a function $c (\gamma) >0$ which does not depend on $\ell$ and a suitable $\delta >0$. We have derived above $\delta=1$ at leading order in the planar limit.\par
		Thus, before the GWW phase transition, the error in approximating the imaginary-time system by an infinite number of qubits (i.e. $L\to \infty$ from the onset) is exponentially small in the system size.\par
		\medskip
		One of the outcomes of Section \ref{sec:PT} is that, in the first phase, the real-time and imaginary-time dynamics are related by analytic continuation, a fact that does not hold beyond the respective phase transitions. From that, we are tempted to expect that the finite-$L$ dependence in the first phase is exponential also in the real-time system.\par
		The plots in Figures~\ref{fig:errvstau} \& \ref{fig:errvsell} make manifest that the error \eqref{eq:deferrorquantum} behaves identically to its imaginary-time analogue \eqref{eq:deferrorthermal}. In particular, from Figure~\ref{fig:errvsell} we observe an exponential decay of $\err_N (\ell, \tau)$ as a function of $\ell$.
		\begin{figure}[ht]
			\centering
				\includegraphics[width=0.5\textwidth]{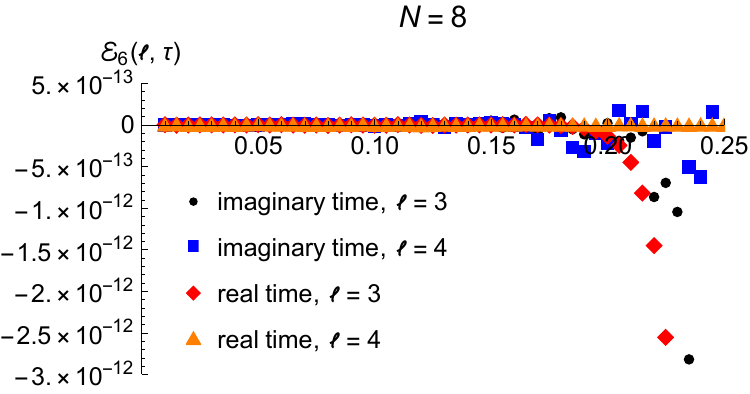}
			\caption{Plot of $\err_N$ as a function of $\tau$ (real time) or $\gamma$ (imaginary time).}
			\label{fig:errvstau}
			\end{figure}\\
			\begin{figure}[ht]
			\centering
				\includegraphics[width=0.5\textwidth]{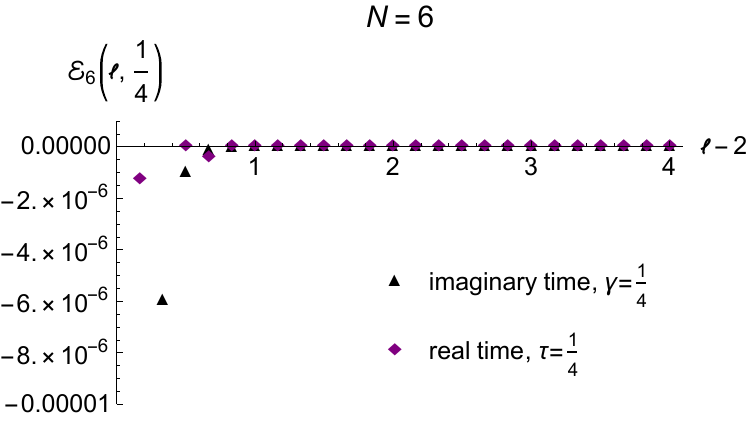}
			\caption{Plot of $\err_N$ as a function of $\ell$.}
			\label{fig:errvsell}
			\end{figure}\par

		\subsection{Exponentially accurate thermodynamic limit}
		\label{sec:ExpAccurateDecay}
			As already mentioned above and elucidated in \cite{Perez-Garcia:2013lba}, borrowing rigorous estimates from the work \cite{BaikLiuC} one proves that, in imaginary time, the ratio 
			\begin{equation*}
				\frac{\lvert \langle \psi_0 \vert e^{- \beta H_{\text{XY}}} \vert \psi_0 \rangle_{L< \infty} \rvert^2 }{ \lvert \langle \psi_0 \vert e^{- \beta H_{\text{XY}}} \vert \psi_0 \rangle_{L= \infty} \rvert^2  } = 1 + O(e^{-c(\gamma )N (\ell-1)}) , \qquad L,N \to \infty \text{ with } \ell-1 >  \delta 
			\end{equation*}
			for a $c(\gamma)>0, 0 \le \gamma< \frac{1}{2}$ and a suitable $\delta >0$. Therefore, for a sufficiently large system, the Loschmidt echo can be approximated by its infinite-$\ell$ limit with error exponentially small in $(L-N)$.\par
			Although the procedure of \cite{BaikLiuC} does not straightforwardly extend to the real time dynamics, Figure~\ref{fig:errvsell} suggests that similar estimates hold. In the rest of this subsection, we argue that this is indeed the case.\par
			\medskip
			We work with the discrete model \eqref{eq:GSPrealdiscr}, i.e. we henceforth focus on the chain with ABC. Following \cite{BaikLiuC}, we first define an auxiliary function 
			\begin{equation*}
				\nu (z) = \left( z+i \sqrt{1-z^2} \right)^L -1
			\end{equation*}
			such that 
			\begin{equation*}
			    \nu (z) =0 \qquad  \forall z \in \scS_L 
			\end{equation*}
			where $\scS_L$ s the discrete domain in \eqref{eq:GSPrealdiscr}. We then apply Cauchy's theorem to write 
			\begin{equation}
			\label{eq:entryHankeldiscr}
				\frac{1}{L} \sum_{x \in \scS_L} x^{a+b-2} e^{-4it x} \sqrt{1-x^2} = \frac{1}{L} \oint_{\Sigma} \frac{\dd x }{2 \pi i } ~ \frac{\nu^{\prime} (x)}{\nu (x)} ~ x^{a+b-2} ~ e^{-4it x} \sqrt{1-x^2} .
			\end{equation}
			The integration contour, shown in Figure~\ref{fig:contourSigma}, is 
			\begin{equation*}
				\Sigma = \Sigma_{-} \cup \Sigma_{1} \cup \left( - \Sigma_{+} \right) \cup \left( - \Sigma_{-1} \right) ,
			\end{equation*}
			where 
			\begin{align*}
				\Sigma_{\pm} & = \left[ -1 + \varepsilon , 1-\varepsilon \right] \pm i \varepsilon^{\prime} \\
				\Sigma_{\pm 1} &= \pm (1-\varepsilon ) + i \left[ - \varepsilon^{\prime}, \varepsilon^{\prime} \right] 
			\end{align*}
			for arbitrarily small $\varepsilon, \varepsilon^{\prime} >0$. Notice that we are cutting an arbitrarily small neighbourhood of the edges $x= \pm 1$ of size $\varepsilon$. This choice excises the accumulation points $x=\pm 1$ of the zeros of the orthogonal polynomials we will introduce momentarily. Because of the square root behaviour of the measure near the edges, the integrand vanishes linearly in $\varepsilon$.\par
			The contour integral on the right-hand side of \eqref{eq:entryHankeldiscr} is solved by residues and, thanks to the suitable choice of function $\nu (z)$, yields the left-hand side.\par
			Along the way, observe that, writing $z= e^{i \mathrm{ArcCos} (x)}$, with $\mathrm{ArcCos} (x)$ the arccosine function, 
			\begin{equation*}
				\tilde{\nu} (x) := \sqrt{1-x^2}\frac{\nu^{\prime} (x)}{i L \nu (x)} = - \frac{\nu (x)-1}{\nu (x)} = \frac{z^L}{1-z^L} .
			\end{equation*}
			Here $\tilde{\nu} (x) $ is a shorthand notation. For $x \in \Sigma_{-}$ (resp. $\Sigma_{+}$), $z= e^{i \mathrm{ArcCos} (x)}$ is mapped to a contour slightly inside (resp. outside) the semicircle $\left\{ e^{i \theta} \ : \ 0 \le \theta \le \pi \right\}$, see the plot in Figure~\ref{fig:backmapSigma}.\par
			\begin{figure}[ht]
			\centering
				\includegraphics[width=0.75\textwidth]{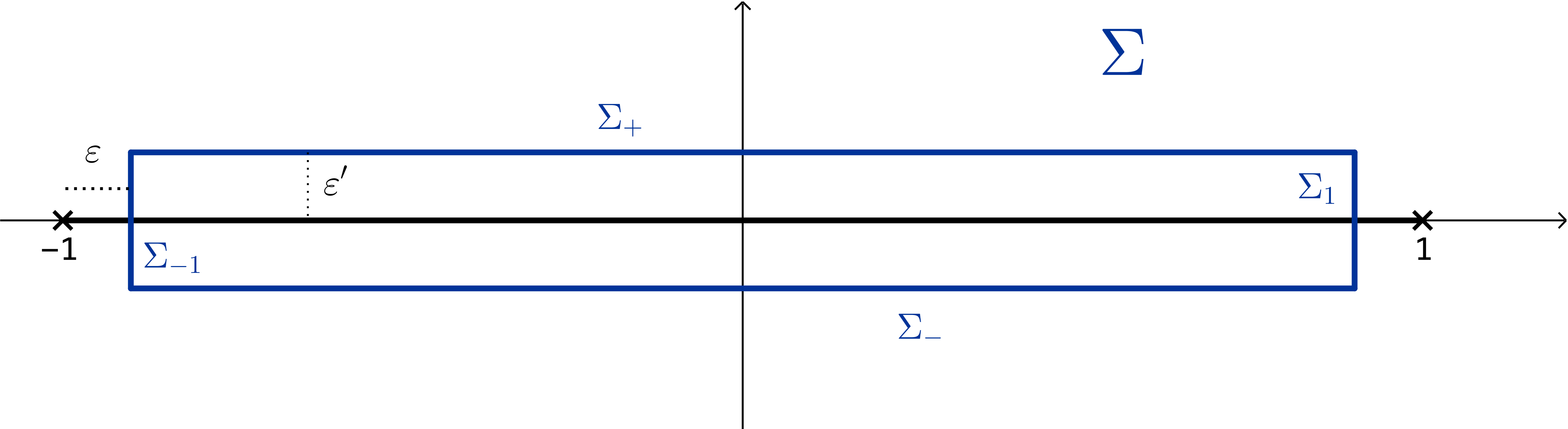}
			\caption{Choice of contour $\Sigma$ for the application of Cauchy's theorem.}
			\label{fig:contourSigma}
			\end{figure}\par
			\begin{figure}[ht]
			\centering
				\includegraphics[width=0.4\textwidth]{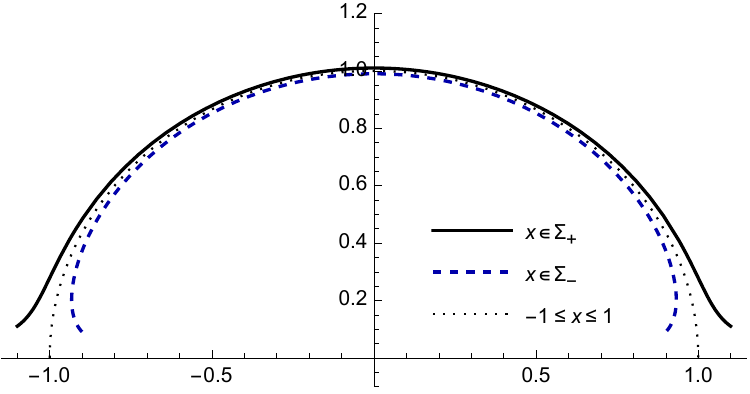}
			\caption{Image of $\Sigma_{\pm}$ under $\mathrm{ArcCos}$.}
			\label{fig:backmapSigma}
			\end{figure}
			Taking the Hankel determinant of \eqref{eq:entryHankeldiscr} and using standard row and column manipulations we get 
			\begin{equation}
			\label{eq:HankeldetCauchy}
				\mG_{N,L < \infty} ^{\text{(ABC)}} (t)  = \det_{1 \le a,b, \le N} \left[  \oint_{\Sigma} \frac{\dd x }{2 \pi } ~ P_{a-1} (x) P_{b-1} (x) ~ e^{-4it x} \tilde{\nu} (x) \right] ,
			\end{equation}
			where $\left\{ P_j \right\}_{j \in \mathbb{N}}$ is a system of monic polynomials. In practice we use the orthogonal polynomials with respect to the measure $e^{i N \lambda_{\text{D}} x} \dd x$ studied in \cite{DeanoC}, with identification $\lambda_{\text{D}} =-4 \tau$.\par
			Note that, in \eqref{eq:HankeldetCauchy}, we have used the invariance of the determinant under addition of rows and columns to replace the factor $x^{a+b-2}$, that arises in taking the Hankel determinant, with $P_{a-1} (x) P_{b-1} (x) =x^{a-1} x^{b-1} +$ lower terms.\par
			\medskip
			From the analytic properties of the integrand in each entry of \eqref{eq:HankeldetCauchy}, it follows that 
			\begin{equation*}
				\int_{\Sigma_{+}} \frac{\dd x }{2 \pi } ~ P_{a-1} (x) P_{b-1} (x) ~ e^{-4it x} = h_{a-1} \delta_{ab} ,
			\end{equation*}
			where $h_a = h_a (t)$, valid for $\tau$ distinct from every point at which any of the $\left\{ h_{a} \right\}_{a=0, \dots , N-1}$ vanishes. Importantly, a sufficient condition at leading order in $N$ is $\tau<\tau_{\text{cr}}$, in perfect agreement with the rest of our discussion so far.\par
			With the planar limit in mind, we restrict to $\tau<\tau_{\text{cr}}$ and write \cite{BaikLiuC}:
			\begin{equation}
			\label{eq:Hankeldiscdet2}
				\mG_{N,L < \infty} ^{\text{(ABC)}} (t)  = \det_{1 \le a,b, \le N} \left[ h_{a-1} \delta_{ab} + \int_{\Sigma_{+} \cup \Sigma_{-}} \frac{\dd x }{2 \pi }  P_{a-1} (x) P_{b-1} (x) ~ e^{-4it x}  v (x) \right] ,
			\end{equation}
			where we have neglected the $O(\varepsilon^{\prime})$ pieces, have changed the orientation of the integral along $- \Sigma_{+}$ and eventually introduced the shorthand notation 
			\begin{equation*}
				v (x) = \begin{cases} -1 - \tilde{\nu} (x) & x \in \Sigma_{+} \\ \tilde{\nu} (x) & x \in \Sigma_{-} . \end{cases} 
			\end{equation*}
			The auxiliary function $v(x)$ accounts for both the change of orientation and the separation of the diagonal term $h_{a-1} \delta_{ab}$. In terms of the variable $z=e^{i \mathrm{ArcCos} (x)}$, it reads 
			\begin{equation*}
				v (x) = \begin{cases} \frac{1}{z^L -1} & x \in \Sigma_{+} \\  \frac{z^L}{1-z^L}  & x \in \Sigma_{-}  \end{cases} 
			\end{equation*}
			thus is exponentially suppressed in $L$ in both cases.\par
			Next, we bring out $\left\{ h_{a-1} \right\}_{a=1, \dots , N}$ from the corresponding columns in \eqref{eq:Hankeldiscdet2}, noticing that 
			\begin{equation*}
				\mG_{N,L = \infty} ^{\text{(ABC)}} (t)  = \prod_{a=0} ^{N-1} h_a 
			\end{equation*}
			which stems from the basics of random matrix theory. We emphasize again that we are using polynomials that are orthogonal with respect to the \emph{continuous measure} on $[-1,1]$, not with respect to the discrete one.\par
			We finally arrive at 
			\begin{equation*}
				\frac{ \mG_{N, L < \infty} ^{\text{(ABC)}} (t)  }{ \mG_{N, L = \infty} ^{\text{(ABC)}} (t)  } = \det \left[  \id_N + \mathcal{K} \right]
			\end{equation*}
			where $\id_N $ is the identity $N \times N $ matrix and, denoting $\left\{  p_{j} \equiv \frac{1}{\sqrt{h_j}} P_j \right\}$ the orthonormal polynomials, 
			\begin{equation*}
				\mathcal{K}_{ab} = \int_{\Sigma_{+} \cup \Sigma_{-}} \frac{\dd x }{2 \pi } ~ p_{a-1} (x) p_{b-1} (x) ~ e^{-4it x}  v (x) .
			\end{equation*}\par
			\medskip
			So far we have mostly adapted and extended \cite{BaikLiuC} to the non-uniformly discretized Hankel determinant \eqref{eq:GSPrealdiscr}. It remains to show that the matrix $\mathcal{K}$ is exponentially damped in $N,L \gg 1$.\par
			\medskip
			We have already observed that $v(x)$ is exponentially small as a function of $L$: 
			\begin{equation*}
				v(x) \approx \begin{cases} z^{L}  & \lvert z \rvert < 1 \\ z^{-L}  & \lvert z \rvert > 1 \end{cases} \qquad L \to \infty , \quad z= e^{i \mathrm{ArcCos} (x)} .
			\end{equation*}
			For the polynomials $p_j$, we use the estimate of \cite[Thm.2.2]{DeanoC}, which holds on $\Sigma_{\pm}$ for $\varepsilon >0$ arbitrarily small but finite, 
			\begin{equation*}
				p_j (x) \approx \frac{ q  (x)^{j+ \frac{1}{2}} }{(x^2-1)^{\frac{1}{4}}}  e^{\frac{ i t}{q (x)}} , \qquad j \to \infty ,
			\end{equation*}
			where $q (x) \equiv \frac{1}{2} \left[ x + (1-x^2)^2 \right]$. In the regime $0 < \varepsilon^{\prime} \ll \varepsilon \ll \frac{1}{2}$ we can safely assume $\lvert x \rvert <1$, which yields $\lvert q (x) \rvert <1$ and hence $\lvert \mathcal{K}_{ab} \rvert < c_1 e^{-c_2 L} $ for suitable positive, $L$-independent quantities $c_1,c_2$.\par
			However, in order to recover the original system \eqref{eq:GSPrealdiscr} we must modify the contour $\Sigma$ to involve the edges $x=\pm 1$. In a $\varepsilon$-neighbourhood of $x=\pm 1$, that is, $\lvert x \rvert = 1 + \varepsilon$, we have that $v(x)$ is suppressed approximately as $1/\lvert x \rvert^{L}$, whereas the product of the orthonormal polynomials grows proportionally to $\lvert x \rvert^{a+b-2}$. Because $a,b \le N$, a sufficient condition for $\lvert \mathcal{K}_{ab} \rvert$ to decay exponentially in $L$ is 
			\begin{equation*}
				L > 2  N -2 ,
			\end{equation*}
			which in the scaling limit \eqref{eq:discrLNscaling} becomes $\ell \ge 2$.\par
			
			\subsubsection*{Summary: finite versus infinite system}
			
			Let us summarize our finding for $L< \infty$, its thermodynamic limit with scaling \eqref{eq:discrLNscaling}, and its relation with the system in which we set $L= \infty$ from the onset.
			\begin{itemize}
				\item The Loschmidt echo in the infinite system approximation undergoes a third order DQPT. This is shown in Subsections \ref{sec:DQPTABCLinfty}-\ref{sec:DQPTPBCLinfty}.
				\item A system of $L$ qubits and finite ratio $\ell$ experiences the same phase transition, in the scaling limit \eqref{eq:discrLNscaling}, if the density satisfies $\ell > \ell_{\star} \approx 1.2$. The bound $\ell_{\star}$ is derived in Subsection \ref{sec:finiteLDQPT}.
				\item Furthermore, in the phase $\tau <\tau_{\text{cr}}$, the approximation of the finite density Loschmidt echo $\mL_N (t)\vert_{L< \infty}$ by $\mL_N (t) \vert_{L= \infty}$ in the thermodynamic limit \eqref{eq:discrLNscaling} is exponentially accurate if $\ell \ge 2$. This bound is proven in Subsection \ref{sec:ExpAccurateDecay}.
			\end{itemize}
			Let us stress the distinction between $\ell > \ell_{\star}$, which guarantees that the DQPT is not overturned by finite-$\ell$ effects (i.e. finite density of spins $\downarrow$ in the initial state), and the requirement $\ell \ge 2$, which guarantees that the results in the $L = \infty$ approximation reliably describe a finite chain.

\section{Odd \texorpdfstring{$N$}{N} and the quantum speed limit}
\label{sec:Nodd}
It was highlighted at the beginning of the note that the Loschmidt echo $\mL_{N,L} (t)$ with $N \in 2 \mathbb{N} +1$ shows damped oscillations at late times, vanishing at infinitely many points. By definition, the lowest time at which the Loschmidt echo vanishes is the time at which the evolved state $e^{i t H_{\text{XY}}} \lvert \psi_0 \rangle $ becomes orthogonal to the initial state $ \lvert \psi_0 \rangle$, and goes under the name of \emph{Quantum Speed Limit (QSL)}. We denote the corresponding value of $\tau $ as $\tau_N ^{\text{QSL}}$. Explicitly:
\begin{equation}
\label{eq:defqsltN}
    \tau_N ^{\text{QSL}} := \min \left\{ \tau > 0 \ : \ \mL_{N,L} (N \tau) = 0 \right\} .
\end{equation}
The interconnection between the quantum speed limit and dynamical phase transitions was elucidated in \cite{Zhou:2021bfo}, see also \cite{Lupo:2016C,Fogarty:2020C} for other related works on QSL.\par
The presence of zeros in $\mL_{2n+1,L} (t)$, $\forall n \in \mathbb{N}$, implies that the dynamical free energy 
\begin{equation*}
    f (\tau) = - \frac{1}{2 (2n+1)^2} \ln \mL_{2n+1,L} (t) ,
\end{equation*}
and even more so its limiting value for $n \to \infty$, has radically different analytic properties compared to the dynamical free energy along $N=2n \in 2 \mathbb{N}$. Concretely, $\tau_N ^{\text{QSL}} $ signals the first point at which $f(\tau)$ is singular, for every finite odd $N$.\par
Evaluating $f (\tau)$ exactly as a function of $\tau$ for several odd values of $N$, one is lead to the observation that $\tau_N ^{\text{QSL}} $ is progressively moved toward $\tau_{\text{cr}}$, thus establishing a connection with the dynamical free energy computed at even values of $N$. Inspired by the numerical evidence, we are able to prove the following result.
\begin{thm}\label{thm:QSL}
    For every $N \in 2\mathbb{N}+1$, let $\tau_N ^{\text{QSL}} $ be defined as in \eqref{eq:defqsltN}. For either choice of boundary conditions, $\exists$ $\tau^{\text{QSL}} \ge 0 $ such that 
    \begin{equation}
    \label{eq:existencelimit}
        \lim_{n \to \infty} \tau^{\text{QSL}} _{2n+1} = \tau^{\text{QSL}} .
    \end{equation}
    Moreover, 
    \begin{equation}
    \label{eq:tauqsl=taucr}
        \tau^{\text{QSL}} = \tau_{\text{cr}} ,
    \end{equation}
    with $\tau_{\text{cr}}$ defined in \eqref{eq:defeqtaucritical}.
\end{thm}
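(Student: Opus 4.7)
The plan splits the claim into a lower bound $\liminf_{k\to\infty}\tau^{\text{QSL}}_{2k+1}\ge \tau_{\text{cr}}$ and a matching upper bound $\limsup_{k\to\infty}\tau^{\text{QSL}}_{2k+1}\le \tau_{\text{cr}}$; combined, these yield both the existence of the limit \eqref{eq:existencelimit} and the identity \eqref{eq:tauqsl=taucr} in one stroke. The workhorse is the orthogonal-polynomial representation of $\mG_N(t)$: wherever the leading principal minors $\mG_1(t),\dots,\mG_N(t)$ are all nonzero, the monic orthogonal polynomials $\{\Phi_k(z;t)\}_{k=0}^{N-1}$ on the unit circle for the complex weight $e^{-it(z+z^{-1})}\dd z/(2\pi iz)$ exist and one has $\mG_N(t)=\prod_{k=0}^{N-1}\gamma_k(t)$ in terms of their squared norms. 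Thus $\tau^{\text{QSL}}_N$ is precisely the smallest $\tau>0$ at which some $\gamma_k(N\tau)$ vanishes, and the proof reduces to locating this first vanishing in the planar scaling.

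For the lower bound I would combine Szeg\H{o}'s strong limit theorem with Main Result~\ref{main1}. Szeg\H{o} applied to the analytic symbol $f(z)=e^{-it(z+z^{-1})}$ gives $\mG_N(t)\to e^{-t^2}\ne 0$ at every fixed $t$, which already yields $\tau^{\text{QSL}}_{2k+1}>0$ for every $k$ but not yet a uniform lower bound. Uniformity in the scaled variable comes from the planar analysis of Main Result~\ref{main1}: its saddle-point derivation through \eqref{eq:ComplexSPEUN} is insensitive to the parity of $N$, and for $\tau<\tau_{\text{cr}}$ the extremising density $\varrho$ is smooth on a connected contour $\Gamma$ with its complex zeros $z_\pm(\tau)$ strictly off $\Gamma$. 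A steepest-descent (Riemann--Hilbert) analysis of the jump problem for the $\Phi_k$, valid in this open first phase, then produces strong asymptotics for the $\gamma_k$ that are uniformly nonzero on every compact subinterval of $(0,\tau_{\text{cr}})$, forcing $\liminf_{k\to\infty}\tau^{\text{QSL}}_{2k+1}\ge \tau_{\text{cr}}$.

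For the matching upper bound I would exploit the Adler--van~Moerbeke connection \cite{Adler99} between the Toeplitz determinants $\mG_N(t)$ and the tau functions of the semi-infinite Toda lattice. The Toda bilinear identities couple $\mG_{N-1},\mG_N,\mG_{N+1}$ and equivalently constrain the Verblunsky coefficients of the $\Phi_k$. At $\tau=\tau_{\text{cr}}$ the density $\varrho$ degenerates as the complex zero $z_+(\tau_{\text{cr}})$ hits $\Gamma$; this opens a double-scaling window in which the parity-dependent oscillations of $\mL_N(t)$, already visible in Fig.~\ref{fig:Ltvst_N}, translate via the Toda flow into a sign change of the relevant $\gamma_k$ along the odd subsequence. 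Tracking the induced zero of $\mG_{2k+1}$ through the double-scaling Riemann--Hilbert problem localises it just above $\tau_{\text{cr}}$ and shows it converges to $\tau_{\text{cr}}$ as $k\to\infty$.

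The hard part is this upper bound. Leading-order planar analysis is parity-blind and controls only $|\mG_N|$, whereas a genuine zero of the complex Toeplitz determinant requires a refined Toda/RH analysis precisely at the edge of the first phase; on top of this one must certify that the zero so produced is the \emph{smallest} positive zero --- i.e.\ the QSL --- rather than a later one. Establishing this minimality, presumably via monotonicity of a suitable Toda variable across $\tau=\tau_{\text{cr}}$ on the odd subsequence, is where most of the technical effort would be spent.
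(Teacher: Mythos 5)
Your overall architecture (orthogonal polynomials with $\mG_N=\prod_k h_k$ so that $\tau^{\text{QSL}}_N$ is the first vanishing of a norm, Szeg\H{o}'s theorem for non-vanishing at fixed $t$, the planar first-phase analysis, and the Adler--van Moerbeke/Toda connection) overlaps heavily with the paper's toolbox, and your $\liminf/\limsup$ decomposition would indeed deliver existence and identification simultaneously. The lower bound you sketch is credible: the paper reaches the same conclusion, though by a lighter route --- rather than a full steepest-descent Riemann--Hilbert analysis of the $\gamma_k$, it uses the approximate Toda identity $\mG_{N+1}^2\,\partial_t^2\ln\mG_{N+1}\approx-4\,\mG_N\mG_{N+2}$ to sandwich an odd-$N$ quantity between even-$N$ quantities whose planar limit is already known to equal $-4\tau^2$ and to be analytic up to $\tau_{\text{cr}}$, which forces $\tau^{\text{QSL}}\ge\tau_{\text{cr}}$ without any RH machinery.

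The genuine gap is your upper bound. You propose to produce a zero of $\mG_{2k+1}$ just above $\tau_{\text{cr}}$ by tracking a sign change of $\gamma_k$ through ``the double-scaling Riemann--Hilbert problem'' at the edge of the first phase, and you yourself flag that certifying minimality of that zero is unresolved. But the double-scaling analysis at $\tau_{\text{cr}}$ is precisely what the paper does \emph{not} have --- it is explicitly listed as an open problem (the authors only conjecture Painlev\'e I behaviour there), so your closing step rests on an analysis that is not available and whose outcome (a sign change on the odd subsequence, localised just above $\tau_{\text{cr}}$) is asserted rather than derived. The paper closes the argument differently and more cheaply: it establishes existence of the limit first, from the result of Celsus--Dea\~no--Huybrechs--Iserles that the zeros of $h_{2n+1}$ are asymptotically equispaced in $t$, so in the scaled variable $\tau=t/(2n+1)$ they drift left and give a non-increasing sequence of upper bounds; it then differentiates the Toda identity once more in $t$ and matches the location of the \emph{first non-analyticity} of the two sides in the planar limit --- the odd-$N$ side is singular first at $\tau^{\text{QSL}}$ (using $\mG_N(t)\propto(t-N\tau^{\text{QSL}}_N)$ near the first zero), the even-$N$ side at $\tau_{\text{cr}}$ --- which forces $\tau^{\text{QSL}}=\tau_{\text{cr}}$ with no local analysis at the critical point and no separate minimality argument. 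You would need either to supply the double-scaling analysis in full or to replace your upper bound with an argument of this analyticity-matching type.
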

The rest of the current section is devoted to the proof of Theorem \ref{thm:QSL}. It is divided in three steps:
\begin{itemize}
    \item[1.] Prove the existence of the limit \eqref{eq:existencelimit};
    \item[2.] Prove that $ \tau^{\text{QSL}} \ge \tau_{\text{cr}} $;
    \item[3.] Given the lower bound, prove that $ \tau^{\text{QSL}} = \tau_{\text{cr}} $.
\end{itemize}
The proof is done assuming ABC, and the result for PBC is a consequence of it and the identity \eqref{eq:ABCequals2PBC}. Furthermore, we assume $L= \infty$ (cf. Subsection \ref{sec:ExpAccurateDecay}).

\subsubsection*{Step 1. Existence of the limit}
The first step is the existence of the limit \eqref{eq:existencelimit} and of the corresponding limiting value $\tau^{\text{QSL}}$.\par
Consider $\mG_{N,\infty} ^{(\text{\rm ABC})} $ as given in \eqref{eq:GSPreallong}. By Andreief's identity, it equals a Hankel determinant, as already discussed in Subsection \ref{sec:ExpAccurateDecay}. For every $N$, the Hankel determinant is associated to a system of orthogonal polynomials, which is closely related to those studied in \cite{DeanoC,CelsusDetC,Celsus:2020C}. Indeed, the difference between \eqref{eq:GSPreallong} and the Hankel determinants studied in \cite{CelsusDetC} is sub-leading in $N$ and we neglect it, having the planar limit in mind.\par
More precisely, denoting $\left\{ P_j \right\}$ the system of monic orthogonal polynomials satisfying 
\begin{equation*}
    \int_{-1} ^{1} \dd x ~ P_a (x) P_b (x) e^{-i4t x} = h_{a} \delta_{ab} ,
\end{equation*}
one has 
\begin{equation*}
    \lim_{N \to \infty} \frac{1}{N^2} \ln \mG_{N,\infty} ^{(\text{\rm ABC})} = \lim_{N \to \infty} \frac{1}{N} \sum_{a=1} ^{N} \ln \left( \frac{ h_{a-1}}{N} \right) ,
\end{equation*}
where the dependence of $\left\{ h_a \right\}_{a \ge 0}$ on $\tau$ is left implicit. We have arranged the inverse powers of $N$ in the right-hand side to highlight the ratios that have a finite large $N$ limit.\par
Zeros of the Loschmidt amplitude, and hence of the echo, are associated to degeneration of the system of orthogonal polynomials. Indeed, if for some  $a=1, \dots, N$ it happens that 
\begin{equation*}
    \left. h_{a-1} \right\rvert_{\tau_{0} ^{a}} =0 \qquad \text{ for a } \tau_{0} ^{a} >0 ,
\end{equation*}
the corresponding polynomial $P_{a-1} (x)$ ceases to be orthogonal to $P_{a-2} (x)$ \cite{CelsusDetC}. As a consequence, at $\tau_N ^{\text{QSL}}$ the zeros of one of the polynomials $\left\{ P_{a-1} \right\}_{a=1, \dots, N}$ coincide with the zeros of the polynomial preceding it in the hierarchy. By definition of QSL, it is the first value of $\tau$ for which this degeneracy takes place.\par
\medskip
It is proved in \cite{CelsusDetC} that, for large $t$ and fixed $N=2n+1$, the zeros of $h_{2n+1}$ are equispaced as functions of $t$. Regardless of the zeros before the late time regime becomes valid, expressing the zeros as functions of $\tau=t/(2n+1)$ moves all such zeros to the left as $n$ is increased. We infer that, for every fixed $n \in \mathbb{N}$, $\tau^{\text{QSL}} _{2n+1} < \infty$ and the sequence is non-increasing asymptotically in $n$. Stated more formally, we have derived the existence of a positive upper bound on $\tau^{\text{QSL}} _{2n+1} $ which is non-increasing in $n$:
\begin{equation*}
    \exists \tau^{\text{u.b}} _{2n+1} \ \forall n \in \mathbb{N} \quad \text{such that } \quad  \tau^{\text{QSL}} _{2n+1} \le \tau^{\text{u.b}} _{2n+1} \ \text{ and } \  \tau^{\text{u.b}} _{2n+1} \le \tau^{\text{u.b}} _{2n+1+p} \ \forall p \in \mathbb{N} .
\end{equation*}
\par
Note that the argument does not exclude the possibility that $\lim_{n \to \infty} \tau^{\text{QSL}} _{2n+1}  $ vanishes. We rule out this possibility in the next step.

\subsubsection*{Step 2. Lower bound on the QSL}
The second and third step rely on the approximate identity 
\begin{equation}
\label{eq:Todaid}
	\mG_{N+1}^2 \frac{ \partial^2 \ }{\partial t^2} \ln \mG_{N+1} \approx - 4 \mG_N \mG_{N+2} ,
\end{equation}
which holds for ABC at leading order in $N$. (Here we omit the dependence on $L$ and on the boundary conditions to reduce clutter). In fact, it is possible to write down an exact identity, with no approximation in $N$, which however we will not use. Identity \eqref{eq:Todaid} is a form of the Toda equation, appeared earlier in e.g. \cite{Basor:2018C,CelsusDetC}, to which we refer for the details of the proof.\footnote{See also \cite{Adler99,Adler03C} for more on the relation between the matrix ensembles \eqref{eq:G=MMlong} and the Toda integrable hierarchy.}\par
\medskip
For the second step, we recall the extension of Szeg\H{o}'s theorem \cite{Johansson:1997C} quoted in Theorem \ref{thm:Szego}. Because it does not assume that $N$ is even, nor the reality of the matrix model, it applies to the present situation. We thus have 
\begin{equation*}
    \lim_{N \to \infty} \ln \mG_{N} (t) = - 2 t^2 
\end{equation*}
in the strict large $N$ limit, without scaling $t$. By analyticity of the Loschmidt amplitude, we have that the estimate extends to the limit with scaling, as long as $\tau $ remains away from $ \tau^{\text{QSL}}$. That is, 
\begin{equation*}
    \lim_{N \to \infty} \frac{1}{N^2}\ln \mG_{N} ^{\text{(ABC)}} (N \tau ) = - 2 \tau ^2  \qquad \tau < \tau^{\text{QSL}} - \varepsilon 
\end{equation*}
for an arbitrary $\varepsilon >0 $.\par
We now look back at \eqref{eq:Todaid}, which we consider for $N=2n+1\in 2 \mathbb{N}+1$. Because $N+1=2(n+1)$ is even, the results of the previous sections apply to $\mG_{N+1} ^{\text{(ABC)}}$. Therefore, after dividing both sides of \eqref{eq:Todaid} by $(-4)$ and taking the logarithm, the left-hand side reads at large $N$ 
\begin{equation*}
   \lim_{n \to \infty} \frac{1}{(2n+2)^2} \ln \left[ - \frac{1}{4} \mG_{2n+2}^2 \frac{ \partial^2 \ }{\partial t^2} \ln \mG_{2n+2} \right] = -4 \tau^2  \qquad \tau < \tau_{\text{cr}} ,
\end{equation*}
whereas the right-hand side reads 
\begin{equation*}
     \lim_{n \to \infty} \frac{1}{(2n+2)^2}\ln \left[ \mG_{2n+1} \mG_{2n+3} \right] = - 4 \tau^2  \qquad \tau < \tau^{\text{QSL}} - \varepsilon .
\end{equation*}
Besides, we know that the left-hand side is analytic for $\tau < \tau_{\text{cr}}$. Imposing the equality of the two limiting values we get the constraint 
\begin{equation*}
    \tau^{\text{QSL}} - \varepsilon \ge \tau_{\text{cr}} \qquad \forall \varepsilon >0  .
\end{equation*}

\subsubsection*{Step 3. Identify the QSL}
We have accomplished step 2 of the proof exploiting the relation \eqref{eq:Todaid}, which bridges between even and odd $N$. Let us recall that it is an approximate identity, which holds up to terms that are sub-leading in $N$, which is sufficient for our purposes.\par
Let us rewrite \eqref{eq:Todaid} by sifting $N+1 \mapsto N$:
\begin{equation}
\label{eq:rewriteToda}
	\mG_{N}^2 \frac{ \partial^2 \ }{\partial t^2} \ln \mG_{N} \approx - 4 \mG_{N-1} \mG_{N+1} ,
\end{equation}
again understood in the limiting sense 
\begin{equation*}
    \lim_{n \to \infty } \frac{1}{(2n+1)^2} \ln \left[ - \frac{1}{4} \mG_{2n+1}^2 \frac{ \partial^2 \ }{\partial t^2} \ln \mG_{2n+1} \right] =  \lim_{n \to \infty } \frac{1}{(2n+1)^2} \ln \left[ \mG_{2n} \mG_{2n+2} \right] .
\end{equation*}
We also observe that, near the first zero, 
\begin{equation*}
    \mG_N (t) \propto  (t-N \tau^{\text{QSL}} _N) ,
\end{equation*}
which again is a consequence of the analysis in \cite{CelsusDetC} and we neglect sub-leading terms. The linearity near the zero implies that 
\begin{equation*}
    	\lim_{\tau \to \tau^{\text{QSL}} _{2n+1}} \mG_{2n+1}^2 \frac{ \partial^2 \ }{\partial t^2} \ln \mG_{2n+1} = \text{constant} .
\end{equation*}
Taking one further derivative of \eqref{eq:rewriteToda}, we have on the left-hand side 
\begin{equation*}
    \frac{\partial \ }{\partial t} \left[ 	\mG_{N}^2 \frac{ \partial^2 \ }{\partial t^2} \ln \mG_{N} \right] = 2 \mG_N \left( \frac{\partial \mG_N }{\partial t} \right) \left( \frac{ \partial^2 \ }{\partial t^2} \ln \mG_{N}  \right) + \mG_N ^2 \left( \frac{ \partial^3 \ }{\partial t^3} \ln \mG_{N}  \right) 
\end{equation*}
and on the right-hand side 
\begin{equation*}
     -4 \frac{\partial \ }{\partial t} \left[ 	\mG_{N-1} \mG_{N+1} \right] = -4 \mG_{N-1} \mG_{N+1} \left[ \frac{\partial \ }{\partial t} \ln \mG_{N-1} + \frac{\partial \ }{\partial t} \ln \mG_{N+1}  \right] .
\end{equation*}
We now set $N=2n+1$, take the logarithm of both expressions, divide by $(2n+1)^2$, and eventually take the limit $n \to \infty$. Equating the two sides, as dictated by having differentiated \eqref{eq:rewriteToda}, we have that
\begin{itemize}
    \item[(lhs)] the left-hand side is continuous function of $\tau$, with is analytic in $\tau < \tau^{\text{QSL}}$ and has its first non-analytic point there.
    \item[(rhs)] In turn, the right-hand side is known from the derivation in Subsection \ref{sec:DQPTABCLinfty} and is smooth in $\tau < \tau_{\text{cr}}$ and non-analytic at $\tau= \tau_{\text{cr}}$.
\end{itemize}
Imposing the match of the two sides we conclude that the equality \eqref{eq:tauqsl=taucr} must hold.\par
\medskip
\begin{rmk}[Breakdown of factorization]
The existence of a QSL is related to the breakdown of the factorization at large $N$ in the random matrix ensembles \eqref{eq:G=MMlong}. Indeed, from \eqref{eq:Gmatrixintegralform} we have 
\begin{equation*}
    \frac{\partial \ }{\partial t } \frac{1}{N} \ln \mG_{N, L=\infty } = -i \left\langle \frac{1}{N} \mathrm{Tr} \left( U + U^{-1} \right)   \right\rangle_G .
\end{equation*}
Taking one further derivative and recalling that $t=N \tau$ we get 
\begin{align}
    \left\langle \left[ \frac{1}{N} \mathrm{Tr} \left( U + U^{-1} \right)  \right]^2 \right\rangle_G - \left[  \left\langle \frac{1}{N} \mathrm{Tr} \left( U + U^{-1} \right) \right\rangle_G  \right]^2  & \equiv   \left\langle \left[ \frac{1}{N} \mathrm{Tr} \left( U + U^{-1} \right)  \right]^2 \right\rangle_G ^{\text{conn.}} \notag \\
    & = \frac{1}{N^2} \frac{\partial^2 f}{\partial \tau^2 } \label{eq:conncorrderiv}
\end{align}
with $f(\tau)$ defined as in \eqref{eq:defDFEtaulargeN}, except that now we allow $N$ to be odd. (The minus sign in the definition cancels the $(\sqrt{-1})^2$ from the two derivatives). Note that \eqref{eq:conncorrderiv} is exact and holds for every $N \in \mathbb{N}$.\par
Whenever $f(\tau)$ is finite, \eqref{eq:conncorrderiv} implies the well-known factorization result at large $N$, because the right-hand side vanishes. This is the case for the large $N$ limit with $N=2n \in 2 \mathbb{N}$ and also for $N=2n+1$ in the region $\tau < \tau^{\text{QSL}}$. However, at the value $\tau = \tau^{\text{QSL}} _{2n+1}$, $f (\tau)$ develops a singularity at finite $n$ which prevents the factorization in the limit $n \to \infty $ of \eqref{eq:conncorrderiv}.\par
In summary, the QSL is associated with the breakdown of factorization at large $N$ in the current model. Intriguingly, this effect is reminiscent of wormholes in gravitational theories. Wormholes are connected geometries that prevent the factorization of the Hilbert space, and their kicking in triggers a phase transition. Albeit the phase transition associated with the dominance of the connected wormholes geometries is typically first order, the analogy seems worthwhile of further exploration.
\end{rmk}

\section{Discussion}

The physics of coherent nonequilibrium real-time evolution is very rich yet still vastly unexplored. In this work we have uncovered a novel dynamical quantum phase transition, with many distinctive features. Such DQPT is detected by the real-time Loschmidt echo of the XY model but, differently from previous works, we have analyzed it in the planar limit, guided by QFT and RMT.\par
The DQPT takes place as a function of time in the scaling limit \eqref{eq:deftau} and its characterization has required the study of a random matrix ensemble with complex weight, for which we developed tailored analytical methods. In particular, the DQPT is not merely a Wick rotation of the well-established GWW transition.\par
The DQPT arises in the thermodynamic limit, but we have managed to fully identify the finite chain effects and gain analytic control on the exponentially small discrepancies.\par
While the above points hold true for $N$ large and even, we have also discussed a phase transitions induced by the zeros of the Loschmidt echo when $N$ is odd and have established a relation with the quantum speed limit. Every analytical result has been supported by numerical evaluations, shown in the Figures.\par
Third order DQPTs have never been measured experimentally. We have shown here how not only it is within experimental reach, but it is predicted for one of the most fundamental strongly-correlated systems, as is the isotropic XY Heisenberg chain. Crucially, the spin interactions of this model have already been engineered with quantum simulators \cite{zhang2017observation}.\par
\medskip
The experimental verification of the DQPT discovered here is certainly the most appealing potential follow-up of our work. Another problem open for future study is to establish the universality class of the transition. Based on heuristic arguments, we expect that it is governed by Painlev\'e I equation. This would mean the universality class of 2d quantum gravity \cite{Brezin:1990rb,Gross:1989vs,Douglas:1989ve}, as opposed to that of GWW, characterized by Painlev\'e II \cite{Periwal:1990gf}. It would be interesting to rigorously confirm or refute this conjecture.\par
From a broader perspective, in this paper we have opened a new route to analytically study DQPTs and to uncover their universality classes, based on the power of RMT. As the natural next step, inspired by the non-analyticity result of \cite{Piroli:2018amn}, it is conceivable, although technically challenging, that an exhaustively characterization of DQPT as done presently can be achieved in more general 1d models.

\vspace{6pt}
\subsubsection*{Acknowledgments}
The work has been financially supported by Comunidad de Madrid (grant QUITEMAD-CM, ref. P2018/TCS-4342), by the Ministry of Economic Affairs and Digital Transformation of the Spanish Government through the QUANTUM ENIA project call - QUANTUM SPAIN project, and by the European Union through the Recovery, Transformation and Resilience Plan - NextGenerationEU within the framework of the Digital Spain 2025 Agenda.\\
\indent DPG also acknowledges financial support from the Spanish Ministry of Science and Innovation (“Severo Ochoa Programme for Centres of Excellence in R\&D” CEX2019- 000904-S and grant PID2020-113523GB-I00) as well as from CSIC Quantum Technologies Platform PTI-001.\\
\indent LS acknowledges warm hospitality at the Departamento de An\'alisis Matem\'atico y Matem\'atica Aplicada, Universidad Complutense de Madrid, and financial support from the Funda\c{c}\~{a}o para a Ci\^{e}ncia e a Tecnologia (FCT) through the Project PTDC/MAT-PUR/30234/2017, from the Beijing Natural Science Foundation through the project IS23008 ``Exact results in algebraic geometry from supersymmetric field theory'', and from the Shuimu Scholars program of Tsinghua University.

{\small
\bibliography{spin3,spinC}
}
\end{document}